\begin{document}

\title{Equivalence of pushdown automata via first-order grammars}

\author{Petr Jan\v{c}ar
\\
	{\small	Dept of Computer Science, Faculty of Science, Palack\'y
	University in Olomouc, 
Czechia}
\\
{\small petr.jancar@upol.cz}
}

\date{}

\maketitle

\begin{abstract}
\noindent	
A decidability proof for bisimulation
equivalence of first-order grammars is given. It is an alternative proof for
a result by S\'enizergues (1998, 2005) that subsumes his affirmative solution 
of the famous decidability question for deterministic pushdown
automata.

The presented proof is conceptually simpler, 
and a particular novelty is that it is not given as two
semidecision procedures but it provides an explicit algorithm that 
might be amenable to a complexity analysis.
\end{abstract}

\section{Introduction}\label{sec:intro}

Decision problems for  semantic equivalences 
have been a frequent topic in computer science. 
For pushdown automata (PDA) 
\emph{language equivalence} was quickly shown undecidable,
while 
the decidability in the case of deterministic PDA (DPDA) 
is 
a famous result 
by S\'enizergues~\cite{Senizergues:TCS2001}.
A finer equivalence, called \emph{bisimulation equivalence} or
\emph{bisimilarity}, has emerged as another fundamental behavioural
equivalence~\cite{Milner89}; for deterministic systems it essentially coincides with
language equivalence.
By~\cite{BBK2} we can exemplify the 
first decidability results for infinite-state systems (a subclass of
PDA, in fact),
and 
refer to~\cite{Srba:Roadmap:04} for a survey of results in a relevant area.

One of the most involved results in this area
shows the decidability of bisimilarity
of equational graphs with finite out-degree, which are equivalent
to PDA with alternative-free $\varepsilon$-steps (if an $\varepsilon$-step is enabled, then it has no alternative); 
S\'enizergues~\cite{Seni05} has thus generalized his
decidability result for DPDA.

We recall that the complexity of the DPDA problem  remains 
far from clear, the problem is known to be 
PTIME-hard and to be in TOWER
(i.e., in the first complexity class beyond elementary 
in the terminology of~\cite{DBLP:journals/toct/Schmitz16}); the upper bound
was shown 
by Stirling~\cite{Stir:DPDA:prim}
(and formulated more explicitly in~\cite{DBLP:conf/fossacs/Jancar14}). 
For PDA the bisimulation equivalence problem is known to be 
nonelementary~\cite{BGKM12} (in fact, TOWER-hard), 
even for real-time PDA, i.e. PDA with no $\varepsilon$-steps.
For the above mentioned PDA with alternative-free $\varepsilon$-steps
the 
problem is even not primitive recursive;
its Ackermann-hardness was shown in~\cite{DBLP:conf/fossacs/Jancar14}.

The decidability proofs, both for DPDA and PDA, are involved and hard
to understand.
This paper aims to contribute to 
a clarification of the more general decidability proof, showing an
algorithm deciding 
bisimilarity of PDA with alternative-free $\varepsilon$-steps.

The proof is shown in the framework 
of labelled transition systems generated by \emph{first-order
grammars} (FO-grammars), which seems to be a particularly convenient
formalism; it is called \emph{term context-free grammars}
in~\cite{Cau95}. Here the states 
(or configurations) 
are first-order terms over a
specified finite set of function symbols (or ``nonterminals''); the
transitions are induced by a \emph{first-order grammar}, which is
a finite set
of labelled rules for rewriting the roots of terms.
This framework 
is equivalent
to the framework of~\cite{Seni05}; cf.,
e.g.,~\cite{CourcelleHandbook,Cau95} and the
references therein, or also~\cite{JancarLICS12}
for a concrete transformation of PDA 
to FO-grammars.
The proof here
is in principle based on the high-level ideas from the proof
in~\cite{Seni05} but with various simplifications and new
modifications.
The presented proof has resulted by a thorough reworking of the
conference paper~\cite{DBLP:conf/icalp/Jancar14}, aiming to get an
algorithm that might be amenable to a complexity analysis.

\emph{Proof overview.} We give a flavour of the process that is
formally realized in the paper.
It is standard to characterize bisimulation equivalence (also called
bisimilarity) in terms of a
turn-based game between Attacker and Defender, say.
If two PDA-configurations, modelled by first-order terms $E,F$ in our
framework, are non-bisimilar, then
Attacker can force his win within $k$ rounds of the game, for some
number $k\in\Nat$; in this case $k{-}1$ for the least such $k$ can be
viewed as the \emph{equivalence-level} $\eqlevel(E,F)$ of terms $E,F$: we
write $E\sim_{k-1}F$ and $E\not\sim_k F$.
If $E,F$ are bisimilar, i.e. $E\sim F$, then Defender has a winning strategy and we
put $\eqlevel(E,F)=\omega$.
A natural idea is to search for a computable function $f$ attaching 
a number $f(\calG,E,F)\in\Nat$ to terms $E,F$
and a grammar $\calG$ 
so that it is guaranteed that $\eqlevel(E,F)\leq
f(\calG,E,F)$ or $\eqlevel(E,F)=\omega$; this immediately yields an
algorithm that computes  $\eqlevel(E,F)$ (concluding that
$\eqlevel(E,F)=\omega$
when finding that $\eqlevel(E,F)>f(\calG,E,F)$).

We will show such a computable function $f$ by analysing optimal plays
from $E_0\not\sim F_0$; such an optimal play gives rise 
to a sequence $(E_0,F_0)$, $(E_1,F_1)$, $\dots$, $(E_k,F_k)$
of pairs of terms where $\eqlevel(E_i,F_i)=\eqlevel(E_{i-1},F_{i-1})-1$
for $i=1,2,\dots,k$, and  $\eqlevel(E_k,F_k)=0$ (hence
$\eqlevel(E_0,F_0)=k$).
This sequence is then suitably modified to yield a certain sequence
\begin{equation}\label{eq:modseq}
(E'_0,F'_0), (E'_1,F'_1), \dots, (E'_k,F'_k)
\end{equation}
such that 
$(E'_0,F'_0)=(E_0,F_0)$ and $\eqlevel(E'_i,F'_i)=\eqlevel(E_i,F_i)$
for all $i=1,2,\dots,k$; here we use simple congruence properties 
(if $E'$ arises from $E$ by replacing a subterm $H$ with $H'$ such
that $H\sim_k H'$, then $E\sim_k E'$). Doing this modification
carefully, adhering to a sort of ``balancing policy''
(inspired by one crucial ingredient
in~\cite{Senizergues:TCS2001,Seni05},
used also in~\cite{Stirling:TCS2001})
we derive that if $k$ is ``large'', 
then the sequence~(\ref{eq:modseq}) contains a ``long'' subsequence
\begin{equation}\label{eq:extrnsgseq}
(\overline{E}_1\sigma,\overline{F}_1\sigma),(\overline{E}_2\sigma,
\overline{F}_2\sigma),\dots,(\overline{E}_z\sigma,\overline{F}_z\sigma),
\end{equation}
called an
$(n,s,g)$-sequence, where the variables in all ``tops''
$\overline{E}_j$, 
$\overline{F}_j$ are from the set $\{x_1,\dots,x_n\}$, $\sigma$ is the
common ``tail'' substitution (maybe with ``large'' terms $x_i\sigma$), and the size-growth of the
tops is bounded:  $\pressize(\overline{E}_j,\overline{F}_j)\leq s+g\cdot(j{-}1)$ for
$j=1,2,\dots,z$. The numbers $n,s,g$ are elementary in the size of the
grammar $\calG$.
Then another fact is used (whose analogues in
 different frameworks
could be traced back to~\cite{Senizergues:TCS2001,Seni05} and other related
works):
if $\eqlevel(\overline{E}_1, \overline{F}_1)= e<\ell=
\eqlevel(\overline{E}_1\sigma, \overline{F}_1\sigma)$,
then 
there is $i\in\{1,2,\dots,n\}$ and a term
$H\neq x_i$ reachable from $\overline{E}_1$ or 
$\overline{F}_1$ within $e$ moves (i.e. root-rewriting steps) such
that $x_i\sigma\sim_{\ell-e} H\sigma$.
This entails that for $j=e{+}2,e{+}3,\dots,z$ 
the tops
$(\overline{E}_j,\overline{F}_j)$ in~(\ref{eq:extrnsgseq})
can be replaced with
$(\overline{E}_j[x_i/H'],\overline{F}_j[x_i/H'])$,
where $H'$ is the regular term 
$H[x_i/H][x_i/H][x_i/H]\cdots$, 
without changing the equivalence-level;
hence
$\eqlevel(\overline{E}_j\sigma,\overline{F}_j\sigma)=
\eqlevel(\overline{E}_j[x_i/H']\sigma,\overline{F}_j[x_i/H']\sigma)$.
Though $H'$ might be an infinite regular term, its natural graph
presentation is not larger than the presentation of $H$. Moreover,
$x_i$ does not occur in $H'$, and thus the term $x_i\sigma$ ceases to
play any role in the pairs 
$(\overline{E}_j[x_i/H']\sigma,\overline{F}_j[x_i/H']\sigma)$
($j=e{+}2,e{+}3,\dots,z$). 

By continuing this reasoning inductively (``removing'' one $x_i\sigma$
in each of at most $n$ phases), we note that the length of
$(n,s,g)$-sequences~(\ref{eq:extrnsgseq}) is bounded by a (maybe
large) constant determined by the grammar $\calG$. 
By a careful analysis we then show that such a constant is, in fact,
computable when a grammar is given.

\emph{Further remarks on related research.}
Further work is needed to fully understand the bisimulation problems on
PDA and their subclasses, also regarding
their computational complexity.
E.g., even the case 
of BPA processes, generated by real-time PDA with
a single control-state, is not quite clear.
Here the bisimilarity
problem is EXPTIME-hard~\cite{Kiefer13} and in 2-EXPTIME~\cite{DBLP:conf/mfcs/BurkartCS95} 
(proven explicitly in~\cite{Jan12b}); for the subclass of normed BPA
the problem is polynomial~\cite{HiJeMo96}
(see~\cite{CzLa10} for the best published upper bound).
Another issue is the precise decidability border. This was also
studied in~\cite{DBLP:journals/jacm/JancarS08}; 
allowing that $\varepsilon$-steps can have alternatives 
(though they are restricted to be stack-popping)
leads to undecidability of bisimilarity. 
This aspect has been also refined, for 
 branching bisimilarity~\cite{DBLP:journals/corr/YinFHHT14}.
For second-order PDA the undecidability is established without 
$\varepsilon$-steps~\cite{DBLP:conf/fsttcs/BroadbentG12}.
We can refer to the survey
papers~\cite{DBLP:conf/lics/Ong15,DBLP:journals/siglog/Walukiewicz16}
for the work on higher-order PDA, and in particular 
mention that the decidability of
equivalence of \emph{deterministic} higher-order PDA remains open; 
 some progress in this direction was made by Stirling 
 in~\cite{DBLP:conf/concur/Stirling06}.

Finally we remark that recently (while this paper was under review)
the author cooperated with Sylvain Schmitz on developing a concrete
version of the algorithm suggested here, and its complexity analysis
has revealed an Ackermannian upper bound; with the lower bound 
from~\cite{DBLP:conf/fossacs/Jancar14} this yields the Ackermann-completeness of the studied equivalence
problem~\cite{DBLP:conf/lics/JancarS19}.

\emph{Organization of the paper.}
After the preliminaries in Section~\ref{sec:prelim} we state the main
theorem in Section~\ref{sec:maintheorem}. The theorem is proven in
Section~\ref{sec:finalproof}, using the notions and results discussed
in Sections~\ref{sec:boundnsg},~\ref{sec:balance},
and~\ref{sec:analysis}; each of these sections starts with an informal
summary.

\section{Basic Notions and Facts}\label{sec:prelim}

In this section we define basic notions and observe their simple
properties.
Some standard definitions are restricted 
when we do not need full generality.

By $\Nat$ and $\Natpos$ we denote the 
sets
of nonnegative integers and of positive integers, respectively.
By $[i,j]$, for $i,j\in\Nat$, we denote the set $\{i,i{+}1,\dots,j\}$.
For a set $\calA$, by $\calA^*$ we denote the set of finite
sequences of elements of $\calA$, which are also called \emph{words}
(over $\calA$).
By $|w|$ we denote the
\emph{length} of 
$w\in \calA^*$, and 
by $\varepsilon$ the \emph{empty sequence};
hence $|\varepsilon|=0$. We put
$\calA^+=\calA^*\smallsetminus\{\varepsilon\}$.

\subparagraph*{Labelled transition systems.}
A \emph{labelled transition system}, an \emph{LTS} for short,
is a tuple 
$\calL=(\calS,\act,(\gt{a})_{a\in{\act}})$
where $\calS$ is a finite or countable
set of \emph{states},
$\act$ is a finite or countable
set of \emph{actions} 
and $\gt{a}\mathop{\subseteq} \calS\times\calS$ is a set of
\emph{$a$-transitions} (for each $a\in\act$). 
We say that $\calL$ is 
a \emph{deterministic LTS} if for each pair 
$s\in\calS$, $a\in\act$ there is 
at most one $s'$ such that $s\gt{a}s'$ (which stands for
$(s,s')\mathop{\in}\gt{a}$).
By $s\gt{w}s'$, where 
$w=a_1a_2\dots a_n\in
\act^*$,
we denote 
that there is a \emph{path}
$s=s_0\gt{a_1}s_1\gt{a_2}s_2\cdots\gt{a_n}s_n=s'$;
the length of such a path is $n$, which is zero for the  
(trivial) path $s\gt{\varepsilon} s$.
If $s\gt{w}s'$, then  
$s'$ is \emph{reachable from} $s$. By $s\gt{w}$ we denote that $w$ is
\emph{enabled in} $s$, or $w$ is \emph{performable from} $s$,
i.e., $s\gt{w}s'$ for some $s'$.
If $\calL$ is deterministic, then  the expressions 
$s\gt{w}s'$ and $s\gt{w}$ also denote a unique path.
 
\subparagraph*{Bisimilarity, eq-levels.}
Given $\calL=(\calS,\act,(\gt{a})_{a\in\act})$, 
a \emph{set} $\calD\subseteq \calS\times\calS$  
\emph{covers} 
$(s,t)\in  \calS\times\calS$ if 
for any $s\gt{a}s'$ there is $t\gt{a}t'$ such that 
$(s',t')\in \calD$, and for any   $t\gt{a}t'$ there is $s\gt{a}s'$
such that 
$(s',t')\in \calD$.
For $\calD, \calD'\subseteq \calS\times\calS$
we say that $\calD'$ \emph{covers} $\calD$ if $\calD'$
covers each $(s,t)\in \calD$.
A set $\calD\subseteq \calS\times\calS$
is a \emph{bisimulation} if $\calD$ covers $\calD$.
States $s,t\in\calS$ are \emph{bisimilar},
written $s\sim t$, if there is a bisimulation
$\calD$ containing $(s,t)$. 
A standard 
fact is that 
$\sim\,\subseteq \calS\times\calS$ is an equivalence relation,
and it is the largest
bisimulation, namely the union of all bisimulations.

We also
put $\sim_0=\calS\times\calS$, and define 
$\sim_{k+1}\subseteq\calS\times\calS$
(for $k\in\Nat$)
as the set of pairs 
covered by $\sim_{k}$. 
It is obvious that $\sim_k$ are equivalence relations, and
that $\sim_0\,\supseteq\,
\sim_{1}\,\supseteq\,\sim_2\,\supseteq\,\cdots\cdots\supseteq \sim$.
For the (first limit) ordinal $\omega$ we put 
$s\sim_\omega t$ if $s\sim_k t$ for all $k\in\Nat$; hence 
$\sim_\omega=\bigcap_{k\in\Nat}\sim_k$.
We will only consider  \emph{image-finite} LTSs, where 
 the set $\{s'\mid s\gt{a}s'\}$ is finite for each 
 pair $s\in\calS$, $a\in\act$.
In this case 
 $\bigcap_{k\in\Nat}\sim_k$ is a bisimulation 
 (for each $(s,t)\in \bigcap_{k\in\Nat}\sim_k$ and $s\gt{a}s'$,
 in the finite set $\{t'\mid t\gt{a}t'\}$
 there must be one  $t'$ such that
$s'\sim_k t'$ for infinitely many $k$, which entails
$(s',t')\in \bigcap_{k\in\Nat}\sim_k$),
 and thus $\sim\,=\bigcap_{k\in\Nat}\sim_k\,=\,\sim_\omega$.

To each pair of states $s,t$ 
we attach their \emph{equivalence level} (eq-level): 
\begin{center}
$\eqlevel(s,t)=\max\,\{k\in\Nat\cup\{\omega\}\mid s\sim_k t\}$.
\end{center}
Hence $\eqlevel(s,t)=0$ iff $\{a\in\act\mid s\gt{a}\}\neq 
\{a\in\act\mid t\gt{a}\}$ (i.e., $s$ and $t$ enable different sets of
actions).
The next proposition captures a few additional simple facts;
we should add that we handle $\omega$ as an infinite amount,
stipulating $\omega>n$ and $\omega+n=\omega-n=\omega$ for all $n\in\Nat$.

\begin{proposition}\label{prop:elkeep}
\begin{enumerate}
	\item
		If $\eqlevel(t,t')>\eqlevel(s,t)$, then
	$\eqlevel(s,t)=\eqlevel(s,t')$.
\item
If $\omega>\eqlevel(s,t)>0$, then there is either a transition 
$s\gt{a}s'$ such that  for all transitions $t\gt{a}t'$ we have
$\eqlevel(s',t')\leq \eqlevel(s,t)-1$, or 
 a transition 
$t\gt{a}t'$ such that  for all transitions $s\gt{a}s'$ we have
$\eqlevel(s',t')\leq \eqlevel(s,t)-1$.
\item
	If $|w|\leq \eqlevel(s,t)$ and $s\gt{w}s'$, then
	$t\gt{w}t'$ for $t'$ such that $\eqlevel(s',t')\geq
	\eqlevel(s,t)-|w|$.
\end{enumerate}
\end{proposition}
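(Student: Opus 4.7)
The plan is to prove all three parts directly from the inductive definition of $\sim_k$, exploiting that each $\sim_k$ is an equivalence relation and that $\sim_k$ covers $\sim_{k-1}$ in the sense of the covering definition. The arithmetic conventions on $\omega$ make the arguments go through uniformly for finite and infinite eq-levels, but a couple of cases (e.g.\ $\eqlevel(s,t)=\omega$) will have to be acknowledged separately.

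For (1), set $k=\eqlevel(s,t)$. The hypothesis $\eqlevel(t,t')>k$ forces $k<\omega$ and $t\sim_{k+1}t'$, in particular $t\sim_k t'$. Combined with $s\sim_k t$ and transitivity of $\sim_k$ we get $s\sim_k t'$, so $\eqlevel(s,t')\geq k$. For the reverse inequality, if we had $\eqlevel(s,t')\geq k+1$, then $s\sim_{k+1}t'$ together with $t'\sim_{k+1}t$ would give $s\sim_{k+1}t$, contradicting $\eqlevel(s,t)=k$.

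For (2), write again $k=\eqlevel(s,t)$; the assumption $0<k<\omega$ means $s\sim_k t$ but $s\not\sim_{k+1}t$. Unfolding the definition of $\sim_{k+1}$, the pair $(s,t)$ is not covered by $\sim_k$, so one of the two symmetric failures occurs: either some $s\gt{a}s'$ has no matching $t\gt{a}t'$ with $s'\sim_k t'$, or vice versa. In the first case every $t\gt{a}t'$ satisfies $s'\not\sim_k t'$, i.e.\ $\eqlevel(s',t')\leq k-1$, which is exactly the stated conclusion; the second case is symmetric.

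For (3), induction on $|w|$. The base $|w|=0$ is immediate with $t'=t$. For $|w|=n+1$ write $w=aw'$; then $\eqlevel(s,t)\geq n+1\geq 1$, so $s\sim_{\eqlevel(s,t)}t$ covers the transition $s\gt{a}s_1$ (where $s\gt{a}s_1\gt{w'}s'$), giving $t\gt{a}t_1$ with $\eqlevel(s_1,t_1)\geq \eqlevel(s,t)-1\geq n=|w'|$. Applying the induction hypothesis to $(s_1,t_1)$ and $w'$ yields $t_1\gt{w'}t'$ with $\eqlevel(s',t')\geq \eqlevel(s_1,t_1)-|w'|\geq \eqlevel(s,t)-|w|$, and the witnessing path is $t\gt{a}t_1\gt{w'}t'$. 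The case $\eqlevel(s,t)=\omega$ needs no special treatment under the conventions $\omega-1=\omega$ and $\omega\geq n+1$.

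The only delicate point is keeping the index bookkeeping straight at the limit value $\omega$ and making sure the covering step in (3) uses the correct level ($\eqlevel(s,t)$, not $|w|$); beyond that the argument is purely definitional and should be very short.
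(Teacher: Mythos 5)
Your proof is correct and follows the same definitional-unfolding approach the paper uses: part (1) is exactly the paper's one-line argument spelled out (transitivity of $\sim_k$ together with the observation that $s\sim_{k+1}t'$ would force $s\sim_{k+1}t$), and parts (2) and (3) are the routine unwindings that the paper dismisses as trivially following from the definition of $\sim_k$. The only tiny imprecision is the phrase ``$s\sim_{\eqlevel(s,t)}t$ covers the transition'' in part (3) --- what you mean is that $(s,t)$ is covered by $\sim_{\eqlevel(s,t)-1}$, and for $\eqlevel(s,t)=\omega$ this rests on $\sim_\omega$ being a bisimulation (which the paper has already established via image-finiteness) rather than literally on the definition of $\sim_{k+1}$ --- but the substance is fine.
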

\begin{proof}
1.	If  $s\sim_k t$, $s\not\sim_{k+1} t$, and $t\sim_{k+1}t'$, then
$s\sim_k t'$ and  $s\not\sim_{k+1} t'$.

The points 2 and 3 trivially follow from the definition of $\sim_k$
(for $k\in\Nat\cup\{\omega\}$).
\end{proof}

\subparagraph{First-order terms, regular terms, finite graph presentations.}
We will consider LTSs 
in which the states
are
first-order regular terms. 

The terms are built from \emph{variables}
taken from a fixed countable set
$$\var=\{x_1,x_2,x_3,\dots\}$$ and from 
\emph{function symbols}, also called \emph{(ranked) nonterminals},
from some specified finite set $\calN$; each $A\in\calN$ has 
$\arity(A)\in\Nat$.
We reserve symbols $A,B,C,D$ to range over nonterminals, and 
$E,F,G,H,T,U,V,W$
to range over 
terms. 
An example of a finite term is $E_1=A(D(x_5,C(x_2,B)),x_5,B)$,
where the arities of nonterminals $A,B,C,D$ are $3,0,2,2$,
respectively.
Its syntactic tree is depicted on the left of Fig.\ref{fig:basicterm}.

\begin{figure}[t]
\centering
\includegraphics[scale=0.4]{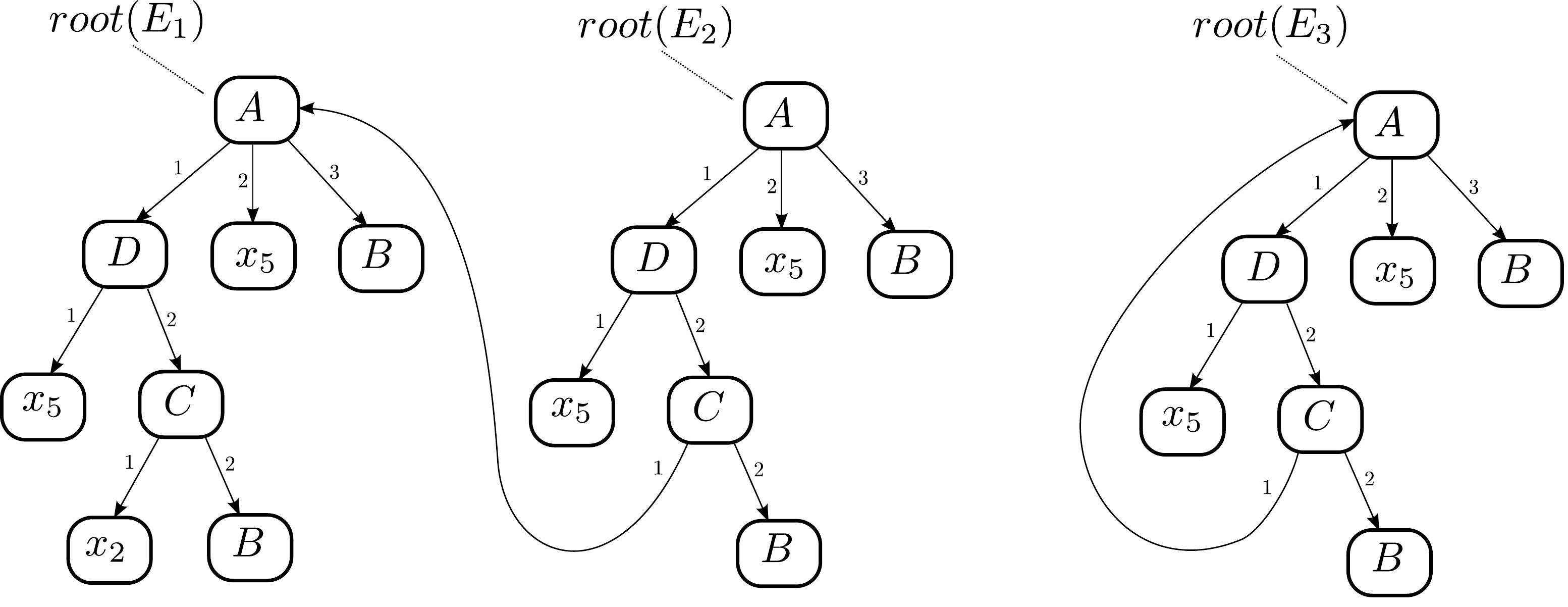}
\caption{Finite terms $E_1$, $E_2$, and 
a graph presenting a regular infinite term $E_3$}\label{fig:basicterm}
\end{figure}

We identify terms with their syntactic trees.
Thus a \emph{term over}
$\calN$ is (viewed as) a rooted, ordered, finite or infinite tree where each node
has a label from $\calN\cup\var$; if the label of a node is $x\in\var$, 
then the node has no successors, and if the label is $A\in\calN$, then 
it has $m$ (immediate) successor-nodes where $m=\arity(A)$.
A subtree of a term 
$E$ is also called a
\emph{subterm} of $E$. We make no difference between isomorphic
(sub)trees, and thus a subterm can have more (maybe infinitely
many) \emph{occurrences} in $E$.  Each \emph{subterm-occurrence} has
its (nesting) \emph{depth in} $E$, which is its (naturally defined) 
distance from the root of $E$.
E.g., $C(x_2,B)$ is a depth-2 subterm of $E_1$; $x_5$ is a subterm
with a depth-1 and a depth-2 occurrences.

We also use the standard notation for terms: 
we write $E=x_i$ or $E=A(G_1,\dots, G_m)$ with the obvious meaning; in
the latter case $\termroot(E)=A\in\calN$, $m=\arity(A)$, and
$G_1,\dots,G_m$ are the ordered  depth-$1$ occurrences of  subterms of
$E$, which are also called the \emph{root-successors} in $E$.

A \emph{term} is \emph{finite} if the respective tree is finite. 
A (possibly infinite) \emph{term} is \emph{regular}
if it has only finitely many subterms (though the subterms may be infinite and
may have infinitely many occurrences). 
We note that any regular term has at least one \emph{graph
presentation}, i.e. a finite directed graph with a designated root,
where each node has a
label from $\calN\cup\var$; if the label of a node is $x\in\var$,
then the node has no outgoing arcs, if the label is  $A\in\calN$,
then it has $m$
 ordered outgoing arcs where $m=\arity(A)$.
 We can see an example of such a graph presenting a term $E_3$ on the right in
Fig.~\ref{fig:basicterm}. 
The standard tree-unfolding of the
graph is the respective term, which is infinite if there are cycles in
the graph. There is a bijection between 
the nodes in the \emph{least} graph presentation of $E$ 
and (the roots of) the subterms of $E$.

\subparagraph*{Sizes, heights, and variables of terms.}
By $\trees_\calN$ we denote the set of all regular terms over $\calN$
(and $\var$); we do not consider non-regular terms.
By a ``term'' we mean a general regular term unless the context makes
clear that the term
is finite.

By $\pressize(E)$ we mean the number of nodes
in the least graph presentation of $E$. E.g., in
Fig.\ref{fig:basicterm} $\pressize(E_1)=6$ ($E_1$ has six subterms)
and  $\pressize(E_3)=5$. 
By $\pressize(\{E_1,E_2,\dots,E_n\})$ 
we mean the number of nodes
in the least graph presentation 
in which
a distinguished node $\textsc{r}_i$ corresponds to the (root
of the) term $E_i$, for each $i\in[1,n]$.
(Since $E_1,E_2,\dots,E_n$ can share some subterms, $\pressize(\{E_1,E_2,\dots,E_n\})$
can be smaller than
$\sum_{i\in[1,n]}\pressize(E_i)$.)
We usually write $\pressize(E,F)$ instead of $\pressize(\{E,F\})$. 
E.g.,  $\pressize(E_1,E_2)=9$ in Fig.~\ref{fig:basicterm}.

For a finite term $E$ we define $\height(E)$ as the maximal depth of a
subterm; e.g., $\height(E_1)=3$ in Fig.\ref{fig:basicterm}.

We put $\varin(E)=\{x\in\var\mid x$ occurs in $E\}$ and 
 $\varin(E,F)=\{x\in\var\mid x$ occurs in $E$ or $F\}$. 
 E.g., $\varin(E_1,E_2)=\{x_2,x_5\}$ in  Fig.\ref{fig:basicterm}.

\subparagraph*{Substitutions, associative composition, iterated
substitutions.}

A \emph{substitution} $\sigma$ is a mapping
$\sigma:\var\rightarrow\trees_{\calN}$ whose 
\emph{support}
\begin{center}
$\support(\sigma)=\{x\in\var\mid \sigma(x)\neq x\}$ 
\end{center}
is \emph{finite};
we reserve the symbol $\sigma$ for substitutions.
By \emph{applying a substitution} $\sigma$ {to 
a term} $E$ we get the term $E\sigma$ 
that arises from $E$ by replacing each occurrence of $x\in\var$ with
$\sigma(x)$; given graph presentations, 
in the graph of $E$ we just redirect each arc
leading to a node labelled with $x$ towards the root of $\sigma(x)$ (which includes the
special ``root-designating arc'' when $E=x$). Hence $E=x$ implies
$E\sigma=x\sigma=\sigma(x)$.
The natural \emph{composition of substitutions}, where
$\sigma=\sigma_1\sigma_2$ is defined by
$x\sigma=(x\sigma_1)\sigma_2$,
can be easily verified to be
associative. We thus write $E\sigma_1\sigma_2$ instead of 
$(E\sigma_1)\sigma_2$ or  $E(\sigma_1\sigma_2)$. 
For $i\in\Nat$ we define $\sigma^i$ inductively: $\sigma^0$ is the
empty-support substitution, and $\sigma^{i+1}=\sigma\sigma^i$. 

By $[x_{i_1}/H_1, x_{i_2}/H_2,\dots, x_{i_k}/H_k]$, 
where $i_j\neq i_{j'}$ for $j\neq j'$,
we denote the
substitution $\sigma$ such that $x_{i_j}\sigma=H_j$ for all $j\in[1,k]$ 
and $x\sigma=x$ for all
$x\in\var\smallsetminus\{x_{i_1},x_{i_2},\dots,x_{i_k}\}$.
We will use
$\sigma^\omega=\sigma\sigma\sigma\cdots$ just for the special case
$\sigma=[x_i/H]$, where $\sigma^\omega$
is clearly well-defined; a graph presentation
of the term $x_i\sigma^\omega$ arises from a graph presentation of $H$
by redirecting each arc leading to $x_i$ (if any exists)
towards the root; we have $x_i\sigma^\omega=H$ if
$x_i\not\in\varin(H)$, or if $H=x_i$. 
In Fig.\ref{fig:basicterm}, for $\sigma=[x_2/E_1]$ we have
$E_2=E_1\sigma$ and $E_3=E_1\sigma^\omega$. 

By $\sigma\remxi$ we denote the substitution arising from $\sigma$ by
removing $x_i$ from its support (if it is there): hence $x_i\sigma\remxi=x_i$ and
 $x\sigma\remxi=x\sigma$ for all $x\in\var\smallsetminus\{x_i\}$.

 We note a trivial fact (for later use):

\begin{proposition}\label{prop:Hremxi}
If $H\neq x_i$, then for the term 
$H'=H[x_i/H][x_i/H][x_i/H]\cdots$
we have 
$x_i\not\in\varin(H')$, and thus
 $H'\sigma=H'\sigma\remxi$ for any $\sigma$.
 We also have $\pressize(H')\leq \pressize(H)$.
\end{proposition}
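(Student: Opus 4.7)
The plan is to read all three claims directly off the graph description of $H' = x_i\sigma^\omega$ (with $\sigma = [x_i/H]$) that was already given just before the proposition: take any graph presentation of $H$ and redirect every arc leading to an $x_i$-labelled node towards the root of $H$.

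First I would invoke the assumption $H \neq x_i$ to note that the root of every graph presentation of $H$ carries a nonterminal label, not $x_i$. After the redirection, the (unique, in the least presentation) $x_i$-labelled node therefore has no incoming arcs; it becomes unreachable and disappears from the least presentation $G'$ of $H'$. In particular $G'$ contains no node labelled by $x_i$, so $x_i \notin \varin(H')$.

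From this the second claim is immediate: applying a substitution to a term only affects variables that actually occur in the term, and $\sigma$ and $\sigma\remxi$ agree on $\var \smallsetminus \{x_i\}$, hence $H'\sigma = H'\sigma\remxi$ for every $\sigma$. For the size bound, I would observe that $G'$ is obtained from the least presentation of $H$ by removing at most one node (the $x_i$-labelled one, when it exists) and redirecting arcs, so $G'$ is a graph presentation of $H'$ of size at most $\pressize(H)$; thus $\pressize(H') \leq \pressize(H)$. The degenerate case $x_i \notin \varin(H)$ collapses to $H' = H$ and the bound holds with equality.

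No step presents any real difficulty; the only thing worth double-checking is that the graph construction really does present $H[x_i/H][x_i/H]\cdots$, but this is precisely the convention for $\sigma^\omega$ set up in the preceding paragraph of the paper, so the proposition reduces to unpacking notation.
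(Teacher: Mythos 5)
Your argument is correct and is essentially the reasoning the paper leaves implicit — the proposition is stated without proof as a "trivial fact," and the intended justification is exactly what you do: unfold the graph construction for $\sigma^\omega$ given just before, observe that after redirection the $x_i$-node (if any) has no incoming arcs and is distinct from the root, hence unreachable and removable, which gives both $x_i\notin\varin(H')$ and the size bound. One small slip: you say the root of any presentation of $H$ "carries a nonterminal label, not $x_i$," but $H$ could equally be a variable $x_j$ with $j\neq i$; the correct statement is just that the root is not labelled $x_i$, which is all that the unreachability argument needs. You effectively cover this with your remark about the degenerate case $x_i\notin\varin(H)$, so the proof stands.
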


\subparagraph*{First-order grammars.}
A \emph{first-order grammar}, 
or just a \emph{grammar} for short, is a tuple
$\calG=(\calN,\act,\calR)$ where $\calN$
is a finite nonempty set of 
ranked \emph{nonterminals}, viewed as function symbols with arities, 
$\act$
is a finite nonempty set of \emph{actions} (or ``letters''), 
and $\calR$
is
a finite nonempty set of 
\emph{rules} of the form
\begin{equation}\label{eq:rewrule}
A(x_1,x_2,\dots, x_m)\gt{a} E
\end{equation}
where $A\in \calN$, $\arity(A)=m$, 
$a\in\act$,
and $E$ is a
finite
term over $\calN$ 
in which each occurring variable 
is
from the set $\{x_1,x_2,\dots,x_m\}$; we can have $E=x_i$ for some
$i\in[1,m]$.

\subparagraph{LTSs generated by rules, and by actions, of grammars.}
Given $\calG=(\calN,\act,\calR)$, 
by $\calL^{\ltsrul}_{\calG}$ we denote the (\emph{rule-based}) LTS
$\calL^{\ltsrul}_{\calG}=(\trees_{\calN},\calR,(\gt{r})_{r\in\calR})$
where each rule $r$ of the form
$A(x_1,x_2,\dots, x_m)\gt{a} E$ 
induces
transitions $A(x_1,\dots, x_m)\sigma\gt{r}E\sigma$
for all substitutions $\sigma$.
The transition induced by  $\sigma$
with $\support(\sigma)=\emptyset$ is
$A(x_1,\dots, x_m)\gt{r}E$.

Using terms from Fig.\ref{fig:basicterm} as examples, if a rule $r_1$ 
is $A(x_1,x_2,x_3)\gt{b}x_2$, then we have $E_3\gt{r_1}x_5$
(since $E_3$ can be written as $A(x_1,x_2,x_3)\sigma$ where
$x_2\sigma=x_5$); the action $b$ only plays a role 
 in the LTS $\calL^{\ltsact}_{\calG}$ defined below
 (where we have $E_3\gt{b}x_5$).
For a rule $r_2: A(x_1,x_2,x_3)\gt{a}C(x_2,D(x_2,x_1))$ we deduce
$E_1\gt{r_2}C(x_5,D(x_5,D(x_5,C(x_2,B))))$; we note that the
third root-successor in $E_1$ thus ``disappears'' since
$x_3\not\in\varin(C(x_2,D(x_2,x_1)))$.

By definition, the LTS $\calL^{\ltsrul}_{\calG}$ is deterministic
(for each $F$ and $r$ there is at most one $H$ such that $F\gt{r}H$).
We note that \emph{variables} are \emph{dead} (have no outgoing
transitions).
We also note  
that $F\gt{w}H$ implies $\varin(H)\subseteq\varin(F)$
(each variable occurring in $H$ also occurs in $F$)
but not $\varin(F)\subseteq\varin(H)$ in general.

\medskip

\emph{Remark.}
Since the rhs (right-hand sides) $E$ in the rules~(\ref{eq:rewrule}) 
are finite, all
terms reachable from a finite term are finite.
The ``finite-rhs version'' with general regular terms in LTSs has been
chosen for technical convenience. 
This is not crucial, since the equivalence problem for 
the ``regular-rhs version''
can be easily reduced to the problem for our
finite-rhs version.

\medskip

The deterministic rule-based LTS $\calL^{\ltsrul}_{\calG}$ is helpful
technically,
but we are primarily interested in the (image-finite nondeterministic)
\emph{action-based} LTS 
$\calL^{\ltsact}_{\calG}=(\trees_{\calN},\act,(\gt{a})_{a\in\act})$
where each rule $A(x_1,\dots, x_m)\gt{a}E$
induces  the transitions
$A(x_1,\dots, x_m)\sigma\gt{a}E\sigma$ for all substitutions $\sigma$.
(Hence the rules $r_1$ and $r_2$ in the 
above examples induce $E_3\gt{b}x_5$ and  
$E_1\gt{a}C(x_5,D(x_5,D(x_5,C(x_2,B))))$.)

Fig.\ref{fig:pathinLTSG} sketches a path in some LTS 
$\calL^{\ltsrul}_\calG$ 
where we have, e.g., $r_1:A(x_1,x_2,x_3)\gt{a_1}C(D(x_2,x_3),x_3)$
and $r_2:C(x_1,x_2)\gt{a_2}x_2$ for some actions $a_1,a_2$ (which
would replace $r_1,r_2$ in the  LTS 
$\calL^{\ltsact}_\calG$). In the rectangle just a part of a regular-term 
presentation is sketched. Hence the initial root-node $A$ might be
accessible from later roots due to its possible undepicted 
ingoing arcs. On the other hand, the
root-node
$D$ after the steps $r_1r_2r_3$ is not accessible (and can be omitted)
in the presentation of the final term.

\begin{figure}[t]
\centering
\includegraphics[scale=0.53]{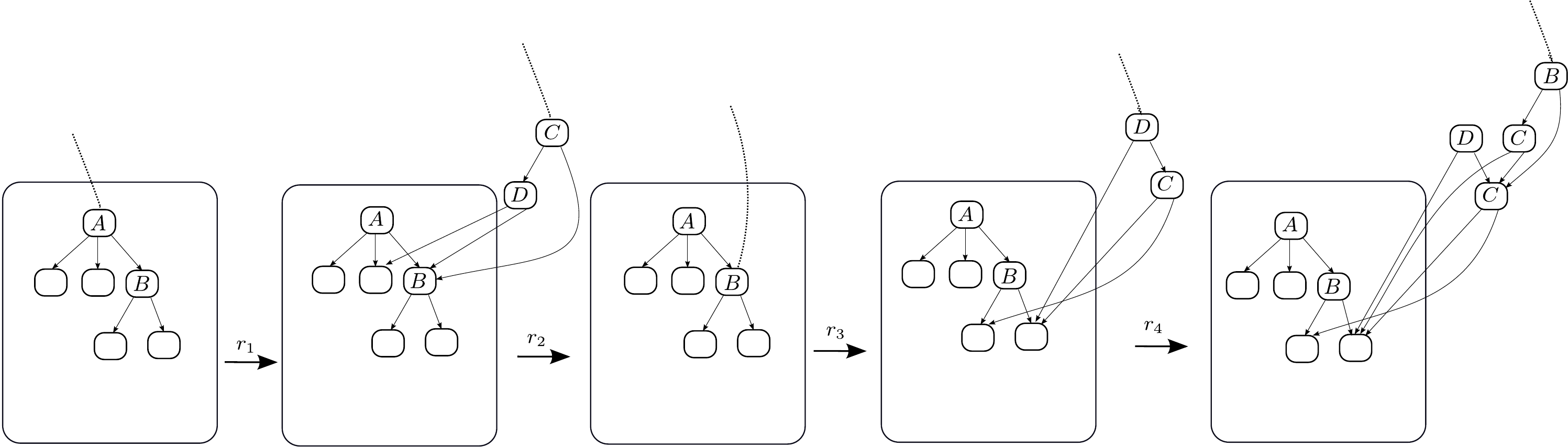}
\caption{Path $A(T_1,T_2,B(T_3,T_4))\gt{r_1}C(D(T_2,B(T_3,T_4)),B(T_3,T_4))
	\gt{r_2}B(T_3,T_4)\gt{r_3r_4}$ in $\calL^{\ltsrul}_\calG$ 
}\label{fig:pathinLTSG}
\end{figure}

\subparagraph{Eq-levels of pairs of terms.}
Given a grammar $\calG=(\calN,\act,\calR)$,
by $\eqlevel(E,F)$ we refer to the equivalence level of (regular) terms $E,F$
in $\calL^{\ltsact}_{\calG}$, with the following adjustment:
though variables $x_i$ are handled as dead also in
$\calL^{\ltsact}_{\calG}$, we stipulate $\eqlevel(x_i,H)=0$ if
$H\neq x_i$ (while $\eqlevel(x_i,x_i)=\omega$);
this would be achieved automatically if 
we enriched $\calL^{\ltsact}_{\calG}$ with transitions
$x\gt{a_x}x$ where $a_x$ is a special action added
to each variable $x\in\var$.
This adjustment gives us the point $1$ in the next
proposition on compositionality.

We put $\sigma\sim_k\sigma'$ if $x\sigma\sim_k x\sigma'$ for all
$x\in\var$, and 
define
\begin{center}
$\eqlevel(\sigma,\sigma')=\max\big\{k\in\Nat\cup\{\omega\}\mid
\sigma\sim_k\sigma'\big\}$.
\end{center}
\begin{proposition}\label{prop:congruence}
	For all $\sigma,\sigma',\sigma'', E,F$, and
	$k\in\Nat\cup\{\omega\}$ the following conditions hold:
	
\begin{enumerate}
	\item
If $\sigma'\sim_k \sigma''$, then $\sigma'\sigma\sim_k
		\sigma''\sigma$. 
		Hence	
		$\eqlevel(\sigma',\sigma'')\leq\eqlevel(\sigma'\sigma,\sigma''\sigma)$.
		
In particular, $\eqlevel(E,F)\leq\eqlevel(E\sigma,F\sigma)$.
	\item
		If $\sigma'\sim_k\sigma''$, then 
	$\sigma\sigma'\sim_k	\sigma\sigma''$.
		Hence
		$\eqlevel(\sigma',\sigma'')\leq\eqlevel(\sigma\sigma',\sigma\sigma'')$.

		In particular, $\eqlevel(\sigma',\sigma'')\leq
\eqlevel(E\sigma',E\sigma'')$.
\end{enumerate}
\end{proposition}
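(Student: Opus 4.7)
The plan is to prove the two ``in particular'' statements first, namely $E\sim_k F \Rightarrow E\sigma\sim_k F\sigma$ for Part~1 and $\sigma'\sim_k\sigma'' \Rightarrow E\sigma'\sim_k E\sigma''$ for Part~2. The main claims about $\sigma'\sigma$ and $\sigma\sigma'$ follow immediately variable-by-variable, since $x(\sigma'\sigma)=(x\sigma')\sigma$ and $x(\sigma\sigma')=(x\sigma)\sigma'$, and the case $k=\omega$ reduces to the cases for all $k\in\Nat$ via $\sim_\omega=\bigcap_{k\in\Nat}\sim_k$.

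The central observation behind both parts is that rules rewrite at the root, and the root of a non-variable term is a nonterminal, which no substitution can affect. Concretely, if $E=A(G_1,\dots,G_m)$ and $r:A(x_1,\dots,x_m)\gt{a}T$ is a rule, then $E\gt{a}E'$ with $E'=T[x_1/G_1,\dots,x_m/G_m]$, and for any substitution $\tau$ one has $E\tau=A(G_1\tau,\dots,G_m\tau)\gt{a}T[x_1/G_1\tau,\dots,x_m/G_m\tau]=E'\tau$. Since the applicable rules depend only on the root symbol, for non-variable $E$ there is a bijection between transitions $E\gt{a}E'$ and transitions $E\tau\gt{a}E'\tau$, with matching labels.

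For Part~1 I would induct on $k\in\Nat$. The base $k=0$ is trivial. For the step, assume $E\sim_{k+1}F$ and consider any transition $E\sigma\gt{a}H$. If $E$ is a variable $x_i$, the adjustment $\eqlevel(x_i,F)=0$ whenever $F\neq x_i$ forces $F=x_i$; then $E\sigma=F\sigma$ and matching is trivial. Otherwise $E=A(G_1,\dots,G_m)$; the transition lifts to $E\gt{a}E'$ with $H=E'\sigma$; from $E\sim_{k+1}F$ pick $F\gt{a}F'$ with $E'\sim_k F'$ (noting that $F$ cannot be a variable, as variables are dead); this lifts to $F\sigma\gt{a}F'\sigma$, and the induction hypothesis applied to $E',F',\sigma$ yields $E'\sigma\sim_k F'\sigma$. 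The symmetric matching of transitions issued by $F\sigma$ is analogous.

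For Part~2, again by induction on $k\in\Nat$, trivial base. For the step, assume $\sigma'\sim_{k+1}\sigma''$. If $E=x$ is a variable, the hypothesis directly gives $E\sigma'=x\sigma'\sim_{k+1}x\sigma''=E\sigma''$. Otherwise $E=A(G_1,\dots,G_m)$; every transition $E\sigma'\gt{a}H$ has the form $E\sigma'\gt{a}E'\sigma'$ for some $E'=T[x_1/G_1,\dots,x_m/G_m]$ coming from an applicable rule, and the same rule yields $E\sigma''\gt{a}E'\sigma''$; since $\sigma'\sim_{k+1}\sigma''$ implies $\sigma'\sim_k\sigma''$, the induction hypothesis gives $E'\sigma'\sim_k E'\sigma''$. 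Neither induction hides any real difficulty; the only small subtlety, and therefore the closest thing to an obstacle, is the variable case of Part~1, which relies on the paper's convention that $\eqlevel$ separates a variable from any distinct term already at level~$1$, for otherwise $\sim_{k+1}$ could relate a variable $x$ to a dead non-variable term and $E\sigma$ would then acquire transitions that $F\sigma$ could not match.
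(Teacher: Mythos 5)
Your proof is correct and follows essentially the same route as the paper: reduce $k=\omega$ to finite $k$ via $\sim_\omega=\bigcap_{k}\sim_k$, prove the two ``in particular'' statements by induction on $k$ using the case split on whether $E$ is a variable, and recover the general claims about $\sigma'\sigma$ and $\sigma\sigma'$ variable-by-variable from associativity of composition. One small presentational note: your parenthetical reason that $F\not\in\var$ (``variables are dead'') works only once a transition of $E$ is in hand; the cleaner and more robust justification, which the paper uses and which you yourself correctly pinpoint in your closing sentence, is the adjustment $\eqlevel(x_i,H)=0$ for $H\neq x_i$, which rules out $\{E,F\}=\{x_i,H\}$ outright and hence also covers the corner case where $E\not\in\var$ is dead and only $F\sigma$ has transitions to lift.
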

\begin{proof}
	It suffices to prove the claims for $k\in\Nat$, since
	$\sim_{\omega}=\bigcap_{k\in\Nat}\sim_k$. We use an
	induction on $k$, noting that for 
	$k=0$ the claims are trivial. 
	
	Assuming $k>0$
and $E\sim_k F$, we show that $E\sigma\sim_k F\sigma$:
We cannot have $\{E,F\}=\{x_i,H\}$ for some $H\neq x_i$ (since then 
$\eqlevel(E,F)=0$ by our definition).
Hence either
$E=F=x$ for some $x\in\var$, in which case $E\sigma=F\sigma$,
or $E\not\in\var$ and $F\not\in\var$. In the latter case  every
transition $E\sigma\gt{a}G$ ($F\sigma\gt{a}G$) 
is, in fact, $E\sigma\gt{a}E'\sigma$ ($F\sigma\gt{a}F'\sigma$) where
$E\gt{a}E'$ ($F\gt{a}F'$),
and there must be a corresponding transition $F\gt{a}F'$ ($E\gt{a}E'$)
such that $E'\sim_{k-1}F'$
(by Proposition~\ref{prop:elkeep}(3)); by the induction hypothesis 
$E'\sigma\sim_{k-1}F'\sigma$, which shows that $E\sigma\sim_k
F\sigma$ (since $(E\sigma,F\sigma)$ is covered by $\sim_{k-1}$). 

This gives us the point $1$. For the point $2$ we note that 
$\sigma'\sim_k\sigma''$ implies $E\sigma'\sim_k E\sigma''$, which is
even more straightforward to verify. 
\end{proof}

The next lemma shows a simple but important fact (whose analogues in
 different frameworks
could be traced back to~\cite{Senizergues:TCS2001,Seni05} and other related
works). Its claim is sketched in a~part of Figure~\ref{fig:removeone}.
(We recall that $E,F$ denote general regular terms when we do not say
that they are finite.)

\begin{lemma}\label{lem:getequation}
If $\eqlevel(E, F)= k<\ell=
\eqlevel(E\sigma, F\sigma)$,
then 
there are $x_i\in\support(\sigma)$, $H\neq x_i$,
and $w\in\act^*$, $|w|\leq k$, such that
$E\gt{w}x_i$, $F\gt{w}H$ or $E\gt{w}H$, $F\gt{w}x_i$, and
$x_i\sigma\sim_{\ell-k} H\sigma$.
\end{lemma}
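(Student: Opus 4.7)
The plan is to prove the lemma by induction on $k=\eqlevel(E,F)$, using Proposition~\ref{prop:elkeep}(2) in the inductive step and handling the variable cases carefully in the base case.

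For the base case $k=0$, I first observe that neither $E$ nor $F$ can be a non-variable: if both have a nonterminal at the root, then the set of enabled actions of $E$ (resp.\ $F$) equals that of $E\sigma$ (resp.\ $F\sigma$), because root-rewriting is insensitive to what sits at the leaves; so $\eqlevel(E,F)=0$ would force $\eqlevel(E\sigma,F\sigma)=0$, contradicting $\ell\geq 1$. Thus, without loss of generality, $E=x_i$ and $F\neq x_i$. I take $w=\varepsilon$ and split on whether $x_i\in\support(\sigma)$. If yes, I set $H=F$; the conclusion $x_i\sigma\sim_{\ell}F\sigma$ is immediate from the hypothesis. If $x_i\notin\support(\sigma)$, then $x_i\sigma=x_i$, and $x_i\sim_{\ell}F\sigma$ with $\ell\geq 1$ forces (by the adjustment $\eqlevel(x_i,G)=0$ for $G\neq x_i$) that $F\sigma=x_i$; hence $F$ must itself be a variable $x_j$ with $j\neq i$ and $x_j\sigma=x_i$, so I use index $j$ (playing the role of ``$x_i$'' in the statement) and $H=x_i$, giving the ``or''-branch with $w=\varepsilon$.

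For the inductive step $k\geq 1$, since $\eqlevel(E,F)=k$ is finite and positive, neither $E$ nor $F$ can be a variable (a variable is either equal to the other, giving eq-level $\omega$, or distinct from it, giving eq-level $0$). By Proposition~\ref{prop:elkeep}(2) I may assume there is a transition $E\gt{a}E'$ such that every $F\gt{a}F'$ satisfies $\eqlevel(E',F')\leq k-1$. Because $E$ is not a variable, $E\sigma\gt{a}E'\sigma$; by $\eqlevel(E\sigma,F\sigma)=\ell$ this is matched by some $F\sigma\gt{a}G$ with $\eqlevel(E'\sigma,G)\geq \ell-1$, and since $F$ is not a variable, $G=F'\sigma$ for some $F\gt{a}F'$. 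Thus
\[
\eqlevel(E',F')\leq k-1<\ell-1\leq \eqlevel(E'\sigma,F'\sigma),
\]
so the induction hypothesis applies to the pair $(E',F')$, producing $x_i\in\support(\sigma)$, $H\neq x_i$, and $w'$ with $|w'|\leq \eqlevel(E',F')\leq k-1$ and $x_i\sigma\sim_{\ell'-k'}H\sigma$ where $\ell'-k'\geq \ell-k$. Prepending $a$ gives the required $w=aw'$ with $|w|\leq k$.

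The substantive work is essentially confined to the base case: the inductive step is a routine one-step ``game descent''. The only genuinely delicate point is the subcase $E=x_i\notin\support(\sigma)$, where one has to notice that the hypothesis forces $F$ itself to be a variable mapped onto $x_i$ by $\sigma$, and then correctly re-pair the indices so as to land in the ``or''-branch of the statement. Everything else is bookkeeping about $\sigma$ and the convention $\eqlevel(x_i,H)=0$ for $H\neq x_i$.
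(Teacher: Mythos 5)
Your proof is correct and follows essentially the same route as the paper's: induction on $k$, with the base case $k=0$ resolved by observing that at least one of $E,F$ must be a variable (and handling the sub-case $x_i\notin\support(\sigma)$ by showing the other side is a variable $x_j$ mapped to $x_i$), and the inductive step carried out by a one-step game descent via Proposition~\ref{prop:elkeep}(2)--(3). The only nitpick is the awkward phrasing ``neither $E$ nor $F$ can be a non-variable'' in the base case, which reads as ``both are variables'' but is only meant (and used) as ``not both are non-variables''; the argument itself is fine.
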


\begin{proof}
We assume $\eqlevel(E, F)= k<\ell= \eqlevel(E\sigma, F\sigma)$
and use an induction on $k$.
If $k=0$, then necessarily
$\{E,F\}=\{x_i,H\}$ for some $x_i\neq H$ (since 
 $E\not\in\var$, $F\not\in\var$ would 
imply
$\eqlevel(E\sigma,F\sigma)=0$ as well); the claim is thus trivial
(if $x_i\not\in\support(\sigma)$, i.e. $x_i\sigma=x_i$,
then $H=x_j$ and $x_j\sigma=x_i$,
which entails that $x_j\in\support(\sigma)$).

For $k>0$ we must have $E\not\in\var$, $F\not\in\var$.
There must be a transition $E\gt{a}E'$ (or $F\gt{a}F'$) such that for
all $F\gt{a}F'$ (for all $E\gt{a}E'$) we have 
$\eqlevel(E',F')\leq k{-}1$ (by Proposition~\ref{prop:elkeep}(2)). 
On the other hand, for each
$E\sigma\gt{a}G_1$ (and each $F\sigma\gt{a}G_2$) there is 
$F\sigma\gt{a}G_2$ ($E\sigma\gt{a}G_1$) such that 
$\eqlevel(G_1,G_2)\geq \ell{-}1$ 
(by Proposition~\ref{prop:elkeep}(3)); since 
$E\not\in\var$ and $F\not\in\var$, the transitions
$E\sigma\gt{a}G_1$, $F\sigma\gt{a}G_2$ can be written
$E\sigma\gt{a}E'\sigma$, $F\sigma\gt{a}F'\sigma$, respectively,
where $E\gt{a}E'$, $F\gt{a}F'$.
Hence  there is a pair of transitions
$E\gt{a}E'$, $F\gt{a}F'$ such that $\eqlevel(E',F')=k'\leq k{-}1$ and
$\eqlevel(E'\sigma,F'\sigma)=\ell'\geq \ell{-}1$. 
We apply the induction
hypothesis and deduce that there are 
$x_i\in\support(\sigma)$, $H\neq x_i$,
and $w\in\act^*$, $|w|\leq k'$, such that
$E'\gt{w}x_i$, $F'\gt{w}H$ or $E'\gt{w}H$, $F'\gt{w}x_i$, and
$x_i\sigma\sim_{\ell'-k'} H\sigma$, which entails
$x_i\sigma\sim_{\ell-k} H\sigma$ (since $\ell-k=(\ell-1)-(k-1)\leq
\ell'-k'$). Since 
$E\gt{aw}x_i$, $F\gt{aw}H$ or $E\gt{aw}H$, $F\gt{aw}x_i$,
we are done.
\end{proof}

\subparagraph{Bounded growth of sizes and heights.}
We fix a grammar
$\calG=(\calN,\act,\calR)$,
and note a few simple facts to aid later analysis;
we also introduce the constants
$\stepinc$ (size increase), $\hinc$ (height increase)
related to $\calG$.
We recall that the rhs-terms $E$ in the rules~(\ref{eq:rewrule}) are
finite, and we put 
\begin{equation}\label{eq:hinc}
	\textnormal{
$\hinc=\max\big\{\height(E){-}1\mid  E$ is the rhs of a rule
in $\calR  \big\}$.
}
\end{equation}
We add that in this paper we stipulate $\max\emptyset = 0$.

By $\propsize(E)$ we mean the number of nonterminal nodes in the
least graph presentation of $E$ (hence the number of non-variable subterms of $E$).
We put
\begin{equation}\label{eq:stepinc}
\textnormal{
$\stepinc=\max\big\{\propsize(E)\mid  E$ is the rhs of a rule
in $\calR  \big\}$.
}
\end{equation}

The next proposition 
shows (generous) upper bounds on the size
and height increase
caused by (sets of) transition sequences.
(It is helpful to recall Fig.~\ref{fig:pathinLTSG}, assuming that the
rectangle contains a presentation of $G$.)

\begin{proposition}\label{prop:sizeinc}
\hfill
	\begin{enumerate}
		\item			
	If $G\gt{w}F$, then  $\pressize(F)\leq
	\pressize(G)+|w|\cdot\stepinc$.
\item	
If $G\gt{w}F$ where $G$ is a finite term, then $\height(F)\leq
	\height(G)+|w|\cdot\hinc$.
\item
	If $G\gt{v_1}F_1$, 
	$G\gt{v_2}F_2$, $\cdots$, $G\gt{v_p}F_p$,
	where $|v_i|\leq d$ for all $i\in[1,p]$, then 
	$\pressize(\{F_1,F_2,\dots,F_p\})\leq
	\pressize(G)+p\cdot d\cdot\stepinc$.
\end{enumerate}
\end{proposition}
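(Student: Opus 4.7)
The plan is to prove the three claims by induction on path length, in each case exploiting the observation that a single rule application of the form $A(x_1,\ldots,x_m)\sigma \gt{r} E\sigma$ rewrites only at the root: the whole graph presentation of the subtrees attached to the arguments $x_1\sigma,\ldots,x_m\sigma$ is retained, and we merely need to attach a (finite) presentation of the rhs $E$ above those arguments.

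For Part 1, I would induct on $|w|$. The base $|w|=0$ is trivial. For the step $G \gt{w'} G' \gt{r} F$, apply the inductive hypothesis to $G \gt{w'} G'$ and then bound the size increase caused by a single rule step. Write $G' = A(H_1,\ldots,H_m)$ and $F = E[x_1/H_1,\ldots,x_m/H_m]$ where $A(x_1,\ldots,x_m)\gt{a}E$ is the applied rule. Take a least graph presentation of $G'$; the subgraphs rooted at $H_1,\ldots,H_m$ are already present, so a presentation of $F$ is obtained by adding at most $\propsize(E)\leq\stepinc$ fresh nonterminal nodes from $E$ (variables in $E$ are simply redirected to the existing roots of $H_1,\ldots,H_m$). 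Hence $\pressize(F)\leq\pressize(G')+\stepinc$, which closes the induction.

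For Part 2, the induction is analogous, with the finiteness of $G$ ensuring that all intermediate terms are finite and $\height(\cdot)$ is well-defined. For one step $G'\gt{r} F$ with rhs $E$ of the rule $r$, use the basic identity $\height(E[x_1/H_1,\ldots,x_m/H_m]) \leq \height(E) + \max_i \height(H_i)$; since $\max_i\height(H_i) \leq \height(G')-1$ and $\height(E)\leq \hinc+1$ by definition~\eqref{eq:hinc}, we obtain $\height(F)\leq \height(G')+\hinc$, and the induction closes.

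Part 3 combines the previous argument with a bookkeeping step: I would build a single joint graph presentation of $\{F_1,\ldots,F_p\}$ by starting from the least presentation of $G$ and then extending it simultaneously along the $p$ paths $G\gt{v_i}F_i$. Each rule application along any path, by the single-step analysis already used in Part 1, contributes at most $\stepinc$ new nonterminal nodes to the joint graph (the children of the rewritten root are always already present, either from $G$ itself or from nodes added during a previous step on some path). The total number of rule applications across all $p$ paths is at most $p\cdot d$, giving the bound $\pressize(G)+p\cdot d\cdot \stepinc$. The one subtlety to be careful about — and the only point that is not merely routine — is justifying that different paths' rewrites can be accommodated in one common presentation without double-counting shared subterms; this follows because the least graph presentation always takes the minimum over all valid presentations, so the joint presentation I construct provides an \emph{upper} bound on $\pressize(\{F_1,\ldots,F_p\})$.
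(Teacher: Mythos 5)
Your proof takes essentially the same approach the paper gestures at: the paper dismisses parts~1 and~2 as immediate and, for part~3, simply notes that the terms $F_i$ can share subterms of $G$, turning the naive $p\cdot(\pressize(G)+d\cdot\stepinc)$ into $\pressize(G)+p\cdot d\cdot\stepinc$; your single-step bookkeeping and the joint-graph construction spell out precisely that. One small corner case worth flagging in part~2: when $\arity(A)=0$ your step $\max_i\height(H_i)\le\height(G')-1$ reads $\max\emptyset=0\le -1$, which fails, and indeed the proposition itself is then off by one ($\height(F)=\height(E)$ can equal $\hinc+1>\height(G')+\hinc=\hinc$) — but this is a latent imprecision in the statement that the paper's ``immediate'' remark glosses over, and the generous constants elsewhere absorb it, so it is not a defect peculiar to your argument.
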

\begin{proof}
The points $1$ and $2$ are immediate. 
A ``blind'' use of $1$ in the point $3$ would yield 
	$\pressize(\{F_1,F_2,\dots,F_p\})\leq
	p\cdot\big(\pressize(G)+d\cdot\stepinc\big)$. But since 
	the terms $F_i$ can share subterms of $G$, 
	we get the
	stronger bound  $\pressize(G)+p\cdot d\cdot\stepinc$.
\end{proof}

\subparagraph{Shortest sink words.}
If $A(x_1,\dots,x_{\arity(A)})\gt{w}x_i$ in $\calL^\ltsrul_\calG$
(hence $w\in\calR^+$), then we call
$w$ an  \emph{$(A,i)$-sink word}.
We note that such $w$ can be written $rw'$ where 
$A(x_1,\dots,x_{\arity(A)})\gt{r}E\gt{w'}x_i$; hence $w'$ ``sinks'' along a branch
of $E$ to $x_i$, or $w'=\varepsilon$ when $E=x_i$.
This 
suggests a standard dynamic programming
approach to find and fix some shortest $(A,i)$-sink words $w_{[A,i]}$ 
for all elements $(A,i)$ of the set 
$\textsc{NA}=\{(B,j)\mid B\in\calN, j\in[1,\arity(B)]\}$ 
 for which such words exist.
 We can clearly (generously) bound the lengths of $w_{[A,i]}$ by 
 $h^{|\textsc{NA}|}$ where 
 $h=2+\hinc$ (i.e., $h=1+\max\big\{\height(E)\mid E$ is the
 rhs of a rule in $\calR \big\}$).
We put 
\begin{equation}\label{eq:Mzero}
	\textnormal{	
$d_0=1+\max\,\{\,|w_{[A,i]}|; A\in\calN, i\in
[1,\arity(A)]\,\}$.
}
\end{equation}
The above discussion entails that $d_0$ is a (quickly) computable number,
whose value is at most exponential in the size of
the given grammar $\calG$.

\medskip

\emph{Remark.}
For any grammar $\calG$ we can construct a
``normalized'' grammar $\calG'$ in which  
$w_{[A,i]}$ exists for each
$(A,i)\in\textsc{NA}$, while the LTSs $\calL^{\ltsact}_\calG$ and 
$\calL^{\ltsact}_{\calG'}$ are isomorphic. 
(We can refer to~\cite{DBLP:journals/jcss/Jancar20} for more details.) 
We do not need such normalization in this paper.

\medskip

\noindent
\textbf{Convention.}
When having a fixed grammar $\calG=(\calN,\act,\calR)$, 
we also put  
\begin{equation}\label{eq:marity}
m=\max\big\{\arity(A)\mid A\in\calN\big\}
\end{equation}
but we will often write $A(x_1,\dots,x_m)$
even if $\arity(A)$ might not be maximal. This is harmless
since such $m$ could be always replaced with $\arity(A)$
if we wanted to be pedantic. 
(In fact, the grammar could be also normalized
so that the arities of nonterminals are 
the same~\cite{DBLP:journals/jcss/Jancar20} 
	but this is a superfluous technical
issue here.)

\section{Main Result (Computability of Equivalence
Levels)}\label{sec:maintheorem}

\subparagraph*{Small numbers.}
We use the notion of ``small''
numbers determined by a grammar $\calG$;
by saying that a \emph{number} $d\in\Nat$ is \emph{small} 
we mean that it is a computable number (for a given grammar $\calG$)
that is elementary  in the size of $\calG$.

E.g., the numbers $m$, $\hinc$, $\stepinc$
(defined by~(\ref{eq:marity}),~(\ref{eq:hinc}),~(\ref{eq:stepinc}))
are trivially small, and we have also shown that
$d_0$ (defined by~(\ref{eq:Mzero})) is small. 
In what follows we will also introduce further specific small numbers,
summarized in Table~\ref{tab:constants} at the end of the paper.

\subparagraph*{Main theorem.}

We first note a fact that is obvious
(by induction on $k$):

\begin{proposition}\label{prop:stratdecid}
There is an algorithm that, given a grammar $\calG$, terms $T,U$, and 
$k\in\Nat$, decides if $T\sim_k U$
in the LTS $\calL^{\ltsact}_\calG$. 
\end{proposition}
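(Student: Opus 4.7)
The plan is a direct induction on $k$ that mirrors the inductive clause defining $\sim_{k+1}$ from $\sim_k$; the only substantive task is to verify that the two ingredients used at each step --- enumerating all $a$-successors of a regular term, and recursively comparing pairs --- are effective on our representation (regular terms via finite graph presentations).

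First I would describe a subroutine $\textsc{Succ}(T,a)$ that returns the finite set $\{T' \mid T \gt{a} T'\}$. If $T = x_i \in \var$ the set is empty; otherwise $T = A(G_1,\dots,G_m)$ and, for every rule $A(x_1,\dots,x_m) \gt{a} E$ of $\calR$, the output contains $E\sigma$ with $\sigma = [x_1/G_1,\dots,x_m/G_m]$. A graph presentation of $E\sigma$ is obtained from the graph of $E$ by redirecting each arc leading to a node labelled $x_j$ to the root of $G_j$, so the computation is effective, and finiteness of the output follows from finiteness of $\calR$.

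The main recursive procedure on input $(T,U,k)$ returns yes whenever $k = 0$. For $k \geq 1$, I would first handle the convention $\eqlevel(x_i, H) = 0$ for $H \neq x_i$: if either $T$ or $U$ is a variable, return yes iff $T = U$ (equality of two regular terms, viewed as roots of their least graph presentations, is decidable by the standard deterministic-graph bisimilarity test). Otherwise, for each $a \in \act$ compute $\textsc{Succ}(T,a)$ and $\textsc{Succ}(U,a)$ and check that every $T' \in \textsc{Succ}(T,a)$ admits some $U' \in \textsc{Succ}(U,a)$ with a recursive yes-answer on $(T', U', k{-}1)$, and symmetrically. Return yes iff all these checks succeed. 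Correctness is immediate from the definition of $\sim_k$ as the relation covered by $\sim_{k-1}$, and termination is guaranteed because the recursion depth is exactly $k$ with finite branching (bounded at every level by $|\act|\cdot|\calR|$).

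I do not foresee a genuine obstacle. The one detail that deserves a line in the write-up is the effectiveness of deciding syntactic equality between two regular terms given by finite graph presentations, which is standard and reduces to a polynomial-time bisimilarity test on the underlying labelled graphs.
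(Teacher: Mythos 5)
Your proposal is correct and follows exactly the approach the paper has in mind: the paper simply declares the result ``obvious (by induction on $k$)'' without writing out the proof, and your argument fills in those routine details (effectively enumerating $a$-successors of a regular term via its graph presentation, handling the convention for variables, and recursing on $k{-}1$). There is no substantive difference.
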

Hence 
the next theorem adds the decidability of 
$\sim$ (i.e., of $\sim_{k}$ for $k=\omega$).

\begin{theorem}\label{th:computingelbound}
For any grammar $\calG=(\calN,\act,\calR)$
there is a small number $c$ and a computable (not
necessarily small) number $\calE$ such that for all
$T,U\in\trees_\calN$ we have:
\begin{equation}\label{eq:elbound}
	\textnormal{
	if $T\not\sim U$ then $\eqlevel(T,U)\leq c\cdot\big(\calE\cdot
\pressize(T,U)+(\pressize(T,U))^2\big)$.
}
\end{equation}
\end{theorem}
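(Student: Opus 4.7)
The plan is to realize the strategy sketched in the introduction's proof overview. Assume $T\not\sim U$ and set $k=\eqlevel(T,U)\in\Nat$; the aim is to bound $k$ by the right-hand side of~(\ref{eq:elbound}). Using Proposition~\ref{prop:elkeep}(2) repeatedly, I would first extract an ``optimal play'' sequence $(E_0,F_0),(E_1,F_1),\dots,(E_k,F_k)$ with $(E_0,F_0)=(T,U)$ and $\eqlevel(E_i,F_i)=k-i$. Then, invoking the balancing machinery of Section~\ref{sec:balance} together with the congruence properties of Proposition~\ref{prop:congruence} (i.e., replacing subterms $H$ by $H'$ with $H\sim_j H'$ at appropriate levels $j$), I would transform this into the modified sequence~(\ref{eq:modseq}) with the same eq-levels but whose size-growth is controlled.

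Second, I would apply the results of Section~\ref{sec:boundnsg}: if $k$ exceeds a suitable threshold depending on $\pressize(T,U)$, the modified sequence must contain a long $(n,s,g)$-subsequence~(\ref{eq:extrnsgseq}), where $n,s,g$ are small numbers in the sense of the preliminaries. Third, and at the heart of the argument, I would bound the length of any $(n,s,g)$-sequence by a computable (non-elementary) number $\calE$. The inductive engine is Lemma~\ref{lem:getequation}: whenever the eq-level $e=\eqlevel(\overline{E}_1,\overline{F}_1)$ of the top pair is strictly below $\ell=\eqlevel(\overline{E}_1\sigma,\overline{F}_1\sigma)$, there exist $x_i\in\support(\sigma)$ and $H\neq x_i$ reachable from $\overline{E}_1$ or $\overline{F}_1$ within $e$ steps with $x_i\sigma\sim_{\ell-e}H\sigma$. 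Forming the iterated substitution term $H'=H[x_i/H][x_i/H]\cdots$ and using Proposition~\ref{prop:Hremxi} together with the congruence Proposition~\ref{prop:congruence}, I can replace $x_i$ by $H'$ in all later tops $\overline{E}_j,\overline{F}_j$ (for $j\geq e+2$) without changing eq-levels, which effectively eliminates $x_i$ from $\support(\sigma)$. Since $|\support(\sigma)|\leq n$, at most $n$ such elimination phases suffice, after which $\sigma$ plays no role and the pairs have sizes bounded by $s+g\cdot(j-1)$ so their eq-levels are directly boundable via Proposition~\ref{prop:sizeinc} and Proposition~\ref{prop:stratdecid}.

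Finally, to recover the specific shape of the bound in~(\ref{eq:elbound}), I would unfold the thresholds: the linear contribution $\calE\cdot\pressize(T,U)$ arises because the number and length of modified segments feeding into an $(n,s,g)$-sequence scales linearly with the initial presentation size, while the quadratic term $(\pressize(T,U))^2$ absorbs the overhead of the balancing policy (which may compare up to $\pressize(T,U)$-many subterms across up to $\pressize(T,U)$-many layers). The small factor $c$ gathers the grammar-dependent multipliers $\stepinc$, $\hinc$, $m$, $d_0$, $n$, $s$, $g$, all of which are elementary in the size of $\calG$.

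The main obstacle is the two-part technical core that the overview only hints at. First, crafting the balancing policy so that the modification simultaneously preserves eq-levels \emph{and} forces the emergence of an $(n,s,g)$-subsequence of guaranteed length whenever $k$ is large: this requires a careful rule for deciding which occurrences to replace at each step, trading off sequence length against the growth rate $g$ of the tops. Second, making the iterated-elimination argument rigorous: one must verify that after each substitution $x_i\mapsto H'$, the resulting sub-sequence is still an $(n{-}1,s',g')$-sequence with small $s',g'$, and that the decreasing-eq-level structure survives the change of substitution (including the passage from $\sigma$ to $\sigma\remxi$). The computability of $\calE$ — as opposed to mere existence — depends on proving that each phase can be effectively analysed, which is the key novelty over the original semidecision approach of~\cite{Seni05}.
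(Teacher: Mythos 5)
Your outline follows the same general architecture as the paper: extract an optimal play, transform it via a balancing policy into a modified play whose pivots live on a controlled pivot path, extract $(n,s,g)$-subsequences from the non-sinking stretches, and bound them via Lemma~\ref{lem:getequation} plus the iterated $x_i\mapsto H'$ elimination. Your reading of the elimination engine and of Proposition~\ref{prop:Hremxi} is correct, and your account of where the linear and quadratic contributions to the bound come from matches the intuition behind the paper's decomposition into ``crucial segments'' and ``close sink-parts.''

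However, there is a genuine gap precisely where you flag an obstacle and then leave it open: the \emph{computability} of $\calE$. You write that this ``depends on proving that each phase can be effectively analysed,'' but you give no mechanism for it, and a direct effective analysis is not available. The trouble is that $\calE_{\calB_{n,s,g}}$ is defined through the quantity $e=\max\{\eqlevel(E,F)\mid (E,F)\in\calB\cap\pairssize{s}\}$, and the full base $\calB_{n,s,g}$ is cut out of $\not\sim$ semantically — so both the set and the number $e$ presuppose exactly the equivalence we are trying to decide. The paper resolves this circularity by a bootstrapping argument that your sketch does not contain: one replaces ``$\calB$ is full'' (a semantic, non-effective property) by ``$\calB$ is sound'' (defined via the decidable relations $\sim_k$ through the auxiliary relation $\speceq_k$ and the already-proved bound), observes that a sound candidate can be found by brute-force search using Proposition~\ref{prop:stratdecid}, and then proves, by taking a violating pair $(T,U)$ with \emph{least} eq-level and invoking the refined claim~(\ref{eq:finermainclaim}) for candidates full below $\eqlevel(T,U)$, that every sound candidate in fact coincides with $\calB_{n,s,g}$. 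Without some version of this argument (or another route to effectiveness), the theorem as stated — with ``computable $\calE$'' — is not established; one would only get existence of a bound. A second, smaller, issue is that the quadratic term is not merely ``overhead'': it arises from a specific no-repeat counting (Propositions~\ref{prop:samelengthmodplays} and~\ref{prop:dthree}) over pairs whose both members are within $d_0$ steps of subterms of $(T_0,U_0)$, and the linear factor from the crucial-segment count (Proposition~\ref{prop:dfour}) needs the simple-stair decomposition to bound how many pivots can sit above a given subterm of $W_0$; your proposal gestures at these but does not supply the arguments.
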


\begin{corollary}\label{cor:decidability}
It is decidable, given $\calG$, $T$, $U$, if $T\sim U$ in 
$\calL^{\ltsact}_\calG$.
\end{corollary}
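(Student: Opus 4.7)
The plan is to realize the informal argument from the Proof overview in the introduction. Suppose $T\not\sim U$ and set $k=\eqlevel(T,U)$, which is finite. I would fix an optimal bisimulation-game play, i.e.\ a sequence of pairs $(E_0,F_0)=(T,U),\,(E_1,F_1),\dots,(E_k,F_k)$ with $\eqlevel(E_i,F_i)=k-i$ for all $i$; such a play exists by Proposition~\ref{prop:elkeep}(2,3). The goal is then to show $k$ cannot exceed the right-hand side of~(\ref{eq:elbound}), which I would do by deriving a contradiction from the assumption that $k$ strictly exceeds that bound.

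First, I would invoke the balancing machinery of Section~\ref{sec:balance}, using the congruence properties of Proposition~\ref{prop:congruence} together with the size-growth bounds of Proposition~\ref{prop:sizeinc}, to transform the optimal play into a sequence $(E'_i,F'_i)$ of the same eq-levels in which each pair is presented as a ``top'' $(\overline{E}_i,\overline{F}_i)$ applied to a common ``tail'' substitution $\sigma$ whose growth is controlled. The results of Section~\ref{sec:boundnsg} should then yield: if $k$ exceeds a polynomial threshold of the shape in~(\ref{eq:elbound}), the modified sequence contains an $(n,s,g)$-subsequence of the form~(\ref{eq:extrnsgseq}) whose length $z$ can be made as large as desired, where the parameters $n,s,g$ are small in $\calG$ and $s$ depends (at most) linearly on $\pressize(T,U)$.

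The analysis of Section~\ref{sec:analysis} should forbid arbitrarily long $(n,s,g)$-sequences. Concretely, once the sequence exceeds a length bound depending on $e$, there is an index $j$ with $\eqlevel(\overline{E}_j,\overline{F}_j)=e$ while $\eqlevel(\overline{E}_j\sigma,\overline{F}_j\sigma)=\ell>e$. Lemma~\ref{lem:getequation} then supplies $x_i\in\support(\sigma)$ and a term $H\neq x_i$, reachable from $\overline{E}_j$ or $\overline{F}_j$ in at most $e$ root-rewriting steps, with $x_i\sigma\sim_{\ell-e}H\sigma$. Replacing $x_i$ in all later tops by the regular term $H'=H[x_i/H][x_i/H]\cdots$ (using Proposition~\ref{prop:Hremxi}) preserves eq-levels, removes $x_i$ from the effective support of $\sigma$, and keeps tops bounded since $\pressize(H')\leq\pressize(H)$. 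Iterating this at most $n$ times exhausts $\support(\sigma)$ and yields a bound $L^\star(n,s,g)$ on the length of $(n,s,g)$-sequences, contradicting the arbitrary-length conclusion of the previous step. Corollary~\ref{cor:decidability} then follows by first computing the bound in~(\ref{eq:elbound}) and then invoking Proposition~\ref{prop:stratdecid} to decide $T\sim_k U$ for that computed $k$.

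The main obstacle is extracting a \emph{computable} bound $\calE$ on $L^\star(n,s,g)$ with the correct dependence on $\pressize(T,U)$. Each of the $n$ reduction phases not only deletes one variable from $\support(\sigma)$ but also perturbs the parameters: the reachability bound $e$ depends on the current top-sizes, and the modified tops $\overline{E}_j[x_i/H']$, $\overline{F}_j[x_i/H']$ must be repackaged inside a sub-sub\-sequence with updated parameters $(n',s',g')$ before the next phase can apply. Tracking how $(s,g)$ inflate as $n$ decreases, verifying that every transition between phases is effectively computable from the grammar even though the resulting tower need not be elementary, and isolating the (small) quadratic contribution in $\pressize(T,U)$ from the (possibly large but computable) linear contribution that becomes $\calE$ --- this is where the bulk of the technical work lies, and is the reason $\calE$ is only claimed computable while $c$ remains small.
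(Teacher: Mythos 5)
Your proposal does not prove the corollary in the way the paper does: the paper's own proof of Corollary~\ref{cor:decidability} is essentially one line (Theorem~\ref{th:computingelbound} gives a computable bound, Proposition~\ref{prop:stratdecid} decides $\sim_k$ for that bound). Instead you attempt to sketch a proof of Theorem~\ref{th:computingelbound} itself and then invoke Proposition~\ref{prop:stratdecid} at the end. The outline you give is faithful to the paper's high-level architecture (optimal play, balancing, $(n,s,g)$-sequences, Lemma~\ref{lem:getequation} and Proposition~\ref{prop:Hremxi} to remove variables from the tail substitution one at a time), so there is no methodological divergence.

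However, there is a genuine gap exactly where you flag it, and you leave it unresolved, which is fatal for a decidability proof. You acknowledge that the difficulty is producing a \emph{computable} $\calE$, but you do not supply the argument; you only describe what a resolution would have to track. The paper's Section~\ref{sec:finalproof} resolves this by introducing the relation $\speceq_k$ and the notion of a \emph{$k$-sound} $(n,s,g)$-candidate $\calB$, showing that a sound candidate can be found by brute-force search using Proposition~\ref{prop:stratdecid} (this is what makes the process effective), and then arguing by minimality — take a hypothetical $(T,U)\in\speceq_{\calE_\calB}\cap\not\sim$ with least $\eqlevel(T,U)$, observe that soundness forces $\calB$ to be full below $\eqlevel(T,U)$, and derive a contradiction from the refined bound~(\ref{eq:finermainclaim}). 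Without this bootstrapping argument (or an equivalent) you have no way to turn the semantic quantity $\calE_{\calB_{n,s,g}}$ into a number the algorithm can actually output, and ``decidable'' does not follow.

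A secondary inaccuracy: you write that ``$s$ depends (at most) linearly on $\pressize(T,U)$.'' This is not what the paper establishes. The parameters $n,s,g$ defined in~(\ref{eq:definingn}),~(\ref{eq:definings}),~(\ref{eq:definingg}) are small numbers determined solely by $\calG$. The quadratic and linear dependence on $\pressize(T,U)$ enters through the bounds on the total length of close sink-parts (Proposition~\ref{prop:dthree}) and on the number of crucial segments (Proposition~\ref{prop:dfour}), not through $s$. This distinction matters: if $s$ did grow with the input, $\calE_{\calB_{n,s,g}}$ would no longer be a single grammar-determined constant.
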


Theorem~\ref{th:computingelbound} is proven in
Section~\ref{sec:finalproof}; the proof uses the notions and 
results from
Sections~\ref{sec:boundnsg},~\ref{sec:balance},
and~\ref{sec:analysis}.
Each section starts with an informal summary, and the collection
of these summaries yields a more detailed informal overview of the proof
than that given in the introduction.

\begin{figure}[t]
\centering
\includegraphics[scale=0.6]{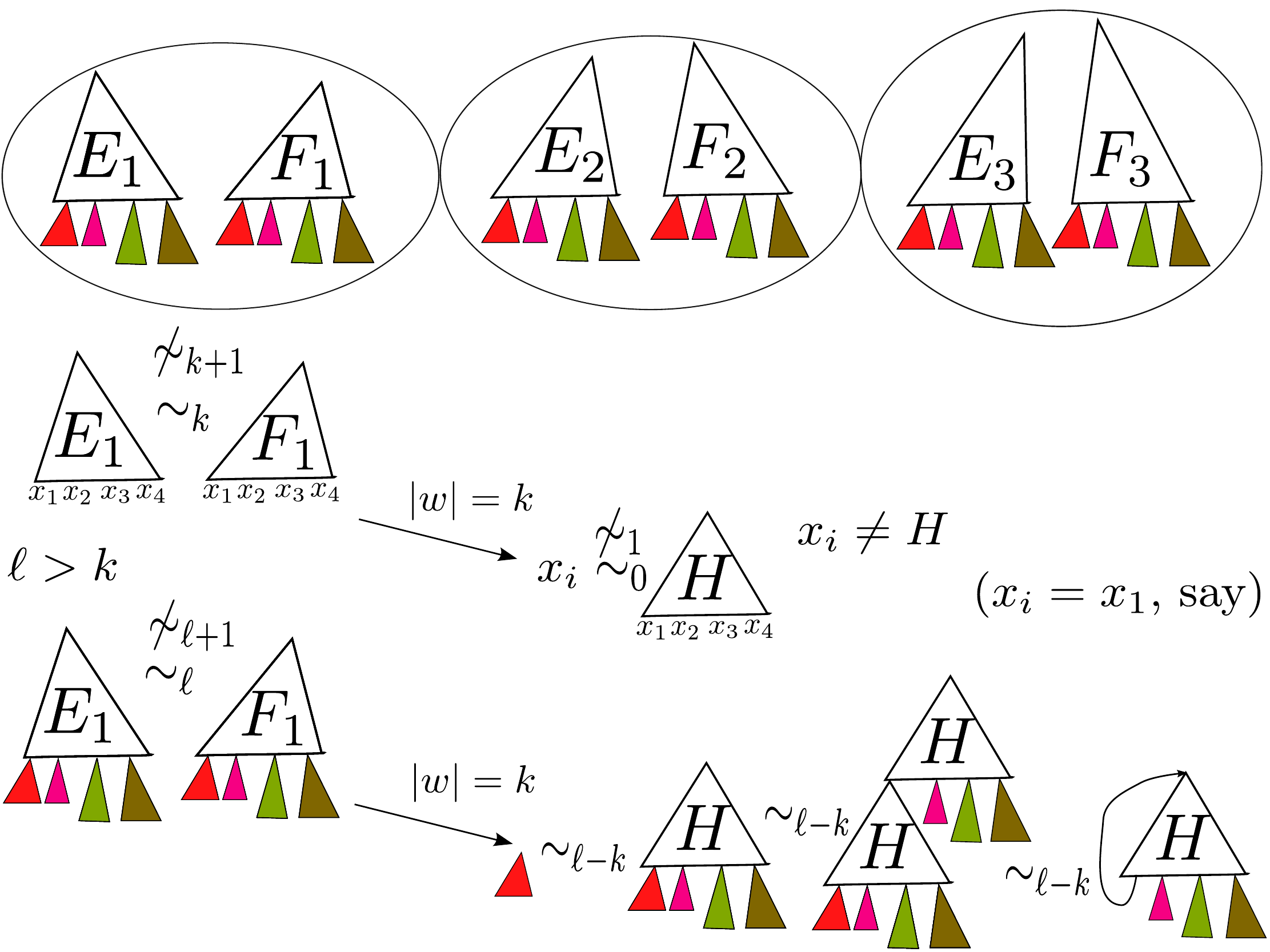}
	\caption{In an $(n,s,g)$-sequence, $(E_1,F_1)$ helps to get
	rid of one term in the tail-substitution $\sigma$}\label{fig:removeone}
\end{figure}

\begin{figure}[t]
\centering
\includegraphics[scale=0.6]{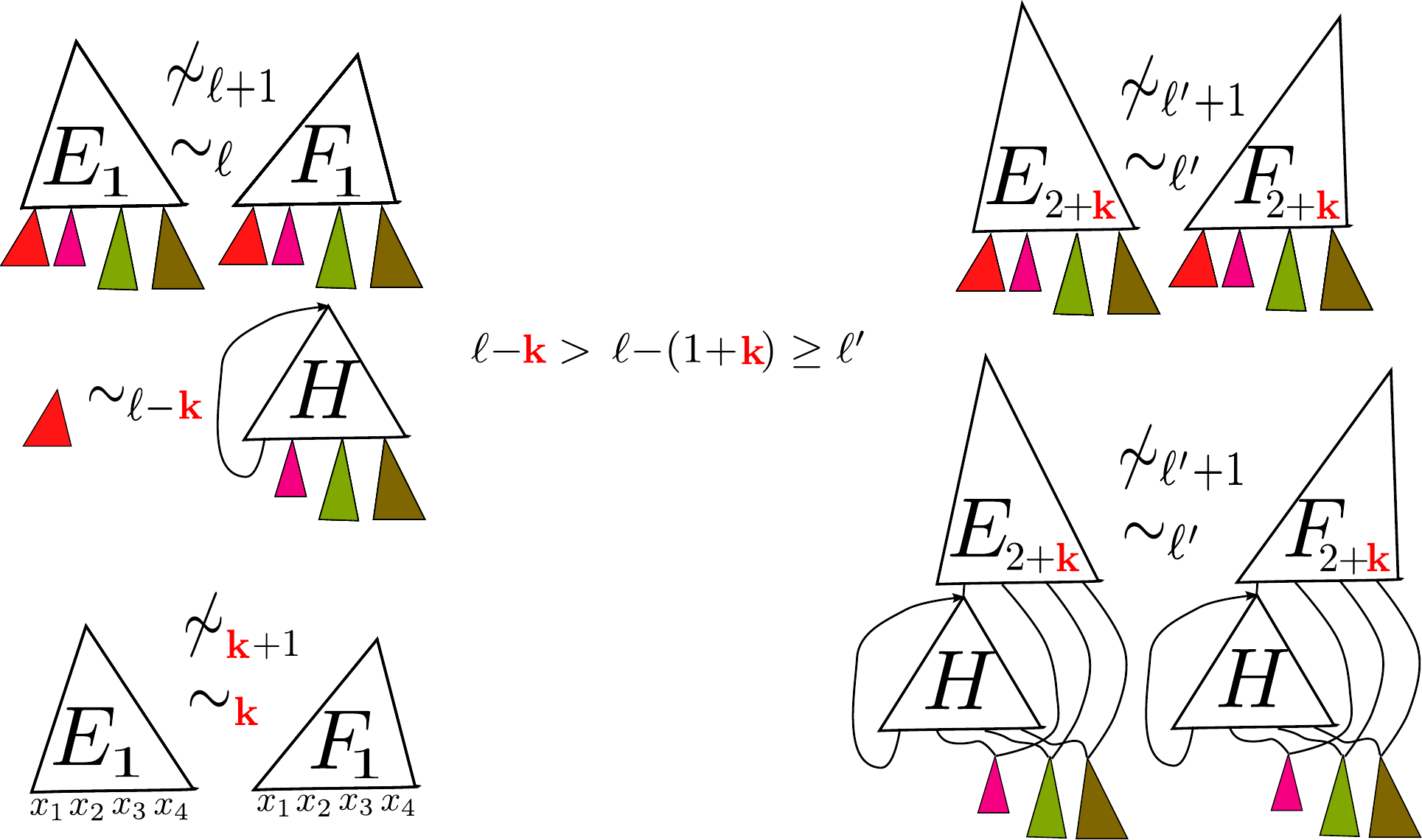}
\caption{Support of $\sigma$ can be safely decreased after the
	eq-level drops sufficiently}\label{fig:boundonnonsink}
\end{figure}

\section{Bounding the Lengths of ``$(n,s,g)$-sequences''}\label{sec:boundnsg}

The top of Figure~\ref{fig:removeone} depicts (a prefix of) a sequence of the
form
\begin{center}
$(E_1\sigma,F_1\sigma)$, $(E_2\sigma,F_2\sigma)$, $\dots$,
$(E_z\sigma,F_z\sigma)$
\end{center}
($E_i,F_i$ being regular terms)
where we assume that the eq-levels are finite and
decreasing:
\begin{center}
$\omega>\eqlevel(E_1\sigma,F_1\sigma)>\eqlevel(E_2\sigma,F_2\sigma)>\cdots>
	\eqlevel(E_z\sigma,F_z\sigma)$. 
\end{center}
	We then have $\eqlevel(E_1,F_1)=k\leq\ell=\eqlevel(E_1\sigma,F_1\sigma)$
(for some $k,\ell\in\Nat$),
by Proposition~\ref{prop:congruence}. If $\sigma$ is the empty-support
substitution, then $k=\ell$ and the sequence length $z$ is bounded by
$1+k$. If $k<\ell$, then Lemma~\ref{lem:getequation} yields some $x_i$
and
$H\neq x_i$ ($x_i=x_1$ in Figure~\ref{fig:removeone})
where $x_i\sigma\sim_{\ell-k}H\sigma$; hence 
in each pair $(E_{j}\sigma,F_{j}\sigma)$ where
$j\in[2+k,z]$ we can (repeatedly) replace
$x_i\sigma$ with $H\sigma$ without changing 
the eq-level
 of the pair. 
This is depicted in Figure~\ref{fig:boundonnonsink};
since $\eqlevel(E_{2+k}\sigma,F_{2+k}\sigma)=\ell'<\ell-k$, the
respective eq-levels do not change 
due to Propositions~\ref{prop:congruence} and~\ref{prop:elkeep}.

If, moreover, we are guaranteed that the size growth of $(E_j,F_j)$ is
controlled, i.e., 
$$\pressize(E_j,F_j)\leq s+g\cdot(j{-}1)$$
for some
fixed constants $s$ and $g$ (and $j\in[1,z]$),
and $\varin(E_j,F_j)\subseteq \{x_1,\dots,x_n\}$
for some fixed $n$ (which bounds the support of $\sigma$), then 
a bound on the lengths $z$ of such
$(n,s,g)$-sequences
is determined
by the respective grammar $\calG$ (independently of the sizes of terms
$x_i\sigma$). This is straightforward, as we now show.

Given $n,s,g$, the number of respective pairs
$(E_1,F_1)$ is bounded, and there is thus $e\in\Nat$ that is the largest
$\eqlevel(E_1,F_1)$ for such pairs (we recall that
$\omega>\eqlevel(E_1\sigma,F_1\sigma)\geq\eqlevel(E_1,F_1)$); at this
moment we do
not claim that $e$ is computable.
For each $(n,s,g)$-sequence $(E_1\sigma,F_1\sigma)$,
$(E_2\sigma,F_2\sigma)$, $\dots$,
$(E_z\sigma,F_z\sigma)$  where $z>1{+}e$ we have
($\eqlevel(E_1,F_1)<\eqlevel(E_1\sigma,F_1\sigma)$ and)
either $E_1\gt{w}x_i$ and  $F_1\gt{w}H$, or $E_1\gt{w}H$ and
$F_1\gt{w}x_i$, $|w|\leq e$, for the
respective $x_i,H$ discussed above and illustrated in
Figures~\ref{fig:removeone} and~\ref{fig:boundonnonsink}; hence 
$\pressize(H)\leq\pressize(E_1,F_1)+e\cdot\stepinc$ (by
Proposition~\ref{prop:sizeinc}(1)).
This entails that replacing $x_i\sigma$ with $H'\sigma\remxi$ 
where $H'=H[x_i/H][x_i/H][x_i/H]\cdots$ (recall
Proposition~\ref{prop:Hremxi}) in the pairs
 $(E_{j}\sigma,F_{j}\sigma)$ for
$j=1{+}e{+}1,1{+}e{+}2,\dots,z$ gives us 
an $(n{-}1,s',g)$-sequence of length $z{-}(1{+}e)$, where 
$s'=s+g\cdot(1{+}e)+s+e\cdot\stepinc$ (which bounds the size of
terms $E_{2+e}, F_{2+e}$ extended by a shared subterm $H'$).
To be precise, 
for the terms $E'_j=E_j[x_i/H']$ and $F'_j=F_j[x_i/H']$ we only have
$\varin(E'_j,F'_j)\subseteq\{x_1,\dots,x_{n}\}\smallsetminus\{x_i\}$,
and $x_n$ can occur in them (when $x_i\neq x_n$).
In this case we just replace $x_n$ with $x_i$ 
in all $E'_j, F'_j$ ($j=1{+}e{+}1,1{+}e{+}2,\dots,z$) and use 
the tail-substitution $\sigma'$ that arises
from $\sigma$ by putting $x_i\sigma'=x_n\sigma$ 
and $x_n\sigma'=x_n$. An inductive argument thus establishes 
that there is indeed a claimed bound (on the lengths of
$(n,s,g)$-sequences) determined by the grammar.

We will later show that such a bound is even computable
when $\calG,n,s,g$ are given. Moreover, we will also show how to compute small
$n,s,g$ to a given $\calG$ so that the computable bound on the length
of $(n,s,g)$-sequences gives us the number $\calE$ in 
Theorem~\ref{th:computingelbound}. 

In the rest of this section
we formalize the above ideas showing that $(n,s,g)$-sequences are bounded.
In this formalization
 we also define the notion of 
\emph{$(n,s,g)$-candidates}, candidates for ``non-equivalence
bases''; 
intuitively, the \emph{base} $\calB_{n,s,g}$ is intended to collect all possible ``tops''
$(E_j,F_j)$, $(E_j[x_i/H'],F_j[x_i/H'])$, $\dots$
from all (eqlevel-decreasing) $(n,s,g)$-sequences
that 
undergo the above described inductive transformation.

\subparagraph*{Eqlevel-decreasing $(n,s,g)$-sequences.}

We fix a grammar $\calG=(\calN,\act,\calR)$.
By an \emph{eqlevel-decreasing sequence} we mean 
a sequence $(T_1,U_1),(T_2,U_2),\dots,(T_z,U_z)$
of pairs of terms 
(where $z\in\Natpos$) such that 
$\omega>\eqlevel(T_1,U_1)>\eqlevel(T_2,U_2)>\cdots
>\eqlevel(T_z,U_z)$.
The length $z$ of such a sequence is obviously at most
$1+\eqlevel(T_1,U_1)$.

For $n,s,g\in\Nat$
we say that an eqlevel-decreasing sequence
in the form
\begin{equation}\label{eq:nsgseq}
(E_1\sigma,F_1\sigma),(E_2\sigma,F_2\sigma),\dots,(E_z\sigma,F_z\sigma)
\end{equation}
is an \emph{$(n,s,g)$-sequence}  if 
$\varin(E_j,F_j)\subseteq \{x_1,\dots,x_n\}$  and
$\pressize(E_j,F_j)\leq s+g\cdot(j{-}1)$
for all $j\in[1,z]$. (The size of ``tops'' $(E_j,F_j)$ is at most 
$s$ at the start,
and $g$ bounds the ``growth-rate'' of tops; the terms $x_i\sigma$,
$i\in[1,n]$, 
might be large but the  ``tail substitution'' $\sigma$ 
is the same in all elements of the sequence.)

\subparagraph*{Candidates for (non-equivalence) bases.}

To show a bound on the lengths of $(n,s,g)$-sequences 
in a convenient form (in Lemma~\ref{lem:ELdecreasbound}), we introduce further
notions; we start with a piece of notation.
For any $n,s\in\Nat$ we put
\begin{itemize}
	\item		
$\pairsvar{n}=\big\{(E,F)\in\trees_\calN\times\trees_\calN\mid
	\varin(E,F)=\{x_1,\dots,x_n\}\big\}$,
\item
	$\pairssize{s}=\big\{(E,F)\in\trees_\calN\times\trees_\calN\mid
	\pressize(E,F)\leq s\big\}$,
\item
	$\pairsvs{n}{s}=\pairsvar{n}\cap\pairssize{s}$.
\end{itemize}

Given $n,s,g\in\Nat$, 
we say that $\calB\subseteq\trees_\calN\times\trees_\calN$ is 
an \emph{$(n,s,g)$-candidate} (intended to collect the tops of
$(n,s,g,)$-sequences that undergo the above described inductive transformation)
if the following
conditions $1$--$3$ hold (in which an implicit induction
on $n$ is used):

\begin{enumerate}
\item
$\calB\subseteq \big(\pairsvar{0}\cup
\pairsvar{1}\cup\cdots\cup\pairsvar{n}\big)\mathop{\cap}\not\sim$\,.
\item	
	$(\calB\cap\pairsvar{n})\subseteq\pairssize{s}$.
\item
If $n>0$, then 
the set 
$\calB'=\calB\smallsetminus \pairsvar{n}$ 
is an $(n{-}1,s',g)$-candidate
where 
\begin{equation}\label{eq:nextsize}
\textnormal{
$s'=2s+g\cdot(1{+}e)+e\cdot\stepinc$ for
$e=\max \big\{\eqlevel(E,F)\mid (E,F)\in\calB\cap\pairssize{s}\}$.
}
\end{equation}
\end{enumerate}

Every $(n,s,g)$-candidate $\calB$ 
yields a \emph{bound} 
$\calE^{n,s,g}_{\calB}\in\Natpos$, denoted just $\calE_\calB$ when 
 $n,s,g$ are clear from the context; 
 in the above notation (around~(\ref{eq:nextsize})) 
 we define 
 $\calE^{n,s,g}_\calB$ 
 as follows:
 \begin{center}
	 if $n=0$, then $\calE^{n,s,g}_{\calB}=1+e$;
if  $n>0$, then
$\calE^{n,s,g}_{\calB}=1+e+\calE^{n-1,s',g}_{\calB'}$.
\end{center}

An \emph{$(n,s,g)$-candidate} $\calB$ is \emph{full below} an
\emph{eq-level} $\overline{e}\in\Nat\cup\{\omega\}$ 
if 
each pair $(E,F)\in \big(\pairsvar{0}\cup
\pairsvar{1}\cup\cdots\cup\pairsvar{n}\big)\mathop{\cap}\pairssize{s}$
such that $\eqlevel(E,F)<\overline{e}$ belongs to $\calB$,
and, moreover, in the case $n>0$ the $(n{-}1,s',g)$-candidate	
$\calB'$ is full below $\overline{e}$.
We say that $\calB$ is \emph{full} if it is full below $\omega$
(in which case $\calB$ contains \emph{all} relevant non-equivalent
pairs).

\begin{proposition}\label{prop:nsgcand}
For  any $n,s,g$ there is the unique full $(n,s,g)$-candidate, denoted
 $\calB_{n,s,g}$.
\end{proposition}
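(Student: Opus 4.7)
The plan is to prove both existence and uniqueness simultaneously by induction on $n$. For the base case $n=0$, I would take $\calB_{0,s,g} := \pairsvar{0} \cap \pairssize{s} \cap \not\sim$. Conditions 1 and 2 of the candidate definition follow directly, and condition 3 is vacuous. Fullness below $\omega$ is immediate, and uniqueness is forced: conditions 1 and 2 give $\calB \subseteq \pairsvar{0} \cap \pairssize{s} \cap \not\sim$ for any $(0,s,g)$-candidate, while fullness below $\omega$ gives the reverse inclusion.

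The main obstacle in the inductive step is the apparent self-reference in the candidate definition: the value of $e$ depends on $\calB \cap \pairssize{s}$, while $s'$ depends on $e$, and the remainder $\calB'$ of $\calB$ depends on $s'$. The key observation that breaks this loop is that for \emph{any} full $(n,s,g)$-candidate $\calB$, the slice $\calB \cap \pairssize{s}$ is forced to equal $(\pairsvar{0}\cup\cdots\cup\pairsvar{n}) \cap \pairssize{s} \cap \not\sim$: the inclusion $\subseteq$ comes from condition 1, and the inclusion $\supseteq$ from fullness below $\omega$. Since regular terms with $\pressize \leq s$ form a finite set (up to isomorphism of graph presentations), this slice is a finite set of non-bisimilar pairs, so its maximum eq-level $e$ is a well-defined natural number determined purely by $n$, $s$, and $g$; consequently so is $s'$.

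With this $s'$ in hand, I would invoke the induction hypothesis to obtain the unique full $(n{-}1, s', g)$-candidate $\calB_{n-1,s',g}$ and then define
\[
\calB_{n,s,g} := \big(\pairsvar{n} \cap \pairssize{s} \cap \not\sim\big) \cup \calB_{n-1,s',g}.
\]
Conditions 1 and 2 follow directly; for condition 3 I would verify that recomputing $e$ from $\calB_{n,s,g} \cap \pairssize{s}$ reproduces the same value, which reduces to the identity $\calB_{n-1,s',g} \cap \pairssize{s} = (\pairsvar{0}\cup\cdots\cup\pairsvar{n-1}) \cap \pairssize{s} \cap \not\sim$; this follows from the inequality $s \leq s'$ (clear from $s' = 2s + g(1{+}e) + e \cdot \stepinc$) together with the fullness of $\calB_{n-1,s',g}$ supplied by the induction hypothesis. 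Fullness of the constructed $\calB_{n,s,g}$ below $\omega$ then reduces to a case split on whether a given pair lies in $\pairsvar{n}$ or in a lower layer (where the latter is handled by recursive fullness of $\calB_{n-1,s',g}$). For uniqueness, the same forcing argument applied to an arbitrary full $(n,s,g)$-candidate $\calB^*$ yields $\calB^* \cap \pairsvar{n} = \pairsvar{n} \cap \pairssize{s} \cap \not\sim$ and identifies $\calB^* \setminus \pairsvar{n}$ with the (unique, by induction) full $(n{-}1, s', g)$-candidate, so $\calB^* = \calB_{n,s,g}$.
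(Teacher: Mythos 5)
Your proof is correct and takes essentially the same route as the paper: it gives the same recursive definition (the layer in $\pairsvar{n}$ is forced to be $\pairsvar{n}\cap\pairssize{s}\cap\not\sim$ and the remainder is the unique full $(n{-}1,s',g)$-candidate supplied by the induction hypothesis), and the paper's one-line proof tacitly relies on exactly the loop-breaking observation you spell out, namely that for any full candidate $\calB$ the slice $\calB\cap\pairssize{s}$ is determined to be $(\pairsvar{0}\cup\cdots\cup\pairsvar{n})\cap\pairssize{s}\cap\not\sim$, so $e$ and hence $s'$ are well-defined from $n,s,g$ alone. You merely make explicit the verifications (finiteness of $\pairssize{s}$, $s\le s'$, the consistency check for $e$, and the forcing argument for uniqueness) that the paper leaves implicit.
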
	
\begin{proof}
Given $n,s,g$, the full
$(n,s,g)$-candidate $\calB=\calB_{n,s,g}$ is defined as follows:
$\calB\cap\pairssize{s}=\big(\pairsvar{0}\cup
\pairsvar{1}\cup\cdots\cup\pairsvar{n}\big)\mathop{\cap}\pairssize{s}
\mathop{\cap}\not\sim$
 and, 
moreover, in the case
$n>0$
the set $\calB'=\calB\smallsetminus\pairsvs{n}{s}$ is the full $(n{-}1,s',g)$-candidate
(where $s'$ is defined as in~(\ref{eq:nextsize})).
\end{proof}

The unique full $(n,s,g)$-candidate $\calB_{n,s,g}$ will be also
called the \emph{$(n,s,g)$-base}.

\subparagraph*{The $(n,s,g)$-sequences have bounded lengths.}

We show the announced bound.

\begin{lemma}\label{lem:ELdecreasbound}
If $(E_1\sigma,F_1\sigma),(E_2\sigma,F_2\sigma),\dots,(E_z\sigma,F_z\sigma)$
is an $(n,s,g)$-sequence and $\calB$ is an $(n,s,g)$-candidate that is
full below $1+\eqlevel(E_1\sigma,F_1\sigma)$, then $z\leq
\calE_\calB$;
in particular, $z\leq \calE_{\calB_{n,s,g}}$.
\end{lemma}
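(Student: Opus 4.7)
I would prove this by induction on $n$, using Lemma~\ref{lem:getequation} together with the congruence laws of Proposition~\ref{prop:congruence} in the inductive step; the ``in particular'' clause then follows immediately by instantiating $\calB=\calB_{n,s,g}$, which, being full, is in particular full below $1+\eqlevel(E_1\sigma,F_1\sigma)$.

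The base case $n=0$ is essentially immediate. Since $\varin(E_j,F_j)=\emptyset$, one has $E_j\sigma=E_j$, and the pair $(E_1,F_1)$ lies in $\pairsvar{0}\cap\pairssize{s}$ and is non-bisimilar (its eq-level is finite). Fullness of $\calB$ below $1+\eqlevel(E_1,F_1)$ forces $(E_1,F_1)\in\calB$, hence $\eqlevel(E_1,F_1)\leq e$, and the strict decrease of eq-levels bounds $z\leq 1+e=\calE^{0,s,g}_\calB$.

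For the inductive step $n>0$, set $k=\eqlevel(E_1,F_1)$ and $\ell=\eqlevel(E_1\sigma,F_1\sigma)$, so $k\leq\ell$ by Proposition~\ref{prop:congruence}. The base-case argument (after a cosmetic renaming to place the variables of $E_1,F_1$ into an initial segment $\{x_1,\dots,x_{n'}\}$) shows $(E_1,F_1)\in\calB\cap\pairssize{s}$, giving $k\leq e$. If $z\leq 1+e$, the bound $\calE_\calB=1+e+\calE_{\calB'}$ is already met; otherwise $\ell\geq z-1>e\geq k$, so $k<\ell$. Lemma~\ref{lem:getequation} then delivers $x_i\in\support(\sigma)$, $H\neq x_i$ reachable from $E_1$ or $F_1$ by some $w$ of length $\leq k$, and $x_i\sigma\sim_{\ell-k}H\sigma$; Proposition~\ref{prop:sizeinc}(1) bounds $\pressize(H)\leq s+e\stepinc$. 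Put $H'=H[x_i/H][x_i/H]\cdots$; by Proposition~\ref{prop:Hremxi} we have $x_i\notin\varin(H')$ and $\pressize(H')\leq\pressize(H)$.

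The crux of the proof is to upgrade $x_i\sigma\sim_{\ell-k}H\sigma$ to $x_i\sigma\sim_{\ell-k}H'\sigma$, by iterating the relation $\sigma\sim_{\ell-k}[x_i/H]\sigma$ via Proposition~\ref{prop:congruence}(2) and then passing to the regular-term limit (this passage is justified because $\sim_{\ell-k}$ only probes behaviour up to a depth bounded in terms of $\ell-k$ and the grammar, whereas the finite unfoldings $x_i[x_i/H]^r$ stabilise on any such depth from some finite stage onward). This is the main technical obstacle; once it is in place, Propositions~\ref{prop:congruence}(2) and~\ref{prop:elkeep}(1), together with $\eqlevel(E_j\sigma,F_j\sigma)<\ell-k$ for all $j\geq 2+k$, imply that replacing $x_i\sigma$ by $H'\sigma\remxi$ preserves the eq-level of each such pair. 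The resulting eqlevel-decreasing tail, of length $z-(1+k)$, uses only variables in $\{x_1,\dots,x_n\}\setminus\{x_i\}$ (which a swap of $x_n$ with $x_i$, accompanied by the obvious adjustment of $\sigma$, brings into $\{x_1,\dots,x_{n-1}\}$), and its tops are bounded in size by $(s+g(j-1))+\pressize(H')$. Reindexing by $j'=j-(1+k)$ yields an $(n-1,s^*,g)$-sequence with $s^*=2s+g(1+k)+k\stepinc\leq s'$, which is therefore also an $(n-1,s',g)$-sequence. The candidate $\calB'$ inherits fullness below $1+\ell$ from $\calB$, hence is full below $1+\eqlevel$ of the first pair of the tail; the inductive hypothesis gives $z-(1+k)\leq\calE_{\calB'}$, and combined with $k\leq e$ this yields $z\leq 1+e+\calE_{\calB'}=\calE_\calB$, as required.
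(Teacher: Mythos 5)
Your proof is correct and follows essentially the same route as the paper's: show $(E_1,F_1)\in\calB\cap\pairssize{s}$, invoke Lemma~\ref{lem:getequation} when $\eqlevel(E_1,F_1)<\eqlevel(E_1\sigma,F_1\sigma)$, replace $x_i\sigma$ by $H'\sigma\remxi$ using Propositions~\ref{prop:congruence}(2) and~\ref{prop:elkeep}(1), reindex variables, and apply the induction hypothesis to the resulting $(n{-}1,s',g)$-sequence. The one point where you diverge slightly is the ``passage to the regular-term limit'': you appeal to a general (and slightly overstated) ``$\sim_{\ell-k}$ only depends on the term up to depth roughly $\ell-k$'' principle, whereas the paper argues more locally that $[x_i/H]^{\ell-k}\sigma\sim_{\ell-k}[x_i/H]^\omega\sigma$ by a short induction on $\ell-k$ (at each step, $H\tau\sim_{r+1}H\tau'$ follows from $\tau\sim_r\tau'$ and Prop.~\ref{prop:congruence}(2), since $H\notin\var$); the dependence you mention on the grammar is in fact unnecessary, but your overall reasoning is sound, and you correctly identify this as the crux.
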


\begin{proof}
We consider an $(n,s,g)$-sequence
$(E_1\sigma,F_1\sigma),(E_2\sigma,F_2\sigma),\dots,(E_z\sigma,F_z\sigma)$
as in~(\ref{eq:nsgseq}), and   an $(n,s,g)$-candidate $\calB$ that is
full below $1+\eqlevel(E_1\sigma,F_1\sigma)$.
Since 
$\omega>\eqlevel(E_1\sigma,F_1\sigma)\geq\eqlevel(E_1,F_1)$ 
(by Proposition~\ref{prop:congruence}(1)), 
we have 
$(E_1,F_1)\in\calB\cap\pairssize{s}$.
This entails 	
\begin{center}
$\eqlevel(E_1,F_1)=k\leq 
e=\max \big\{\eqlevel(E,F)\mid (E,F)\in\calB\cap\pairssize{s}\}$.
\end{center}
	If $\eqlevel(E_1\sigma,F_1\sigma)=\eqlevel(E_1,F_1)=k$, which is surely
the case when $n=0$
(in this case $(E_1\sigma,F_1\sigma)=(E_1,F_1)$),
then $z\leq 1+k$, due to the required eqlevel-decreasing
property of $(n,s,g)$-sequences;
in this case $z\leq 1+e\leq\calE_\calB$.

We proceed inductively (on $n$), assuming  
 $n>0$ and $\eqlevel(E_1\sigma,F_1\sigma)=\ell>k=\eqlevel(E_1,F_1)$.
 By Lemma~\ref{lem:getequation} there is $x_i\in\support(\sigma)$,
 $i\in[1,n]$, and $H\neq x_i$
such that $E_1\gt{w}x_i$ and $F_1\gt{w}H$, or $E_1\gt{w}H$ and
$F_1\gt{w}x_i$, for some $w\in\act^*$ with $|w|\leq k$, where 
	$x_i\sigma\sim_{\ell-k}H\sigma$.
Hence $\sigma\sim_{\ell-k}[x_i/H]\sigma$,
which entails that 
$\sigma\sim_{\ell-k}[x_i/H]^j\sigma$ for all
$j\in\Nat$ (by applying Proposition~\ref{prop:congruence}(2)
repeatedly).
We can also easily check that 
$[x_i/H]^{\ell-k}\sigma\sim_{\ell-k}[x_i/H]^{\omega}\sigma$
(by induction on $\ell-k$), hence
\begin{center}
	$x_i\sigma\sim_{\ell-k}H'\sigma\remxi$ where 
	$H'=H[x_i/H][x_i/H][x_i/H]\cdots$
\end{center}
(We also recall Proposition~\ref{prop:Hremxi}.)
We note that 
\begin{center}
$\pressize(H')\leq\pressize(H)\leq
\max\{\pressize(E_1),\pressize(F_1)\}+k\cdot\stepinc\leq s+e\cdot
\stepinc$
\end{center}
(by using Proposition~\ref{prop:sizeinc}(1)).
For each $j\in[k{+}2,z]$ we now put
\begin{center}
$(E'_j,F'_j)=(E_j[x_i/H'],F_j[x_i/H'])$,
hence $(E'_j\sigma,F'_j\sigma)=(E'_j\sigma\remxi, F'_j\sigma\remxi)$,
\end{center}
and note that
$\eqlevel(E_j\sigma,F_j\sigma)=
\eqlevel(E'_j\sigma\remxi, F'_j\sigma\remxi)$,
since $\eqlevel(E_j\sigma,F_j\sigma)<\ell{-}k$ (for each $j\geq
k{+}2$); here we use that  $\eqlevel(E_j\sigma,E_j[x_i/H']\sigma)\geq
\ell{-}k$ and $\eqlevel(F_j\sigma,F_j[x_i/H']\sigma)\geq
\ell{-}k$, and we recall Proposition~\ref{prop:elkeep}(1).
We also note that for each  $j\in[k{+}2,z]$ we have
\begin{center}
$\pressize(E'_j,F'_j)\leq \pressize(E_j,F_j)+\pressize(H')\leq
s+g\cdot(j{-}1)+s+e\cdot\stepinc=2s+g\cdot(j{-}1)+e\cdot\stepinc$.
\end{center}
Hence $\pressize(E'_{k+2},F'_{k+2})\leq
2s+g\cdot(1+k)+e\cdot\stepinc\leq 2s+g\cdot(1+e)+e\cdot\stepinc=s'$
(recall $s'$ from~(\ref{eq:nextsize})). 
Thus the sequence 
\begin{center}
$(E'_{k{+}2}\sigma\remxi, F'_{k+2}\sigma\remxi)$, 
$(E'_{k{+}3}\sigma\remxi, F'_{k+3}\sigma\remxi)$$\dots$,
$(E'_{z}\sigma\remxi, F'_{z}\sigma\remxi)$
\end{center}
is ``almost'' an
$(n{-}1,s',g)$-sequence.
The only problem is that $x_n$ can occur in $E'_j,F'_j$. 
But we use the fact that $x_i$ does not occur in $E'_j,F'_j$, and we
replace $x_n$ with $x_i$, while replacing $\sigma\remxi$ 
with $\sigma'$ where $x_n\sigma'=x_n$,
$x_i\sigma'=x_n\sigma\remxi$, and $x\sigma'=x\sigma\remxi$ for all
$x\in\var\smallsetminus\{x_i,x_n\}$.

We note that the $(n{-}1,s',g)$-candidate 
$\calB'=\calB\smallsetminus \pairsvar{n}$ is full below
$1+\eqlevel(E'_{k+2}\sigma\remxi,F'_{k+2}\sigma\remxi)$
(since $\calB'$ is full below $1+\eqlevel(E_1\sigma,F_{1}\sigma)$,
and
$\eqlevel(E'_{k+2}\sigma\remxi,F'_{k+2}\sigma\remxi)
=\eqlevel(E_{k+2}\sigma,F_{k+2}\sigma)
<\eqlevel(E_1\sigma, F_1\sigma)$).
By the induction hypothesis $z{-}(k{+}1)\leq \calE_{\calB'}$,
and thus 
$z\leq 1{+}k{+}\calE_{\calB'}\leq 1{+}e{+}\calE_{\calB'}=
\calE_\calB$.
\end{proof}

In the final argument of the proof of Theorem~\ref{th:computingelbound}
(in Section~\ref{sec:finalproof})
we will use $\calE_{\calB_{n,s,g}}$ as $\calE$
in~(\ref{eq:elbound}), for some specific small $n,s,g$.
Though we have defined the 
$(n,s,g)$-base $\calB_{n,s,g}$ only semantically, it will turn out that it coincides 
with an effectively constructible ``sound'' $(n,s,g)$-candidate.
But we first need some further technicalities to clarify the specific  
$n,s,g$ (as well as $c$ in~(\ref{eq:elbound})).

\section{Plays (of Bisimulation Game) and their
Balancing}\label{sec:balance}

In Section~\ref{sec:intro} we discussed the notion of
optimal plays, which we make more precise now.
We assume a given grammar $\calG=(\calN,\act,\calR)$;
for $r\in\calR$ of the form $A(x_1,\dots,x_m)\gt{a}E$
we put $\lab(r)=a$.
For technical convenience, by a \emph{play} we only mean
an optimal play from a non-equivalent pair, i.e., a sequence
\begin{equation}\label{eq:prelimoptplay}
\textnormal{
$\toppair{T_0}{U_0}\bothgt{r_1}{r'_1}\toppair{T_1}{U_1}
\bothgt{r_2}{r'_2}\toppair{T_2}{U_2}\cdots
\bothgt{r_k}{r'_k}\toppair{T_k}{U_k}$
}
\end{equation}
where for each $i\in[1,k]$ we have $r_i,r'_i\in\calR$, 
$\lab(r_i)=\lab(r'_i)$,
$T_{i-1}\gt{r_i}T_i$, $U_{i-1}\gt{r'_i}U_i$ (in the LTS
$\calL^{\ltsrul}_{\calG}$); moreover, 
$\omega>\eqlevel(T_{0},U_{0})$ and 
$\eqlevel(T_{i},U_{i})=\eqlevel(T_{i-1},U_{i-1})-1$ for  each $i\in[1,k]$
(in the LTS $\calL^{\ltsact}_{\calG}$).
If $\eqlevel(T_k,E_k)=0$, then
it is a \emph{completed play}, in which case $\eqlevel(T_0,U_0)=k$.
(We recall that $T_0,
U_0$ can be regular terms of a large size.)
The length of the play~(\ref{eq:prelimoptplay}) is (defined to be)
$k$, and another presentation of the play is
$\toppair{T_0}{U_0}\bothgt{u}{u'}\toppair{T_k}{U_k}$, or also just
$\toppair{T_0}{U_0}\bothgt{u}{u'}$,
where
$u=r_1r_2\cdots r_k$ and $u'=r'_1r'_2\cdots r'_k$.

Our aim is to bound
the lengths of completed plays in the way stated in Theorem~\ref{th:computingelbound}.  
To facilitate this task,
in this section we show a particular transformation of a completed 
play~(\ref{eq:prelimoptplay}) into a sequence of plays of the same
overall length (i.e., the sum of lengths) that
are connected by so-called \emph{eqlevel-concatenation}
$\econc$; such concatenation 
\[
	\left[\toppair{T}{U}\bothgt{u_1}{u'_1}\toppair{T'}{U'}\right]\econc
	\left[\toppair{T''}{U''}\bothgt{u_2}{u'_2}\toppair{T'''}{U'''}\right]
\]
is defined if (and only if) $\eqlevel(T',U')=\eqlevel(T'',U'')$,
though the pairs $(T',U')$ and $(T'',U'')$ can differ.
The overall length of this concatenation is $|u_1|+|u_2|$; if 
$\toppair{T''}{U''}\bothgt{u_2}{u'_2}\toppair{T'''}{U'''}$ is a
completed play, then this length ($|u_1|+|u_2|$) is obviously the same as the length
of any completed play starting with $(T,U)$. 

In the first phase of the mentioned transformation of a completed
play~(\ref{eq:prelimoptplay}) we will replace it with the concatenation 
of two plays in the
form 
\[
\lbrac\toppair{T_0}{U_0}\bothgt{v_0u_1}{v'_0u'_1}
	\toppair{T'_1}{U'_1}\rbrac\econc
	\lbrac\toppair{T''_1}{U''_1}\bothgt{v}{v'}\rbrac
\]
where  $v_0u_1$ is a certain prefix of $u=r_1r_2\cdots r_k$,
 $v'_0u'_1$ is a prefix of $u'=r'_1r'_2\cdots r'_k$ (of the same
 length as $v_0u_1$),
and $\lbrac\toppair{T''_1}{U''_1}\bothgt{v}{v'}\rbrac$ is a completed play
(while $v$ ($v'$) is generally \emph{not} a suffix of $u$ ($u'$)). Further we
replace $\lbrac\toppair{T''_1}{U''_1}\bothgt{v}{v'}\rbrac$ with 
$\lbrac\toppair{T''_1}{U''_1}\bothgt{v_1u_2}{v'_1u'_2}
	\toppair{T'_2}{U'_2}\rbrac\econc
	\lbrac\toppair{T''_2}{U''_2}\bothgt{\bar{v}}{\bar{v}'}\rbrac$
where  $v_1u_2$ is a certain prefix of $v$,
$v'_1u'_2$ is a prefix of $v'$,
and
$\lbrac\toppair{T''_2}{U''_2}\bothgt{\bar{v}}{\bar{v}'}\rbrac$ is a
completed play;
we continue in this way, doing $\ell$ phases for a certain number
$\ell$, until finally getting
\begin{equation}\label{eq:prelimmodifplay}
\textnormal{
	$\lbrac\toppair{T_0}{U_0}\bothgt{v_0}{v'_0}\toppair{\bar{T}_1}{\bar{U}_1}\bothgt{u_1}{u'_1}
	\toppair{T'_1}{U'_1}\rbrac\econc
	\lbrac\toppair{T''_1}{U''_1}\bothgt{v_1}{v'_1}\toppair{\bar{T}_2}{\bar{U}_2}\bothgt{u_2}{u'_2}\toppair{T'_2}{U'_2}\rbrac\econc
	\lbrac\toppair{T''_2}{U''_2}
\bothgt{v_2}{v'_2}
	\cdots\cdots
	\toppair{\bar{T}_\ell}{\bar{U}_\ell}\bothgt{u_\ell}{u'_\ell}
	\toppair{T'_\ell}{U'_\ell}\rbrac\econc
	\lbrac\toppair{T''_\ell}{U''_\ell}
	\bothgt{v_\ell}{v'_\ell}\toppair{\bar{T}_{\ell+1}}{\bar{U}_{\ell+1}}\rbrac$
}
\end{equation}
where $\lbrac\toppair{T''_\ell}{U''_\ell}
	\bothgt{v_\ell}{v'_\ell}\toppair{\bar{T}_{\ell+1}}{\bar{U}_{\ell+1}}\rbrac$ is
	completed and ``non-transformable'';
	the overall length of~(\ref{eq:prelimmodifplay}) is thus equal to
	$k=\eqlevel(T_0,U_0)$.	
In fact, we have $\ell=0$ when already~(\ref{eq:prelimoptplay}) is
non-transformable; we thus put
$(T''_0,U''_0)=(T_0,U_0)$ for convenience.
(Later we repeat~(\ref{eq:prelimmodifplay}) as~(\ref{eq:modifplay})
without the bars in the notation $\bar{T}_j, \bar{U}_j$; now the bars
are added to avoid the confusion with ${T}_j, {U}_j$
in~(\ref{eq:prelimoptplay}).)

More concretely, we will perform the transformation so that
for each phase $j\in[1,\ell]$ we have $|u_j|=d_0$
(for $d_0$ defined by~(\ref{eq:Mzero})), one of the terms 
 $\bar{T}_j, \bar{U}_j$ is the \emph{pivot} $W_j$, and the 
 pair $(T''_j,U''_j)$ is 
the  \emph{balancing
result}, or the \emph{bal-result} for short, \emph{related to the pivot} $W_j$.

In fact, if $W_j=\bar{U}_j$, then we have $U''_j=U'_j$ (and $T''_j\neq
T'_j$); in this case the $j$-th phase consists in replacing the
completed play 
$\lbrac
\toppair{T''_{j-1}}{U''_{j-1}}\bothgt{v}{v'}\rbrac$ with
the eqlevel-concatenation
$\lbrac
\toppair{T''_{j-1}}{U''_{j-1}}\bothgt{v_{j-1}}{v'_{j-1}}\toppair{\bar{T}_j}{\bar{U}_j}\bothgt{u_j}{u'_j}
	\toppair{T'_j}{U'_j}\rbrac\econc\lbrac
	\toppair{T''_j}{U'_j}\bothgt{\bar{v}}{\bar{v}'}\rbrac$
(where 
$\lbrac
\toppair{T''_{j-1}}{U''_{j-1}}\bothgt{v_{j-1}}{v'_{j-1}}\toppair{\bar{T}_j}{\bar{U}_j}\bothgt{u_j}{u'_j}
	\toppair{T'_j}{U'_j}\rbrac$ is a prefix of 
$\lbrac
\toppair{T''_{j-1}}{U''_{j-1}}\bothgt{v}{v'}\rbrac$);
 this is called a \emph{left balancing step} (the
\emph{left}
term in $(T'_j,U'_j)$ has been replaced with $T''_j$ so that
$\eqlevel(T'_j,U'_j)=\eqlevel(T''_j,U'_j)$).
Similarly,  if $W_j=\bar{T}_j$, then we have $T''_j=T'_j$, and we have
performed a  \emph{right balancing step}, replacing 
$U'_j$ with $U''_j$.

We thus have pivots $W_1, W_2,\dots, W_\ell$, each having its related
bal-result.
Since the sequence 
\begin{center}
$(T''_1,U''_1), (T''_2,U''_2), \dots,
(T''_\ell,U''_\ell)$
\end{center}	
of bal-results is eqlevel-decreasing, 
no pair can repeat in the sequence.

We will ``balance'' in a way that will also yield a \emph{pivot path} 
\begin{equation}\label{eq:prelimpivotpath}
W_0\gt{w_0}W_1\gt{w_1}W_2\gt{w_2}\cdots
W_\ell\gt{w_\ell}W_{\ell+1}
\end{equation}
where $w_0\in\calR^*$, $w_j\in\calR^+$ for $j\in[1,\ell]$,
 $W_0\in\{T_0,U_0\}$, $W_{\ell+1}\in\{\bar{T}_{\ell+1},\bar{U}_{\ell+1}\}$, and 
we will guarantee the following properties: 
\begin{enumerate}
\item
There is some small $n$ such that for each $j\in[1,\ell]$ there are
		small finite terms $G,E,F$, with 
$\varin(E,F)\subseteq \varin(G)\subseteq\{x_1,\dots,x_n\}$,
		such that  
		\begin{center}		
		$W_j=G\sigma$ and
		$(T''_{j},U''_j)=(E\sigma,F\sigma)$, 
		\end{center}			
			for a
		substitution $\sigma$ (with
		$\support(\sigma)\subseteq\{x_1,\dots,x_n\}$).
		(Hence the terms in the bal-result arise from the
		pivot $W$ by replacing a small top of $W$ by other
		small tops.)
This is depicted in
Figure~\ref{fig:twoboundonnonsink} for some $W_{j}$ and $W_{j+1}$ (and
in more detail in Figure~\ref{fig:balstep}).
\item
Each pivot-path segment $W_j\gt{w_j}W_{j+1}$ 
(for $j\in[0,\ell]$)
		is either short (i.e., its length is small), 
or it has a short prefix and a short suffix while the middle part
		is ``quickly sinking'' (to a deep
		subterm of $W_j$ if this part is long). 
\end{enumerate}	
We note that 		
we do not exclude that a pivot $W$ occurs more than once in the
pivot path
($W=W_j$ and $W=W_{j'}$ for $j\neq j'$), but the number of its occurrences
must be small; this follows from the point $1$ which
entails that there is only a small number of possible 
bal-results related to one pivot, and from the fact 
that the bal-results cannot repeat.

Figure~\ref{fig:twoboundonnonsink} depicts a ``non-sinking 
segment'' on the pivot path. (In such a segment, no root-successor of
the starting term is exposed.) By the above point $2$ it is
intuitively clear
that any long non-sinking segment must contain a
large number of pivots, and that the possible increase of (the tops
of) the pivots is controlled. 
Hence any long non-sinking segment of the pivot path gives rise to a
long  $(n,s,g)$-sequence, for some small $n,s,g$; here we use the
point $1$ (and recall Figure~\ref{fig:twoboundonnonsink}).
This is a crucial fact for our proof of
Theorem~\ref{th:computingelbound}.

In this section, our task is to show a transformation that guarantees
a suitable pivot path~(\ref{eq:prelimpivotpath})
and the above properties $1$ and $2$.

A concrete way how we do a \emph{left balancing step} is captured by
Figure~\ref{fig:balstep}.
Informally speaking, if the left-hand side
does not sink to a root-successor
within less than $d_0$ moves 
(for $d_0$ defined by~(\ref{eq:Mzero})),
which is the case 
in Figure~\ref{fig:balstep}
due to $A(x_1,\dots,x_m)\gt{u}E'$, then the other side ($U=G\sigma$
in Figure~\ref{fig:balstep}) can become a pivot, and the bal-result
can be created as depicted; the original root-successors in the
left-hand side are replaced 
by suitable terms that are shortly reachable from the pivot, so that 
the respective eq-level does not change
($\eqlevel(E'\sigma',U')=\eqlevel(E'\sigma'',U')$ in
Figure~\ref{fig:balstep}). 
The existence of such a transformation (we claim nothing about its
effectiveness) is clear by 
Propositions~\ref{prop:elkeep} and~\ref{prop:congruence}.
\emph{Right balancing steps} are analogous (they are elligible when
the right-hand side does not sink within less than $d_0$ moves).

We observe that any path $W\gt{v}W'$ can sink to some depth-$|v|$
subterm of $W$ at most, surely not deeper; hence $W'$ arises from $W$
by replacing its ``$|v|$-top'' with another top; the size of these tops 
is small
when $v$ is short.
This observation now easily entails
the above property $1$, guaranteed by our transformation.

To guarantee a suitable pivot path~(\ref{eq:prelimpivotpath}) and its property $2$,
as a first attempt we consider the following procedure in the $j$-th
phase (of the transformation of~(\ref{eq:prelimoptplay})
into~(\ref{eq:prelimmodifplay})): when we are about to replace the completed play
$\lbrac
\toppair{T''_{j-1}}{U''_{j-1}}\bothgt{v}{v'}\rbrac$, we use its shortest
prefix of the form
$\lbrac
\toppair{T''_{j-1}}{U''_{j-1}}\bothgt{v_{j-1}}{v'_{j-1}}\toppair{\bar{T}_j}{\bar{U}_j}\bothgt{u_j}{u'_j}
	\toppair{T'_j}{U'_j}\rbrac$ where 
$\lbrac
\toppair{\bar{T}_j}{\bar{U}_j}\bothgt{u_j}{u'_j}
	\toppair{T'_j}{U'_j}\rbrac$
	enables a (left or right) balancing step (i.e., some side
	does not sink to a root successor within less than $d_0$
	moves).

But doing this balancing as suggested 
would complicate our task of creating a suitable
pivot path~(\ref{eq:prelimpivotpath}), as we now discuss.
First we note that we can smoothly define $W_0\gt{w_0}W_{1}$:
it is $U_0\gt{v_0}U_1$
if $W_1=U_1$, and $T_0\gt{v_0}T_1$ if $W_1=T_1$.
Similarly we define $W_\ell\gt{w_\ell}W_{\ell+1}$ as 
 $\bar{U}_{\ell}\gt{u'_\ell v'_\ell}\bar{U}_{\ell+1}$
 if $W_\ell=\bar{U}_\ell$, and as 
 $\bar{T}_{\ell}\gt{u_\ell v_\ell}\bar{T}_{\ell+1}$
 if $W_\ell=\bar{T}_\ell$.
If in the consecutive phases $j$ and $j{+}1$ we have the
pivot on the same side, say $W_j=\bar{U}_j$ and
$W_{j+1}=\bar{U}_{j+1}$, then we have no problem either: we define
$W_j\gt{w_j}W_{j+1}$ simply as $\bar{U}_j\gt{u'_jv'_j}\bar{U}_{j+1}$
(which is legal since $U'_j=U''_j$). 

A problem to define $W_j\gt{w_j}W_{j+1}$ arises when there is a switch
of balancing sides. Hence we add a simple condition to be satisfied when
such a switch is allowed to occur.
Suppose $W_j=U_j$, and let
Figure~\ref{fig:balstep} describe the respective left balancing step. 
In the $(j{+}1)$-th phase of the transformation we have
\begin{center}
	$\lbrac\toppair{T_0}{U_0}\bothgt{v_0}{v'_0}\toppair{\bar{T}_1}{\bar{U}_1}\bothgt{u_1}{u'_1}
	\toppair{T'_1}{U'_1}\rbrac\econc\cdots\econc
	\lbrac\toppair{T''_{j-1}}{U''_{j-1}}\bothgt{v_{j-1}}{v'_{j-1}}\toppair{\bar{T}_j}{W_j}\bothgt{u_j}{u'_j}\toppair{E'\sigma'}{U'_j}\rbrac\econc
	\lbrac\toppair{E'\sigma''}{U'_{j}}
\bothgt{v}{v'}\rbrac$
\end{center}
and we are about to replace the (current) completed play 
$\lbrac\toppair{E'\sigma''}{U'_{j}}\bothgt{v}{v'}\rbrac$.
We would prefer to do another left balancing, ideally for a short
prefix of this completed play. This is not possible only if 
the path $E'\sigma''\gt{v}$ is quickly sinking in the beginning, i.e.,
within each segment of length $d_0$ a root-successor of the term
starting the segment is exposed;
the path  $E'\sigma''\gt{v}$ thus has a short prefix 
$E'\sigma''\gt{v_{j1}}x_i\sigma''$ for some $x_i$ (since $E'$ is
a small finite term and the path sinks along one of its branches).
But $x_i\sigma''$ is reachable from the last pivot $W_j$ ($W_j=U_j$)
by a short word $\bar{v}$ (e.g., if $x_i\sigma''=V_2$ in
Figure~\ref{fig:balstep}, then we use the path $U\gt{\bar{v}_2}V_2$).
Hence only after such a short prefix 
$\lbrac\toppair{E'\sigma''}{U'_{j}}\bothgt{v_{j1}}{v'_{j1}}\toppair{x_i\sigma''}{\overline{U}}\rbrac$
we allow to balance on both sides (if a left balancing is not possible
earlier). 

If a switch of balancing sides indeed happens in our discussed case, then we can write the
$j$-th and the $(j{+}1)$-th play in the
sequence~(\ref{eq:prelimmodifplay}) in the form
\begin{center}
$\lbrac\toppair{T''_{j-1}}{U''_{j-1}}\bothgt{v_{j-1}}{v'_{j-1}}\toppair{\bar{T}_j}{W_j}\bothgt{u_j}{u'_j}\toppair{E'\sigma'}{U'_j}\rbrac\econc
	\lbrac\toppair{E'\sigma''}{U'_{j}}
	\bothgt{v_{j1}}{v'_{j1}}\toppair{x_i\sigma''}{\overline{U}}
		\bothgt{v_{j2}}{v'_{j2}}\toppair{W_{j+1}}{\bar{U}_{j+1}}\bothgt{u_{j+1}}{u'_{j+1}}\toppair{T'_{j+1}}{U'_{j+1}}\rbrac$
\end{center}
and define $W_j\gt{w_j}W_{j+1}$ as 
$W_j\gt{\bar{v}}x_i\sigma''\gt{v_{j_2}}W_{j+1}$ (where $\bar{v}$ is 
shorter than the short word $u_jv_{j1}$ but this does not matter).
To summarize: for the consecutive phases $j$ and $j{+}1$ where the
$j$-th phase is a left-balancing step captured by
Figure~\ref{fig:balstep}, we get 
\begin{center}
$\lbrac\toppair{T''_{j-1}}{U''_{j-1}}\bothgt{v_{j-1}}{v'_{j-1}}\toppair{\bar{T}_j}{W_j}\bothgt{u_j}{u'_j}\toppair{E'\sigma'}{U'_j}\rbrac\econc
	\lbrac\toppair{E'\sigma''}{U'_{j}}
	\bothgt{v_{j}}{v'_{j}}\toppair{\bar{T}_{j+1}}{\bar{U}_{j+1}}\bothgt{u_{j+1}}{u'_{j+1}}\toppair{T'_{j+1}}{U'_{j+1}}\rbrac$
\end{center}
where either $v_j$ is short and $W_{j+1}=\bar{U}_{j+1}$ or we can
write $v_j=v_{j1}v_{j2}$ where $v_{j1}$ is short and 
we have $E'\sigma''\gt{v_{j1}}x_i\sigma''$. In the latter case we can
write
\begin{equation}\label{eq:unclearandsink}
\lbrac\toppair{E'\sigma''}{U'_{j}}
	\bothgt{v_{j}}{v'_{j}}\toppair{\bar{T}_{j+1}}{\bar{U}_{j+1}}\bothgt{u_{j+1}}{u'_{j+1}}\toppair{T'_{j+1}}{U'_{j+1}}\rbrac=
\lbrac\toppair{E'\sigma''}{U'_{j}}
	\bothgt{v_{j1}}{v'_{j1}}\toppair{x_i\sigma''}{\overline{U}}
		\bothgt{v_{j2}}{v'_{j2}}\toppair{\bar{T}_{j+1}}{\bar{U}_{j+1}}\bothgt{u_{j+1}}{u'_{j+1}}\toppair{T'_{j+1}}{U'_{j+1}}\rbrac
\end{equation}	
where $W_{j+1}\in\{\bar{T}_{j+1},\bar{U}_{j+1}\}$, and 
both paths $x_i\sigma''\gt{v_{j2}}\bar{T}_{j+1}$ and
$\bar{U}\gt{v'_{j2}}\bar{U}_{j+1}$ (that may be long) are quickly
sinking (since there is no balancing possibility there).

Hence the above property $2$ of the pivot path is also clear
(including the case $W_{j+1}=\bar{U}_{j+1}$, where $W_j\gt{w_j}W_{j+1}$
is $\bar{U}_j\gt{u'_jv'_{j1}v'_{j2}}\bar{U}_{j+1}$).

Now we define the described transformation in a more formal way.

\begin{figure}[t]
\centering
\includegraphics[scale=0.6]{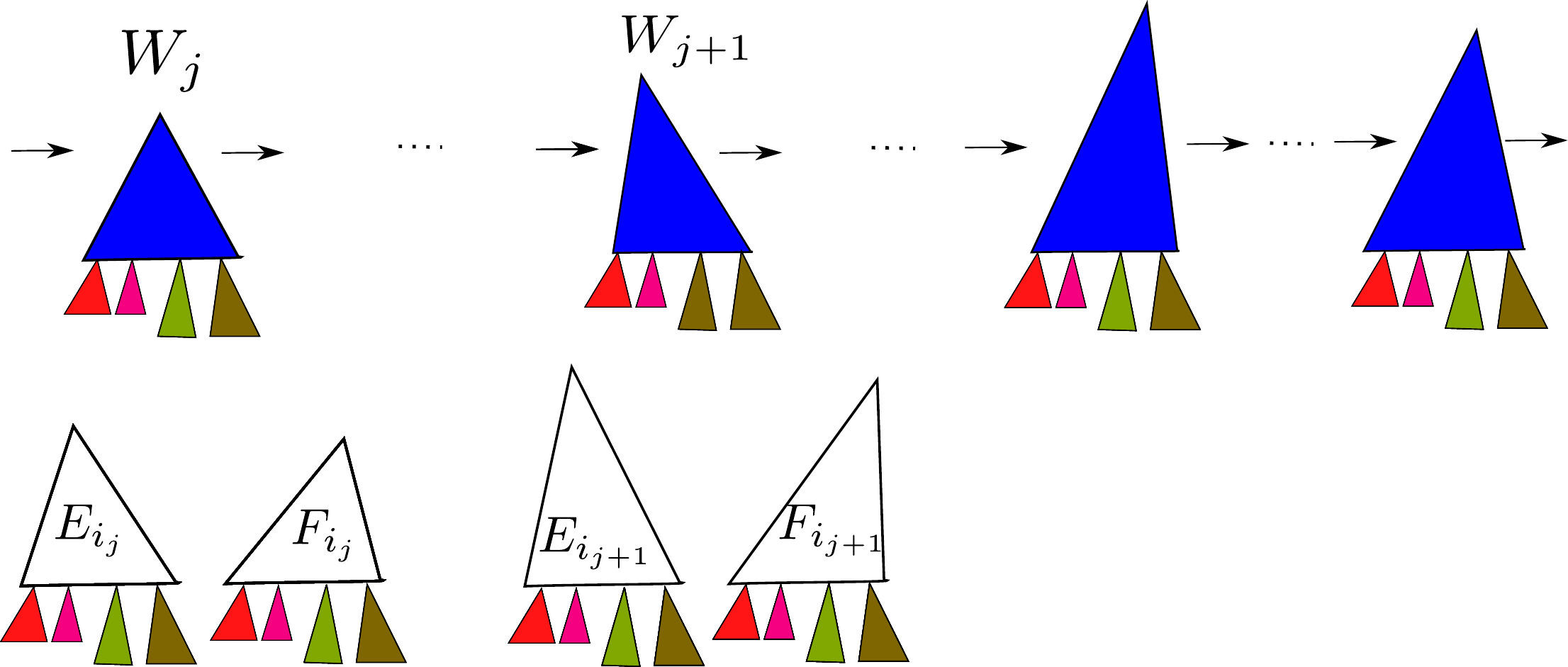}
	\caption{Non-sinking segment on the pivot path gives rise to
	an $(n,s,g)$-sequence}\label{fig:twoboundonnonsink}
\end{figure}

\subparagraph*{Modified optimal plays, and their eqlevel-concatenation.}
We still assume a fixed grammar $\calG=(\calN,\act,\calR)$.
Now we let 
$u,v,w$ (with subscripts etc.) range over $\calR^*$ (not over $\act^*$);
hence $E\gt{w}F$ determines one path in the LTS
$\calL^{\ltsrul}_\calG$. For $r\in\calR$ of the form $A(x_1,\dots,x_m)\gt{a}E$
we put $\lab(r)=a$;
this is extended to the
respective homomorphism $\lab:\calR^*\rightarrow\act^*$.

An \emph{optimal play}, or just a \emph{play} for short,
is a sequence 
\begin{center}
$(T_0,U_0)(r_1,r'_1)(T_1,U_1)(r_2,r'_2)(T_2,U_2)\cdots
(r_k,r'_k)(T_k,U_k)$,
\end{center}
denoted as 
\begin{equation}\label{eq:optplay}
\textnormal{
$\toppair{T_0}{U_0}\bothgt{r_1}{r'_1}\toppair{T_1}{U_1}
\bothgt{r_2}{r'_2}\toppair{T_2}{U_2}\cdots
\bothgt{r_k}{r'_k}\toppair{T_k}{U_k}$\,,
}
\end{equation}
where 
$T_0\not\sim U_0$ and for each $j\in[1,k]$ we have
$T_{j-1}\gt{r_j}T_j$,  $U_{j-1}\gt{r'_j}U_j$,
$\lab(r_j)=\lab(r'_j)$, and $\eqlevel(T_j,U_j)=\eqlevel(T_{j-1},U_{j-1})-1$.
It is clear (by Proposition~\ref{prop:elkeep}(2,3))
that for any $T_0\not\sim U_0$ 
there is a play of the form~(\ref{eq:optplay}) such that 
$k=\eqlevel(T_0,U_0)$ (and $\eqlevel(T_k,U_k)=0$).

A play $\mu$ of the form~(\ref{eq:optplay})
is a \emph{play from
$\startofplay(\mu)=(T_0,U_0)$ to $\finishofplay(\mu)=(T_k,U_k)$}, 
and is also written as $\toppair{T_0}{U_0}\bothgt{u}{u'}\toppair{T_k}{U_k}$,
or just as $\toppair{T_0}{U_0}\bothgt{u}{u'}$,
where
$u=r_1r_2\cdots r_k$ and $u'=r'_1r'_2\cdots r'_k$;
we put 
$\length(\mu)=k$ and $\pairs(\mu)=\{(T_i,U_i)\mid i\in[0,k]\}$.
We also consider the trivial plays of the form $(T_0,U_0)$ with the length $k=0$ (for
$T_0\not\sim U_0$).
A play~(\ref{eq:optplay}) 
is a 
\emph{completed play} if $\eqlevel(T_k,U_k)=0$. 

The \emph{standard concatenation} $\mu\nu$ of plays
$\mu = \toppair{T}{U}\bothgt{u}{u'}\toppair{T'}{U'}$
and $\nu=\toppair{T''}{U''}\bothgt{v}{v'}\toppair{T'''}{U'''}$ is
defined if (and only if) $(T',U')=(T'',U'')$; in this case $\mu\nu$
is the play $\toppair{T}{U}\bothgt{uv}{u'v'}\toppair{T'''}{U'''}$
(hence $\finishofplay(\mu)$ and $\startofplay(\nu)$ get merged).

We aim to show a bound of the form~(\ref{eq:elbound}) 
on the lengths of completed plays from $(T,U)$.
The use of  $(n,s,g)$-sequences,
bounded by Lemma~\ref{lem:ELdecreasbound},
will become clear after we introduce a special modification of plays.
Generally, 
\begin{center}
a \emph{modified play} $\pi$ is a sequence of plays
$\mu_1,\mu_2,\ldots,\mu_\ell$ ($\ell\geq 1$)
\end{center}
where for each $j\in[1,\ell{-}1]$ we have 
$\eqlevel(\finishofplay(\mu_j))=\eqlevel(\startofplay(\mu_{j+1}))$
but $\finishofplay(\mu_j)\neq \startofplay(\mu_{j+1})$;
it is a \emph{modified play from}
$\startofplay(\pi)=\startofplay(\mu_1)$
\emph{to} $\finishofplay(\pi)=\finishofplay(\mu_\ell)$, and it is
a \emph{completed modified play} if
$\eqlevel(\finishofplay(\mu_\ell))=0$.
(As expected, if $\finishofplay(\mu)=(T,U)$, then by 
$\eqlevel(\finishofplay(\mu))$ we refer to the eq-level 
$\eqlevel(T,U)$; similarly in the other cases.)

We put $\length(\pi)=\sum_{j\in[1,\ell]}\length(\mu_j)$, and 
$\pairs(\pi)=\bigcup_{j\in[1,\ell]}\pairs(\mu_j)$.
We do not consider peculiar modified plays 
where $\finishofplay(\mu_j)=\startofplay(\mu_{j+p})$ for $p\geq 2$,
in which case $\mu_{j+1},\mu_{j+2},\cdots,\mu_{j+p-1}$ are
zero-length plays; we implicitly deem the modified plays to be \emph{normalized} 
by (repeated) replacing such segments
$\mu_j,\mu_{j+1},\cdots,\mu_{j+p-1},\mu_{j+p}$
with $\mu_j\mu_{j+p}$. 
E.g., a modified play $\mu_1,\mu_2,\mu_3$ of
the form
$\toppair{T_0}{U_0}\bothgt{u_1}{u'_1}\toppair{T}{U},\toppair{T'}{U'},
\toppair{T}{U}\bothgt{u_2}{u'_2}\toppair{T''}{U''}$ (where
$\eqlevel(T,U)=\eqlevel(T',U')$) is replaced with
$\mu_1\mu_3=\toppair{T_0}{U_0}\bothgt{u_1u_2}{u'_1u'_2}\toppair{T''}{U''}$.

\begin{proposition}\label{prop:samelengthmodplays}
For any $T\not\sim U$ there is a completed play from $(T,U)$,
 and we have 
 $\length(\pi)=\eqlevel(T,U)$ for
each completed modified play $\pi$ from $(T,U)$;
moreover, no pair 
can appear at 
two different positions in $\pi$ (we thus have no repeat of a pair in
$\pi$).
\end{proposition}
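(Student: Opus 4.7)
The plan is to address the three assertions separately, building on the machinery of Proposition \ref{prop:elkeep}.

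\emph{Existence.} Given $T\not\sim U$ with $\eqlevel(T,U)=k\in\Nat$, I would build a completed play of length $k$ by $k$ iterations. At a generic intermediate pair $(T_i,U_i)$ with $\eqlevel(T_i,U_i)=k-i>0$, Proposition~\ref{prop:elkeep}(2) yields a transition, say $T_i\gt{r_{i+1}}T_{i+1}$, such that every matching-label response $U_i\gt{r'_{i+1}}U_{i+1}$ satisfies $\eqlevel(T_{i+1},U_{i+1})\leq k-i-1$; Proposition~\ref{prop:elkeep}(3) further ensures that some such response attains equality (otherwise $\eqlevel(T_i,U_i)$ would exceed $k-i$), and that is the response we pick. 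Iterating $k$ times produces a completed play.

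\emph{Length equality.} For any completed modified play $\pi=\mu_1,\mu_2,\dots,\mu_\ell$ from $(T,U)$, optimality of each component play forces the eq-level to drop by exactly $\length(\mu_j)$ from $\startofplay(\mu_j)$ to $\finishofplay(\mu_j)$, while the definition of modified play enforces $\eqlevel(\finishofplay(\mu_j))=\eqlevel(\startofplay(\mu_{j+1}))$ at each junction. A telescoping sum then yields
\[
\length(\pi)=\sum_{j=1}^{\ell}\length(\mu_j)=\eqlevel(\startofplay(\mu_1))-\eqlevel(\finishofplay(\mu_\ell))=\eqlevel(T,U)-0=\eqlevel(T,U).
\]

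\emph{No repeats.} This is the more delicate point, and will hinge on tracking eq-levels of all pairs occurring in $\pi$. Writing $k_j=\eqlevel(\finishofplay(\mu_j))$ (and $k_0=\eqlevel(T,U)$), the pairs of $\mu_j$ realize precisely the eq-levels in $[k_j,k_{j-1}]$, each exactly once, so no repetition can occur inside a single component play. Suppose a pair reoccurs in two different plays $\mu_j$ and $\mu_{j'}$ with $j<j'$; equality forces the same eq-level $\ell$, so $\ell$ lies in both intervals $[k_j,k_{j-1}]$ and $[k_{j'},k_{j'-1}]$. Since the junction eq-levels $k_0\geq k_1\geq\cdots$ are weakly decreasing, this forces $\ell=k_j=k_{j+1}=\cdots=k_{j'-1}$, which identifies the repeated pair with both $\finishofplay(\mu_j)$ and $\startofplay(\mu_{j'})$. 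For $j'=j+1$ this contradicts the definitional requirement $\finishofplay(\mu_j)\neq\startofplay(\mu_{j+1})$, and for $j'-j\geq 2$ it is precisely the pattern that the normalization convention excludes.

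The main obstacle is the interaction between normalization and possible zero-length intermediate component plays, which could in principle cause the junction sequence $(k_j)$ to stall; but the analysis above reduces any would-be repeat to an equality $\finishofplay(\mu_j)=\startofplay(\mu_{j+p})$ for some $p\geq 1$, and both the definitional constraint (for $p=1$) and the normalization constraint (for $p\geq 2$) uniformly rule it out.
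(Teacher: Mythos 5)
Your proof is correct and follows essentially the same approach as the paper's: eq-levels drop by one within each component play, match at the junctions, and the junction pairs are forced to be distinct by the definition of modified play together with the implicit normalization, which is exactly what precludes a repeat. The paper states this very tersely; you merely spell out the telescoping length argument and the interval/eq-level bookkeeping that reduces any would-be repeat to the excluded equality $\finishofplay(\mu_j)=\startofplay(\mu_{j+p})$.
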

\begin{proof}
The eq-levels of pairs in
$\pi=\mu_1,\mu_2,\dots,\mu_\ell$ are dropping in each $\mu_j$;
we have
$\eqlevel(\finishofplay(\mu_j))=\eqlevel(\startofplay(\mu_{j+1}))$
but  $\finishofplay(\mu_j)\neq\startofplay(\mu_{j+p})$ for $p\geq 1$
by definition (which includes the normalization).
\end{proof}

We also define a partial operation on the set of modified plays that
is called 
the  \emph{eqlevel-concatenation} and denoted by $\econc$\,. For
modified plays $\pi=\mu_1,\mu_2,\ldots,\mu_k$ and
$\rho=\nu_1,\nu_2,\ldots,\nu_\ell$, the eqlevel-concatenation $\pi\econc\rho$ is defined if (and only if) 
$\eqlevel(\finishofplay(\pi))=\eqlevel(\startofplay(\rho))$; we recall 
that $\finishofplay(\pi)=\finishofplay(\mu_k)$ and
$\startofplay(\rho)=\startofplay(\nu_1)$.
Suppose that $\pi\econc\rho$, in the above notation, is defined.
If $\finishofplay(\mu_k)\neq \startofplay(\nu_1)$, then 
$\pi\econc\rho= \mu_1,\mu_2,\ldots,\mu_k,
\nu_1,\nu_2,\ldots,\nu_\ell$; if 
$\finishofplay(\mu_k)=\startofplay(\nu_1)$, then
$\pi\econc\rho=
\mu_1,\mu_2,\ldots,\mu_{k-1},\mu_k\nu_1,\nu_2,\nu_3,\ldots,\nu_\ell$.
(We implicitly assume a normalization in the end, if necessary; but 
this will not be needed in our concrete cases.)

We note that the operation $\econc$ is 
associative.

In what follows, by writing the expression 
 $\pi\econc\rho$ for modified plays $\pi,\rho$
 we implicitly claim that $\pi\econc\rho$ is defined (and we refer to the
 resulting modified play $\pi\econc\rho$).
By writing $\pi\rho$ we implicitly claim that  
$\finishofplay(\pi)=\startofplay(\rho)$, and $\pi\rho$ refers to the
modified play $\pi\econc\rho$.

We now show a particular modification of plays, a first step towards
creating $(n,s,g)$-sequences.
In this process we will frequently replace a (sub)play of the type
 $\rho=\toppair{T}{U}\bothgt{u}{u'}\toppair{T'}{U'}$ 
 with a modified play
 $\rho'=\toppair{T}{U}\bothgt{u}{u'}\toppair{T'}{U'}\econc\toppair{T''}{U'}$ 
that has the same length by definition; essentially it means that we
have replaced $T'$ with $T''$ while guaranteeing that 
$\eqlevel(T',U')=\eqlevel(T'',U')$.

\begin{figure}[t]
	\begin{center}
\includegraphics[scale=0.50]{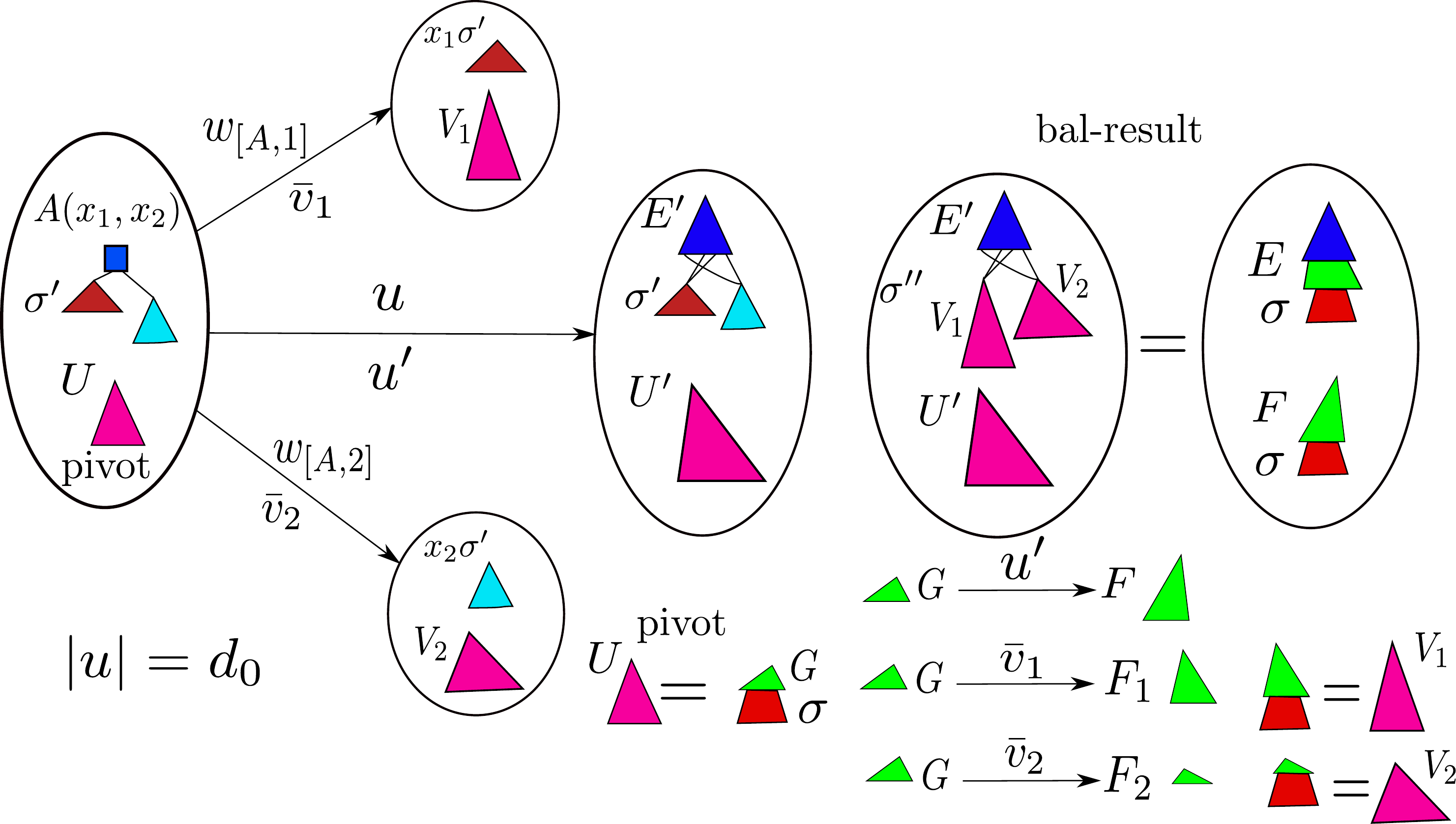}
\end{center}
\hspace*{3em}$\toppair{A(x_1,\dots,x_m)\sigma'}{}\bothgt{w_{[A,i]}}{}\toppair{x_i\sigma'}{}$
{\small (or $w_{[A,i]}$ does not exist)}
	
\hspace*{0.5em}$\rho\,=\,\toppair{A(x_1,\dots,x_m)\sigma'}{U}\hspace*{0.8em}\bothgt{u}{u'}
\hspace*{0.8em}\toppair{E'\sigma'}{U'}$
\hspace*{3em}
$\rho'\,=\,\toppair{A(x_1,\dots,x_m)\sigma'}{U}\bothgt{u}{u'}\toppair{E'\sigma'}{U'}\econc
\toppair{E'\sigma''}{U'}$
\hspace*{1em} {\small where $x_i\sigma''=V_i$}
\vspace*{0.5em}

\hspace*{7em}$\toppair{U}{}\bothgt{\bar{v}_i}{}\toppair{V_i}{}$
{\small ($V_i=U$ when $w_{[A,i]}$ does not exist)}
\caption{Balancing step
$\rho\der_L\rho'$ ($|u|=d_0$, $i$ ranges over $[1,m]$, $\eqlevel(x_i\sigma',V_i)>
\eqlevel(E'\sigma',U')$)}\label{fig:balstep}
\end{figure}

\subparagraph*{Balancing steps, their pivots and balanced results.}
Informally speaking,  
a play $\toppair{T}{U}\bothgt{u}{u'}\toppair{T'}{U'}$
enables left balancing if 
$T\gt{u}T'$ misses the opportunity to sink to a root-successor
as quickly as possible (recall $w_{[A,i]}$ and 
$d_0$ defined around~(\ref{eq:Mzero})). 
A left balancing is illustrated in Figure~\ref{fig:balstep}
(in both a pictorial and a textual form).
We start with a simple example, 
and only then we give a formal 
definition. 

Let us consider a play of the form
\begin{center}
$\toppair{T=A(G_1,G_2)}
{U}\bothgt{r_1}{r_1'}\toppair{B(C(G_2,G_1),G_1)}{U_1}\bothgt{r_2}{r_2'}
\toppair{B'(G_1,C(G_2,G_1))=T'}{U'}
$
\end{center}
where $r_1$ is $A(x_1,x_2)\gt{a_1}B(C(x_2,x_1),x_1)$,
and $r_2$ is  $B(x_1,x_2)\gt{a_2}B'(x_2,x_1)$.
Let $r_3$ be $A(x_1,x_2)\gt{a_3}x_1$, hence we also have 
$A(G_1,G_2)\gt{a_3}G_1$. (Therefore the path $T\gt{r_1r_2}T'$ clearly 
missed the
opportunity to sink to $G_1$ as quickly as possible.)
Since $T\gt{a_3}G_1$,
there must be a transition $U\gt{a_3}V_1$, generated by a rule $r'_3$, such that 
 $\eqlevel(G_1,V_1)\geq \eqlevel(T,U)-1$ 
 (by Proposition~\ref{prop:elkeep}(3)); hence 
 $\eqlevel(G_1,V_1)>\eqlevel(T',U')$
 (since $\eqlevel(T',U')=\eqlevel(T,U)-2$ by
the definition of plays).
In $T'=B'(G_1,C(G_2,G_1))$ we can thus replace $G_1$ with $V_1$ without affecting
$\eqlevel(T',U')$; indeed, we have
$\eqlevel(T',B'(V_1,C(G_2,V_1))\geq\eqlevel(G_1,V_1)$
(using Proposition~\ref{prop:congruence}(2)), 
and $\eqlevel(G_1,V_1)>\eqlevel(T',U')$ thus entails that
 $\eqlevel(B'(V_1,C(G_2,V_1)),U')=\eqlevel(T',U')$
(by Proposition~\ref{prop:elkeep}(1)).
If also $G_2$ can be reached from $A(G_1,G_2)$ in less than two
steps, we similarly get $V_2$, where $U\gt{r'_4}V_2$ for some $r'_4$, so that
$\eqlevel(B'(V_1,C(V_2,V_1)),U')=\eqlevel(T',U')$; hence 
\begin{center}
$\toppair{A(G_1,G_2)}
{U}\bothgt{r_1r_2}{r_1'r'_2}\toppair{T'}{U'}\econc
\toppair{B'(V_1,C(V_2,V_1))}{U'}
$
\end{center}
is a well-defined modified play in this case.
Here $U$ is the ``pivot'', and we note that $U',V_1,V_2$ are all reachable
from $U$ in at most two steps. Hence if we present $U$
in a ``$2$-top
form'', say $U=G\sigma$ where 
$G=A_0(A_{1}(x_1,x_2),A_{2}(x_3,x_4))$, then 
we have $U'=F\sigma$, $V_1=F_1\sigma$, $V_2=F_2\sigma$ where 
$G\gt{r'_1r'_2}F$,  
$G\gt{r'_3}F_1$, $G\gt{r'_4}F_2$. 
Now the ``bal-result'' $(T'',U')=(B'(V_1,C(V_2,V_1)),U')$ can be
presented as $(E\sigma,F\sigma)$ where $E=B'(F_1,C(F_2,F_1))$; we note 
that in $U=G\sigma$ the top $G$ is small, hence also $E,F$ 
are small, while the terms $x\sigma$ might be large. We now formalize
(and generalize) the observation that has been exemplified.

\medskip

We again consider a fixed general grammar 
$\calG=(\calN,\act,\calR)$, and the numbers $m$ (\ref{eq:marity}) and $d_0$
(\ref{eq:Mzero}). We say 
that 
a \emph{play} $\rho=\toppair{T}{U}\bothgt{u}{u'}\toppair{T'}{U'}$
\emph{enables $L$-balancing} if 
$|u|=d_0$ (hence also $|u'|=d_0$) 
 and
$T\gt{u}T'$ is root-performable, i.e.,
$T=A(x_1,\dots,x_m)\sigma'$, $A(x_1,\dots,x_m)\gt{u}E'$,
and thus $T'=E'\sigma'$ (where $A\in\calN$, $E'\in\trees_\calN$,
$\varin(E')\subseteq\{x_1,\dots,x_m\}$).
We can thus write 
\begin{center}
$\rho\,=\,\toppair{T}{U}\bothgt{u}{u'}\toppair{T'}{U'}
\,=\,\toppair{A(x_1,\dots,x_m)\sigma'}{U}\bothgt{u}{u'}\toppair{E'\sigma'}{U'}$\,.
\end{center}
(We have not excluded that $E'=x_i$ for some $i\in[1,m]$.)

In the described case, in $T'$ we can replace each occurrence
of a root-successor of $T$ (which is $x_i\sigma'$ for $i\in[1,m]$)
with a term that is 
shortly reachable from $U$ so that 
$\eqlevel(T',U')$ is unaffected by this replacement; we now make this
claim more
precise, referring again to the illustration in Figure~\ref{fig:balstep}.

Suppose $A(x_1,\dots,x_m)\gt{w_{[A,i]}}x_i$ (cf. the definitions around~(\ref{eq:Mzero})), hence
$T\gt{w_{[A,i]}}x_i\sigma'$; since 
$\eqlevel(T,U)= d_0+\eqlevel(T',U')\geq d_0$ and $|w_{[A,i]}|<d_0$, 
there must be $\bar{v}_i\in\calR^+$ and a term $V_i$ such that
$|\bar{v}_i|=|w_{[A,i]}|$,
$\lab(\bar{v}_i)=\lab(w_{[A,i]})$, $U\gt{\bar{v}_i}V_i$, 
and $\eqlevel(x_i\sigma',V_i)\geq
\eqlevel(T,U)-|w_{[A,i]}|>\eqlevel(T,U)-d_0=
\eqlevel(T',U')$ (we use Proposition~\ref{prop:elkeep}(3)).
We can thus reason for all $i\in[1,m]$.
If there is no $w_{[A,i]}$ for some $i\in[1,m]$, then $x_i\sigma'$ is
not ``exposable'' in $T=A(x_1,\dots,x_m)\sigma'$, hence 
not in $T'=E'\sigma'$ either, and
$x_i\sigma'$ can
be replaced by any term without changing the equivalence class of $T'$;
in this case we put $V_i=U$, thus having $U\gt{\varepsilon} V_i$.
Therefore $\eqlevel(E'\sigma',U')=\eqlevel(E'\sigma'',U')$ where 
$x_i\sigma''=V_i$ for all $i\in[1,m]$ 
(by using 
Propositions~\ref{prop:congruence}(2) and~\ref{prop:elkeep}(1)).

Hence for a play $\rho=\toppair{T}{U}\bothgt{u}{u'}\toppair{T'}{U'}$
in the above notation we can soundly define
an \emph{$L$-balancing step}
$\rho\der_L\rho'$
where $\rho'$ is a
modified play
$\rho'=\rho\econc(E'\sigma'',U')$, depicted in Fig.\ref{fig:balstep}.
For such an $L$-balancing step  $\rho\der_L\rho'$,
the term $U$ is called the \emph{pivot} 
and the pair $(E'\sigma'',U')$ is called the \emph{bal-result}. 

An  \emph{$R$-balancing step} $\rho\der_R\rho'$ is defined symmetrically:
if in $\rho=\toppair{T}{U}\bothgt{u}{u'}\toppair{T'}{U'}$ we have 
$|u|=|u'|=d_0$ and
$U\gt{u'}U'$  is root-performable, 
and presented as $A(x_1,\dots,x_m)\sigma'\gt{u'}F'\sigma'$,
then we can soundly define 
\begin{center}
$\toppair{T}{A(x_1,\dots,x_m)\sigma'}\bothgt{u}{u'}\toppair{T'}{F'\sigma'}\,\der_R\,
\toppair{T}{A(x_1,\dots,x_m)\sigma'}\bothgt{u}{u'}\toppair{T'}{F'\sigma'}
\econc
\toppair{T'}{F'\sigma''}$\,;
\end{center}
here $T$ is the \emph{pivot} and $(T',F'\sigma'')$ is the
\emph{bal-result}.

\subparagraph*{Relation of the tops of the pivot and of the bal-result.}
We now look in more detail at the fact that the pivot of a balancing
step and the respective bal-result can be written $G\sigma$ and
$(E\sigma,F\sigma)$ for specifically related small ``tops'' $G,E,F$
(as is also depicted in Figure~\ref{fig:balstep}).

We say that a finite term $G$ is a \emph{$p$-top}, for $p\in\Natpos$,
if $\height(G)\leq p$, each depth-$p$ subterm is a variable,
and $\varin(G)=\{x_1,\dots,x_n\}$ for some $n\in\Nat$; hence $n\leq m^p$ 
(for $m$ being
the maximum arity of nonterminals~(\ref{eq:marity})).

We note that each term $W$ has a \emph{$p$-top form} $G\sigma$, i.e., 
$W=G\sigma$, $G$ is a $p$-top, $\support(\sigma)\subseteq\varin(G)$,
and
we have $x\sigma\in\var$ for each $x$
occurring in $G$ in depth less than $p$.
(Only a branch of $W$ that finishes with a variable in depth less than
$p$ gives rise to such a branch in $G$.)
E.g., a $2$-top form of $A(B(x_{9},C(x_3,x_6)),x_9)$
is $G\sigma$ where $G=A(B(x_{1},x_2),x_3)$ and 
$\sigma=[x_1/x_9, x_2/C(x_3,x_6), x_3/x_9]$; another $2$-top form of
this term is $G'\sigma'$ where $G'=A(B(x_{1},x_2),x_1)$ and 
$\sigma'=[x_1/x_9, x_2/C(x_3,x_6)]$.
(We could strengthen the
definition to get the unique $p$-top form to each term, but this is not
necessary.) 

We say that $G\sigma$ is a \emph{$p$-safe form of} $W$ if $W=G\sigma$ and 
$W\gt{v}$, $|v|\leq p$, implies $G\gt{v}$
(i.e., each word $v\in\calR^*$ of length at most $p$ that is performable from $W$ is
also performable from $G$).
We easily observe that each $p$-top form $G\sigma$ of $W$ is also
a $p$-safe form of $W$.

The next proposition follows immediately from the definition of balancing
steps.

\begin{proposition}\label{prop:relpivottobalres}
Let $W$ be the pivot and $(T'',U'')$ the bal-result of an
$L$-balancing step. 
Then for any $d_0$-safe form $G\sigma$ of $W$ we have
$(T'',U'')=(E\sigma,F\sigma)$ where 
\begin{itemize}
	\item
	$G\gt{u'}F$  for some $u'\in\calR^+$, $|u'|=d_0$; 
	\item
		$E=E'\overline{\sigma}$ where
		$A(x_1,\dots,x_m)\gt{u}E'$ for some $A\in\calN$,
		$u\in\calR^+$, $|u|=d_0$, and for all $i\in[1,m]$ we
		have
		$G\gt{\overline{v}_i}F_i$ where $F_i=x_i\overline{\sigma}$, for some
		$\overline{v}_i$, $|\overline{v}_i|<d_0$ (hence
		$T''=E\sigma=E'\overline{\sigma}\sigma=E'\sigma''$ 
		where 
		$W\gt{\overline{v}_i}x_i\sigma''$,
		for all
		$i\in[1,m]$).
\end{itemize}
A symmetric claim holds if $W$, $(T'',U'')$ correspond to an $R$-balancing
step.
\end{proposition}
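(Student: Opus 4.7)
The plan is to read off the required form of $(T'',U'')$ directly from the definition of the $L$-balancing step, combined with the determinism of $\calL^{\ltsrul}_\calG$ and the fact that root-rule-transitions commute with substitutions. First I would unpack the definition of an $L$-balancing step: the pivot is $W=U$, and the bal-result has the form $(E'\sigma'',U')$, where $U\gt{u'}U'$ with $|u'|=d_0$, and for each $i\in[1,m]$ there is $\bar{v}_i\in\calR^*$ with $|\bar{v}_i|<d_0$ (namely $|\bar{v}_i|=|w_{[A,i]}|$, or $\bar{v}_i=\varepsilon$ when $w_{[A,i]}$ does not exist) together with some $V_i$ such that $U\gt{\bar{v}_i}V_i$ and $x_i\sigma''=V_i$.

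Given a $d_0$-safe form $W=G\sigma$, I would next invoke $d_0$-safety on exactly the words $u'$ and the $\bar{v}_i$: since each of these is performable from $W$ and has length at most $d_0$, I obtain $G\gt{u'}F$ for some $F\in\trees_\calN$, and $G\gt{\bar{v}_i}F_i$ for each $i\in[1,m]$ (with $F_i=G$ when $\bar{v}_i=\varepsilon$). Then I would use that every rule of $\calG$ is applied at the root, so $H\gt{r}H'$ implies $H\tau\gt{r}H'\tau$ for any $\tau$; extending by induction on path length, $G\gt{u'}F$ lifts to $G\sigma\gt{u'}F\sigma$, and determinism of $\calL^{\ltsrul}_\calG$ together with $W=G\sigma\gt{u'}U'$ then forces $U'=F\sigma$. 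The same argument gives $V_i=F_i\sigma$ for each $i\in[1,m]$.

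To conclude, I would define $\bar\sigma$ by $x_i\bar\sigma=F_i$ for $i\in[1,m]$ and $x\bar\sigma=x$ otherwise. The compositions $\bar\sigma\sigma$ and $\sigma''$ then agree on $\{x_1,\dots,x_m\}$ since $x_i\bar\sigma\sigma=F_i\sigma=V_i=x_i\sigma''$, and because $\varin(E')\subseteq\{x_1,\dots,x_m\}$ this yields $E'\bar\sigma\sigma=E'\sigma''=T''$. Hence $(T'',U'')=(E\sigma,F\sigma)$ with $E=E'\bar\sigma$, as required; the $R$-balancing claim follows by a fully symmetric argument swapping the roles of the two sides. I do not anticipate a serious obstacle: the content of the proposition is bookkeeping that links the dynamic data produced by a balancing step to an arbitrary $d_0$-safe decomposition of the pivot, and the one place that needs mild care is ensuring that $d_0$-safety is invoked on precisely the words $u'$ and the $\bar v_i$, since this is the single hypothesis that lets one transport the transitions from $W$ into $G$.
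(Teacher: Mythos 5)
Your argument is correct and spells out exactly what the paper compresses into ``follows immediately from the definition of balancing steps'': you unpack the $L$-balancing data ($u'$, the $\bar v_i$, the $V_i$, and $\sigma''$), invoke $d_0$-safety on precisely those words (all of length at most $d_0$) to lift them into $G$, use that root-rewriting commutes with substitution together with the determinism of $\calL^{\ltsrul}_\calG$ to identify $U'=F\sigma$ and $V_i=F_i\sigma$, and finally check that $\overline\sigma\sigma$ agrees with $\sigma''$ on $\{x_1,\dots,x_m\}\supseteq\varin(E')$. This is the same (and essentially the only) route, just made explicit; no gaps.
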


We note a concrete consequence for future use.
(Fig.~\ref{fig:pathinLTSG} might be again helpful.)

\begin{corollary}\label{cor:sizebalresult}
Let $G\sigma$ be a $d_0$-safe form of $W$.
If $W$ is the pivot  of a balancing step, then the respective
bal-result can be written as $(E\sigma,F\sigma)$
where $\varin(E,F)\subseteq\varin(G)$ and
\begin{center}
	$\pressize(E,F)\leq \pressize(G)+(m{+}2)\cdot d_0\cdot\stepinc$.
\end{center}
\end{corollary}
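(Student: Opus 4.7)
The plan is to obtain the corollary as a direct consequence of Proposition~\ref{prop:relpivottobalres} together with the third part of Proposition~\ref{prop:sizeinc} (bounded joint size-growth under several short transition sequences from a common source). I treat only the case of an $L$-balancing step; the $R$-case is symmetric.

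First I would invoke Proposition~\ref{prop:relpivottobalres} with the given $d_0$-safe form $G\sigma$ of the pivot $W$. This supplies:
\begin{itemize}
\item a path $G\gt{u'}F$ in $\calL^{\ltsrul}_\calG$ with $|u'|=d_0$,
\item for each $i\in[1,m]$ a path $G\gt{\overline{v}_i}F_i$ with $|\overline{v}_i|<d_0$, and
\item a term $E'$ with $A(x_1,\dots,x_m)\gt{u}E'$, $|u|=d_0$, such that $E=E'\overline{\sigma}$ where $x_i\overline{\sigma}=F_i$,
\end{itemize}
and writes the bal-result as $(E\sigma,F\sigma)$. The inclusion $\varin(E,F)\subseteq\varin(G)$ is then immediate: every variable occurring in $F$ or in some $F_i$ must already occur in $G$ (a variable in a rewritten term must come from the source, since the rhs of a rule cannot introduce fresh variables), and every variable of $E=E'\overline{\sigma}$ appears in some $F_i$, hence in $G$.

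Next I would bound $\pressize(E,F)$ by constructing a shared graph presentation of $\{E,F,F_1,\dots,F_m\}$. Applying Proposition~\ref{prop:sizeinc}(3) to the $m{+}1$ paths $G\gt{u'}F$, $G\gt{\overline{v}_1}F_1,\dots,G\gt{\overline{v}_m}F_m$ (all of length at most $d_0$) yields
\[
\pressize(\{F,F_1,\dots,F_m\})\leq\pressize(G)+(m{+}1)\cdot d_0\cdot\stepinc.
\]
To include $E$ in this shared presentation, I would observe that a graph for $E=E'\overline{\sigma}$ is obtained from a graph for $E'$ by redirecting the arcs into each $x_i$-node towards the (already-present) root of $F_i$; so $E$ contributes at most the nonterminal nodes of $E'$ on top of the graph of $\{F_1,\dots,F_m\}$. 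Since $A(x_1,\dots,x_m)\gt{u}E'$ with $|u|=d_0$, Proposition~\ref{prop:sizeinc}(1) (applied in its ``nonterminal-node'' form, using that the root $A$ is consumed at the first step and each step introduces at most $\stepinc$ nonterminal nodes) bounds this contribution by a term of order $d_0\cdot\stepinc$. Summing yields the claimed bound $\pressize(G)+(m{+}2)\cdot d_0\cdot\stepinc$.

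I do not expect a real obstacle here: the whole argument is bookkeeping on graph presentations, once Proposition~\ref{prop:relpivottobalres} has been invoked to isolate the terms $F,F_1,\dots,F_m$ all reachable from the common source $G$. The only point requiring a little care is making sure that sharing is exploited so that one does not count the subterms of $G$ (or of the $F_i$) multiple times; this is exactly what Proposition~\ref{prop:sizeinc}(3) is designed for. The constants match the statement because we have $m{+}1$ sinks reachable from $G$, each within $d_0$ steps, contributing $(m{+}1)\cdot d_0\cdot\stepinc$, and one additional ``$E'$-layer'' of depth $d_0$ above them, contributing another $d_0\cdot\stepinc$.
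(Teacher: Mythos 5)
Your proposal is correct and follows essentially the same approach as the paper's proof: invoke Proposition~\ref{prop:relpivottobalres} to exhibit $F,F_1,\dots,F_m$ as terms reachable from $G$ within $d_0$ steps, apply Proposition~\ref{prop:sizeinc}(3) to bound $\pressize(\{F,F_1,\dots,F_m\})$, and then account separately for the nonterminal nodes of $E'$ (the paper phrases this as $\propsize(E')\leq d_0\cdot\stepinc$) glued on top by redirecting the $x_i$-arcs to the roots of the $F_i$.
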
	
\begin{proof}
W.l.o.g. we assume an
$L$-balancing step, and use
$E=E'\overline{\sigma}$ and $F$ guaranteed 
by Proposition~\ref{prop:relpivottobalres},
where $x_i\overline{\sigma}=F_i$ for all $i\in[1,m]$.
We thus have 
$$\pressize(E,F)\leq \propsize(E')+\pressize(\{F,F_1,F_2,\dots,F_m\}),$$
since for presenting $E$ we redirect each arc in $E'$ that leads to $x_i$ towards the root
of $F_i$ (for $i\in[1,m]$). Since 
$A(x_1,\dots,x_m)\gt{u}E'$ where $|u|=d_0$, we have 
 $\propsize(E')\leq d_0\cdot\stepinc$.
Since all $F, F_1,F_2,\dots, F_m$ are reachable from $G$ in at most
$d_0$ steps, we get
$\pressize(\{F,F_1,F_2,\dots,F_m\})\leq\pressize(G)+(m+1)\cdot
d_0\cdot\stepinc$ by Proposition~\ref{prop:sizeinc}(3); moreover, all
sets
$\varin(F)$ and $\varin(F_i)$, $i\in[1,m]$, are thus subsets of
$\varin(G)$. The claim follows.
\end{proof}

We derive a small bound on the number
of bal-results when the pivot is fixed.
We put
\begin{equation}\label{eq:done}
	d_1= 2\cdot|\calN|\cdot (\max\{d_0,|\calR|^{d_0}\})^{m+2}
\end{equation}
(referring to
the grammar $\calG=(\calN,\act,\calR)$).

\begin{proposition}\label{prop:fewbaltopivot}
The number of bal-results related to a fixed pivot $W$ is at most
$d_1$.
\end{proposition}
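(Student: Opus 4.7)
The plan is to parameterize every bal-result by a small tuple of discrete data that depends only on the grammar (and on a fixed $d_0$-safe form of $W$), and then to count tuples. To that end, fix once and for all a $d_0$-safe form $G\sigma$ of the pivot $W$; such a form exists because every $d_0$-top form of $W$ is $d_0$-safe. Put $M=\max\{d_0,|\calR|^{d_0}\}$, so that the number of words in $\calR^*$ of length exactly $d_0$ is at most $|\calR|^{d_0}\leq M$, and the number of words of length strictly less than $d_0$ is at most $\sum_{k=0}^{d_0-1}|\calR|^k\leq M$ (the latter by splitting into the cases $|\calR|=1$ and $|\calR|\geq 2$).

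I then handle L-balancing and R-balancing separately; by symmetry they contribute equally, so the total count is at most twice the count in one case. For the L-case, Proposition~\ref{prop:relpivottobalres} tells us that each bal-result has the form $(E\sigma,F\sigma)$ where $F$ comes from $G$ via some $u'\in\calR^+$ with $|u'|=d_0$, and $E=E'\overline{\sigma}$ with $A(x_1,\dots,x_m)\gt{u}E'$ for some $A\in\calN$ and some $u\in\calR^+$ with $|u|=d_0$, and where each $F_i=x_i\overline{\sigma}$ is reached from $G$ by some $\overline{v}_i$ with $|\overline{v}_i|<d_0$. Since $G,\sigma$ are fixed, the bal-result is determined by the tuple $(u',A,u,\overline{v}_1,\dots,\overline{v}_m)$, and I just multiply the bounds: at most $M$ options for $u'$, at most $|\calN|\cdot M$ options for the pair $(A,u)$, and at most $M$ options for each of the $m$ words $\overline{v}_i$. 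The product is bounded by $|\calN|\cdot M^{m+2}$.

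Doubling to account for R-balancing (which is entirely symmetric: the roles of $E$ and $F$, and of $T$ and $U$, are swapped) yields the bound $2\cdot|\calN|\cdot M^{m+2}=d_1$ from~(\ref{eq:done}), which is what the proposition asserts.

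I do not expect any real obstacle; this is essentially a bookkeeping exercise once Proposition~\ref{prop:relpivottobalres} is in hand. The only minor subtlety worth flagging is that if some $w_{[A,i]}$ does not exist, the informal definition sets $V_i=U$ rather than reaching $V_i$ along a genuine short path from $G$; however, in that case $x_i$ cannot appear in $E'$ (any occurrence would yield an $(A,i)$-sink word), so the choice of $\overline{v}_i$ is immaterial for the resulting $E=E'\overline{\sigma}$, and the counting above is, if anything, a (harmless) overcount.
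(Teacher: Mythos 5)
Your proof is correct and takes essentially the same route as the paper: fix a $d_0$-safe form $G\sigma$ of $W$, use Proposition~\ref{prop:relpivottobalres} to parameterize bal-results by $(u',A,u,\overline{v}_1,\dots,\overline{v}_m)$, multiply the individual counts ($|\calR|^{d_0}$, $|\calN|\cdot|\calR|^{d_0}$, and $\sum_{k<d_0}|\calR|^k \leq \max\{d_0,|\calR|^{d_0}\}$ per coordinate), and double for the $R$-case. Your remark about $x_i$ not occurring in $E'$ when $w_{[A,i]}$ does not exist is a valid but unnecessary aside, since the case $V_i = U$ corresponds to taking $\overline{v}_i=\varepsilon$, which is already among the counted options.
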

\begin{proof}
Given $W$, we fix its $d_0$-safe form 
$G\sigma$ (e.g., a $d_0$-top form). Now we suppose that $W=G\sigma$ is the pivot of an $L$-balancing step;
let $(E\sigma,F\sigma)=(E'\overline{\sigma}\sigma,F\sigma)$
be the respective bal-result, as captured by 
Proposition~\ref{prop:relpivottobalres}.
We have at most $|\calR|^{d_0}$ options for $u'$ determining $F$, and
at most $|\calN|\cdot |\calR|^{d_0}$ options for $E'$. 
For each $i\in[1,m]$, we have at most 
$1+|\calR|^1+|\calR|^2\cdots+|\calR|^{d_0-1}\leq
\max\{d_0,|\calR|^{d_0}\}$ options for $F_i$.
Altogether we get no more than 
$|\calN|\cdot (\max\{d_0,|\calR|^{d_0}\})^{m+2}$ options for the
bal-result. 
The same number bounds the possible bal-results of $R$-balancing steps
with the pivot $W$, hence the claim follows. 
\end{proof}

\subparagraph*{Balanced modified plays, and pivot paths.}
We now describe a balancing policy, yielding a sequence of balancing steps 
that transform a completed play to a ``balanced'' modified play;
the idea of this policy
(in a different
framework)
can be traced back to S\'enizergues~\cite{Senizergues:TCS2001} (and was
also used by Stirling~\cite{Stirling:TCS2001}).

Let $T_0\not\sim U_0$ and let $\pi$ be a completed play $\pi$ from
$(T_0,U_0)$. We show a sequence of transformation phases; 
after $j$ phases we will get a completed modified  play 
 from $(T_0,U_0)$ of the
form 
\begin{center}
	$\pi_j=\mu_0\rho'_1\mu_1\rho'_2\cdots \mu_{j-1}\rho'_j\pi'_j$
\end{center}
where 
$\pi'_j$ is a play to be transformed in the $(j{+}1)$-th phase.
We start with $\pi_0=\pi'_0=\pi$.
In general 
$\pi'_j$ is not a suffix of $\pi$ but the lengths of the modified plays
$\pi_0,\pi_1,\pi_2,\dots$ are the same (recall
Proposition~\ref{prop:samelengthmodplays}).
In the end
we get a \emph{balanced modified play} 
$\pi_\ell=\mu_0\rho'_1\mu_1\rho'_2\cdots \mu_{\ell-1}\rho'_\ell\pi'_\ell$
(for some $\ell\geq 0$) where $\pi'_\ell$ is non-transformable;
this final modified play 
$\pi_\ell=\mu_0\rho'_1\mu_1\rho'_2\cdots
\mu_{\ell-1}\rho'_\ell\mu_{\ell}$ (where $\mu_{\ell}=\pi'_\ell$)
can be also presented as 
\[
\lbrac\toppair{T_0}{U_0}\bothgt{v_0}{v'_0}\toppair{T_1}{U_1}\rbrac
		\lbrac\toppair{T_1}{U_1}\bothgt{u_1}{u'_1}
	\toppair{T'_1}{U'_1}\econc\toppair{T''_1}{U''_1}\rbrac
	\lbrac\toppair{T''_1}{U''_1}\bothgt{v_1}{v'_1}\toppair{T_2}{U_2}\rbrac
	\lbrac\toppair{T_2}{U_2}
	\bothgt{u_2}{u'_2}\toppair{T'_2}{U'_2}\econc\toppair{T''_2}{U''_2}\rbrac
	\cdots
\lbrac\toppair{T_\ell}{U_\ell}\bothgt{u_\ell}{u'_\ell}
	\toppair{T'_\ell}{U'_\ell}\econc\toppair{T''_\ell}{U''_\ell}\rbrac
	\lbrac\toppair{T''_\ell}{U''_\ell}
	\bothgt{v_\ell}{v'_\ell}\toppair{T_{\ell+1}}{U_{\ell+1}}\rbrac
\]
where $\mu_j$ is $\toppair{T_0}{U_0}\bothgt{v_0}{v'_0}\toppair{T_1}{U_1}$
for $j=0$ and
$\toppair{T''_j}{U''_j}\bothgt{v_j}{v'_j}\toppair{T_{j+1}}{U_{j+1}}$
for $j\in[1,\ell]$, and 
$\rho'_j$ is $\toppair{T_j}{U_j}\bothgt{u_j}{u'_j}
\toppair{T'_j}{U'_j}\econc
\toppair{T''_j}{U''_j}$ (for $j\in[1,\ell]$).
By  $\rho_j$ we denote 
$\toppair{T_j}{U_j}\bothgt{u_j}{u'_j}
\toppair{T'_j}{U'_j}$, and we have either $\rho_j\der_L\rho'_j$ or 
$\rho_j\der_R\rho'_j$. (Hence all $\mu_j$ and $\rho_j$ are plays,
while $\rho'_j$ is a modified play resulting from $\rho_j$ by
a balancing step.) 
By our conventions (and associativity of $\econc$) we can present 
$\pi_\ell=\mu_0\rho'_1\mu_1\rho'_2\cdots
\mu_{\ell-1}\rho'_\ell\mu_{\ell}$ also as
\begin{equation}\label{eq:modifplay}
\textnormal{
	$\lbrac\toppair{T_0}{U_0}\bothgt{v_0}{v'_0}\toppair{T_1}{U_1}\bothgt{u_1}{u'_1}
	\toppair{T'_1}{U'_1}\rbrac\econc
	\lbrac\toppair{T''_1}{U''_1}\bothgt{v_1}{v'_1}\toppair{T_2}{U_2}\bothgt{u_2}{u'_2}\toppair{T'_2}{U'_2}\rbrac\econc
	\lbrac\toppair{T''_2}{U''_2}
\bothgt{v_2}{v'_2}
	\cdots\cdots
\toppair{T_\ell}{U_\ell}\bothgt{u_\ell}{u'_\ell}
	\toppair{T'_\ell}{U'_\ell}\rbrac\econc
	\lbrac\toppair{T''_\ell}{U''_\ell}
	\bothgt{v_\ell}{v'_\ell}\toppair{T_{\ell+1}}{U_{\ell+1}}\rbrac$.
}
\end{equation}
There are $\ell$ (occurrences of) pivots
$W_1,W_2,\cdots,W_\ell$ in~(\ref{eq:modifplay}), 
where $W_j\in\{T_j,U_j\}$
for each $j\in[1,\ell]$; the bal-result corresponding to $W_j$ is
$(T''_j,U''_j)$.
Though the pivots $W_j$ can be changing their sides (we can have, e.g.,
$W_j=U_j$ and $W_{j+1}=T_{j+1}$), they will be on one 
specific \emph{pivot path} in the  LTS $\calL^{\ltsrul}_\calG$,
denoted 
\begin{equation}\label{eq:pivotpath}
	\textnormal{
		$W_0\gt{w_0}W_1\gt{w_1}W_2\cdots\gt{w_{\ell-1}}W_\ell\gt{w_\ell}W_{\ell+1}$
}
\end{equation}
and defined below; 
we will have $W_0\in\{T_0,U_0\}$ and
$W_{\ell+1}\in\{T_{\ell+1},U_{\ell+1}\}$ but
$W_0, W_{\ell+1}$ are no pivots, except the case $w_0=\varepsilon$ and
$W_0=W_1$.
The pivot path will be a useful ingredient for applying our bound on
$(n,s,g)$-sequences
(Lemma~\ref{lem:ELdecreasbound}).

Now we describe the transformation phases (as non-effective
procedures), giving also a finer presentation of $\mu_j$ ($j\in[1,\ell]$) as 
$\mu_j=\mu^\uncl_j$ or $\mu_j=\mu^\uncl_j\mu^\dsink_j$
($\uncl$ for ``unclear'', $\dsink$ for ``sinking'') to be
discussed later.
The first phase, starting with $\pi_0=\pi$, works as follows:
\begin{enumerate}
\item
If possible, present $\pi_0$ as $\mu_0\rho_1\pi'$ where $\rho_1$ enables a balancing step
(on any side)
and $\mu_0\rho_1$ is the shortest possible. If there is no such
presentation of $\pi_0$, then put $\mu_0=\pi_0$ and \emph{halt} (here
$\ell=0$). In this case we do not need to define the
path~(\ref{eq:pivotpath}). 
\item
Replace $\rho_1$ with $\rho'_1$ where $\rho_1\der_L\rho'_1$
or $\rho_1\der_R\rho'_1$ (choosing arbitrarily when $\rho_1$ allows both 
$L$-balancing and $R$-balancing). Finally replace 
$\pi'$
with a completed play $\pi'_{1}$ from the bal-result,
i.e., from $\finishofplay(\rho'_1)$, thus getting
$\pi_1=\mu_0\rho'_1\pi'_1$ where
$\mu_0\rho'_1=\toppair{T_0}{U_0}\bothgt{v_0}{v'_0}\toppair{T_1}{U_1}\bothgt{u_1}{u'_1}
\toppair{T'_1}{U'_1}\econc
\toppair{T''_1}{U''_1}$.
We also define the prefix $W_0\gt{w_0}W_1$ of~(\ref{eq:pivotpath}):
if we have  $\rho_1\der_L\rho'_1$, hence $W_1=U_1$, then this prefix
is $U_0\gt{v'_0}U_1$; if $\rho_1\der_R\rho'_1$, hence $W_1=T_1$, then
the prefix 
is $T_0\gt{v_0}T_1$. 
\end{enumerate}
For $j\geq 1$, the $(j{+}1)$-th phase starts with 
$\pi_j=\mu_0\rho'_1\mu_1\rho'_2\cdots \mu_{j-1}\rho'_j\pi'_j$
where the last balancing step was either left, $\rho_j\der_L\rho'_j$, or
right, $\rho_j\der_R\rho'_j$.
 We 
describe the $(j{+}1)$-th phase for the case $\rho_j\der_L\rho'_j$;
the other case is symmetric. We recall Figure~\ref{fig:balstep} and 
present $\rho'_j\pi'_j$ as
\begin{center}
$\toppair{A(x_1,\dots,x_m)\sigma'}{U_j}\bothgt{u_j}{u'_j}\toppair{E'\sigma'}{U'_j}\econc
\toppair{E'\sigma''}{U'_j}
\bothgt{v}{v'}$\,.
\end{center}
We have also already defined the prefix
$W_0\gt{w_0}W_1\gt{w_1}W_2\cdots\gt{w_{j-1}}W_j$
of~(\ref{eq:pivotpath}), and we have $W_j=U_j$ in our considered case
 $\rho_j\der_L\rho'_j$.

Informally, 
the $(j{+}1)$-phase aims to make a balancing step in $\pi'_j$ as
early as possible but balancing at the opposite side than previously
is a bit constrained.
In our case a future right balancing would entail
that the next pivot is on the path $E'\sigma''\gt{v}$, and we first have  
to wait until a term $x_i\sigma''$ is exposed (i.e., until a prefix of
$v$ exposes one of $V_1, V_2$ in Figure~\ref{fig:balstep}, where $U$
represents the last pivot $U_j$). Only then a
right balancing is allowed. This exposing must obviously happen soon
(i.e., for a
short prefix of $v$) if a further left balancing is not enabled for a
while (since in this case $E'\sigma''\gt{v}$ must be quickly sinking along a
branch of $E'$).
Hence if even under this constraint the earliest next balancing will
be a right balancing, then the next pivot $T_{j+1}$ will be the final
term on a path
$E'\sigma''\gt{v_{j1}}x_i\sigma''\gt{v_{j2}}T_{j+1}$ where
$v_{j1}v_{j2}$ is a prefix of $v$ and $v_{j_1}$ is short.
Since $x_i\sigma''$ is reachable by a short $\bar{v}$
from the last pivot $U_j$ 
(e.g., if $x_i\sigma''=V_2$ in Figure~\ref{fig:balstep}, then
we use $U\gt{\bar{v}_2}V_2$),
we continue building the pivot
path smoothly:
in our case $W_j\gt{w_j}W_{j+1}$ will be defined as
$U_j\gt{\bar{v}}x_i\sigma''\gt{v_{j2}}T_{j+1}$. 
When a left (unconstrained) balancing is the earliest possibility, 
$W_j\gt{w_j}W_{j+1}$ will be defined simply as 
$U_j\gt{u'_jv'_j}U_{j+1}$ for the respective prefix $v'_j$ of $v'$.
(We note that the pivot path gets a bit
shorter than the modified play~(\ref{eq:modifplay}) whenever a switch of balancing
sides occurrs.) 
Now we describe the $(j{+}1)$-phase more formally
(assuming $\rho_j\der_L\rho'_j$).

\begin{enumerate}
	\item
If possible, present $\pi'_j=\toppair{E'\sigma''}{U'_j}
\bothgt{v}{v'}$ as $\mu_{j}\rho_{j+1}\pi'$ 
with the shortest possible 
$\mu_{j}\rho_{j+1}$ where 
\begin{enumerate}[a)]
\item
either $\rho_{j+1}$ enables $L$-balancing, 
\item or $\rho_{j+1}$ does not 
	enable $L$-balancing but it enables $R$-balancing
	and the
	path $E'\sigma''\gt{v_j}T_{j+1}$ in the 	
	play	
	$\mu_{j}=\toppair{E'\sigma''}{U'_j}\bothgt{v_{j}}{v'_{j}}\toppair{T_{j+1}}{U_{j+1}}$
	can be written 
	$E'\sigma''\gt{v_{j1}}x_i\sigma''\gt{v_{j2}}T_{j+1}$
	where  $E'\gt{v_{j1}}x_i$, for some $i\in[1,m]$.
	(We recall that $U_j\gt{\overline{v}}x_i\sigma''$ where
	$|\overline{v}|<d_0$.)

\end{enumerate}
If there is no such
presentation of $\pi'_{j}$, then  
put $\mu_{j}=\pi'_j$ and \emph{halt} (here $\ell=j$).
In this case we have 
$\rho'_\ell\mu_{\ell}=\toppair{A(x_1,\dots,x_m)\sigma'}{W_\ell}\bothgt{u_\ell}{u'_\ell}\toppair{E'\sigma'}{U'_\ell}\econc
\toppair{E'\sigma''}{U'_\ell}\bothgt{v_\ell}{v'_\ell}\toppair{T_{\ell+1}}{U_{\ell+1}}$
and we define $W_\ell\gt{w_\ell}W_{\ell+1}$ as 
$W_\ell\gt{u'_\ell v'_\ell}U_{\ell+1}$. 

In each case we get
$\mu_{j}=\toppair{E'\sigma''}{U'_j}\bothgt{v_{j}}{v'_{j}}\toppair{T_{j+1}}{U_{j+1}}$,
and if $E'\sigma''\gt{v_j}T_{j+1}$ can be written
 $E'\sigma''\gt{v_{j1}}x_i\sigma''\gt{v_{j2}}T_{j+1}$ where 
$E'\gt{v_{j1}}x_i$
(which holds in the case b) by definition),
then we put 
$\mu_j=\mu^\uncl_{j}\mu^\dsink_j$ where
$\mu^\uncl_{j}=\toppair{E'\sigma''}{U'_j}\bothgt{v_{j1}}{v'_{j1}}\toppair{x_i\sigma''}{\overline{U}_j}$
and 
$\mu^\dsink_{j}=\toppair{x_i\sigma''}{\overline{U}_j}\bothgt{v_{j2}}{v'_{j2}}\toppair{T_{j+1}}{U_{j+1}}$;
otherwise $\mu_j=\mu^\uncl_{j}$.

(We note that the ``unclear'' play $\mu^\uncl_{j}$ is always short. The
``sinking'' play $\mu^\dsink_j$ can be nonempty even if there is no
switch in balancing sides, and $\mu^\dsink_{j}$  can be long, but both paths in 
$\mu^\dsink_j$ are quickly sinking [since no balancing possibility
appears].)

\item
Replace $\rho_{j+1}$ with $\rho'_{j+1}$
where $\rho_{j+1}\der_{L}\rho'_{j+1}$ 
in the case a),
and $\rho_{j+1}\der_{R}\rho'_{j+1}$ in the case b).
Finally replace 
$\pi'$
with a completed play $\pi'_{j+1}$ from the bal-result,
i.e., from $\finishofplay(\rho'_{j+1})$, thus getting
$\pi_{j+1}=\mu_0\rho'_1\mu_1\rho'_2\cdots \mu_{j}\rho'_{j+1}\pi'_{j+1}$.

In the case $\rho_{j+1}\der_{L}\rho'_{j+1}$ we have 
$\rho'_j\mu_{j}=\toppair{A(x_1,\dots,x_m)\sigma'}{W_j}\bothgt{u_j}{u'_j}\toppair{E'\sigma'}{U'_j}\econc
\toppair{E'\sigma''}{U'_j}\bothgt{v_j}{v'_j}\toppair{T_{j+1}}{W_{j+1}}$
and we put $w_j=u'_j v'_j$, thus defining $W_j\gt{w_j}W_{j+1}$.

In the case $\rho_{j+1}\der_{R}\rho'_{j+1}$ we have 
$\rho'_j\mu_{j}=\toppair{A(x_1,\dots,x_m)\sigma'}{W_j}\bothgt{u_j}{u'_j}\toppair{E'\sigma'}{U'_j}\econc
\toppair{E'\sigma''}{U'_j}
\bothgt{v_{j1}}{v'_{j1}}\toppair{x_i\sigma''}{\overline{U}_j}
\bothgt{v_{j2}}{v'_{j2}}\toppair{W_{j+1}}{U_{j+1}}$

and we define $W_j\gt{w_j}W_{j+1}$ by 
putting $w_j=\overline{v}\, v_{j2}$
for a respective $\overline{v}$, $|\overline{v}|<d_0$, for which 
$W_j\gt{\overline{v}}x_i\sigma''$.
\end{enumerate}
As already mentioned, the work of the $(j{+}1)$-phase in the case 
$\rho_j\der_R\rho'_j$ is symmetric;
here we have $R$-balancing in the ``unconditional'' case a), and 
$L$-balancing in the case b) that now requires a prefix
$\mu^\uncl_{j}=\toppair{T'_j}{F'\sigma''}\bothgt{v_{j1}}{v'_{j1}}
\toppair{\overline{T}_j}{x_i\sigma''}$ (where $x_i\sigma''$ is shortly reachable
from the last pivot $T_j$).

\section{Analysis of Balanced Modified Plays}\label{sec:analysis}

Assuming a given grammar $\calG=(\calN,\act,\calR)$,
we have shown a transformation of a completed play $\pi$ starting
in $(T_0,U_0)$ (where $T_0, U_0$ can be large regular terms)
to
a balanced modified play
$\pi_\ell=\mu_0\rho'_1\mu_1\rho'_2\cdots
\mu_{\ell-1}\rho'_\ell\mu_\ell$ in the form~(\ref{eq:modifplay}),
repeated here:
\begin{equation}\label{eq:modifplaynext}
\textnormal{
	$\lbrac\toppair{T_0}{U_0}\bothgt{v_0}{v'_0}\toppair{T_1}{U_1}\bothgt{u_1}{u'_1}
	\toppair{T'_1}{U'_1}\rbrac\econc
	\lbrac\toppair{T''_1}{U''_1}\bothgt{v_1}{v'_1}\toppair{T_2}{U_2}\bothgt{u_2}{u'_2}\toppair{T'_2}{U'_2}\rbrac\econc
	\lbrac\toppair{T''_2}{U''_2}
\bothgt{v_2}{v'_2}
	\cdots\cdots
\toppair{T_\ell}{U_\ell}\bothgt{u_\ell}{u'_\ell}
	\toppair{T'_\ell}{U'_\ell}\rbrac\econc
	\lbrac\toppair{T''_\ell}{U''_\ell}
	\bothgt{v_\ell}{v'_\ell}\toppair{T_{\ell+1}}{U_{\ell+1}}\rbrac$.
}
\end{equation}
In this section we perform a technical analysis of such
$\pi_\ell$,
to verify that we indeed get specific small
numbers $n,s,g$ and $c$ yielding~(\ref{eq:elbound}), where
$\calE=\calE_{\calB}$ for  a ``sound'' $(n,s,g)$-candidate $\calB$
(which will turn out equal to the base
$\calB_{n,s,g}$, as discussed in Section~\ref{sec:finalproof}).
First we recall the
discussion at the beginning of Section~\ref{sec:balance} 
and give an informal 
overview of the future analysis.

We recall that in the pivot path
\begin{center}
$W_0\gt{w_0}W_1\gt{w_1}W_2\cdots\gt{w_{\ell-1}}W_\ell\gt{w_\ell}W_{\ell+1}$
\end{center}
we have $W_j\in\{T_j,U_j\}$ for $j\in[0,\ell{+}1]$, and the pivots 
$W_1,W_2,\dots$, $W_\ell$ have the respective related (eqlevel-decreasing)
bal-results $(T''_1,U''_1),(T''_2,U''_2),\dots,(T''_\ell,U''_\ell)$.
Referring to~(\ref{eq:modifplaynext}), we recall that $u_i$ (hence also $u'_i$) are short since
$|u_i|=|u'_i|=d_0$ for all $i\in[1,\ell]$ (and $d_0$
from~(\ref{eq:Mzero})).
Recalling the discussion around~(\ref{eq:unclearandsink}), we can
present~(\ref{eq:modifplaynext}) in a refined form as
\begin{equation}\label{eq:prelimmodifplayrefined}
\textnormal{
$\lbrac\toppair{T_0}{U_0}\bothgt{v_0}{v'_0}\toppair{T_1}{U_1}\bothgt{u_1}{u'_1}
\toppair{T'_1}{U'_1}\rbrac\econc
\lbrac\toppair{T''_1}{U''_1}\bothgt{v_{11}}{v'_{11}}\toppair{\overline{T}_1}{\overline{U}_1}
\bothgt{v_{12}}{v'_{12}}\toppair{T_2}{U_2}\bothgt{u_2}{u'_2}\toppair{T'_2}{U'_2}\rbrac\econc
\cdots
	\econc
\lbrac\toppair{T''_\ell}{U''_\ell}
\bothgt{v_{\ell 1}}{v'_{\ell 1}}
\toppair{\overline{T}_\ell}{\overline{U}_\ell}
\bothgt{v_{\ell 2}}{v'_{\ell 2}}
\toppair{T_{\ell+1}}{U_{\ell+1}}\rbrac$;
}
\end{equation}
here
$\lbrac\toppair{T''_j}{U''_j}\bothgt{v_j}{v'_j}\toppair{T_{j+1}}{U_{j+1}}\bothgt{u_{j+1}}{u'_{j+1}}\toppair{T'_{j+1}}{U'_{j+1}}\rbrac$,
for $j\in[1,\ell]$,
is presented as
$\lbrac\toppair{T''_j}{U''_j}\bothgt{v_{j1}}{v'_{j1}}\toppair{\overline{T}_j}{\overline{U}_{j}}\bothgt{v_{j2}}{v'_{j2}}\toppair{T_{j+1}}{U_{j+1}}\bothgt{u_{j+1}}{u'_{j+1}}\toppair{T'_{j+1}}{U'_{j+1}}\rbrac$
where $v_{j1}$ is short and both paths
$\overline{T}_j\gt{v_{j2}}T_{j+1}$ and
$\overline{U}_j\gt{v'_{j2}}U_{j+1}$ are $d_0$-sinking (i.e., in each segment
of length $d_0$ of these paths a root-successor in the term that starts the segment is exposed); we can have
$v_{j2}=\varepsilon$.

The first segment $W_0\gt{w_1}W_1$ is one of the paths $T_0\gt{v_0}T_1$ and
$U_0\gt{v'_0}U_1$; each of these two paths is $d_0$-sinking (since
otherwise the first balancing would be possible earlier).
For the segment $W_j\gt{w_j}W_{j+1}$, $j\in[1,\ell]$, we have four
options:
\begin{itemize}
	\item
		$U_j\gt{u'_jv'_{j1}}\overline{U}_{j}\gt{v'_{j2}}U_{j+1}$, if $W_j=U_j$ and
	$W_{j+1}=U_{j+1}$;
	\item
		$T_j\gt{u_jv_{j1}}\overline{T}_{j}\gt{v_{j2}}T_{j+1}$, if $W_j=T_j$ and
		$W_{j+1}=T_{j+1}$;
	\item
		$U_j\gt{\bar{v}}\overline{T}_{j}\gt{v_{j_2}}T_{j+1}$ for some
		$\bar{v}$, $|\bar{v}|<d_0$,
		if $W_j=U_j$ and
		$W_{j+1}=T_{j+1}$;
	\item
		$T_j\gt{\bar{v}}\overline{U}_{j}\gt{v'_{j2}}U_{j+1}$ for some
		$\bar{v}$, $|\bar{v}|<d_0$,
		if $W_j=T_j$ and
		$W_{j+1}=U_{j+1}$.
\end{itemize}
Hence each  $W_j\gt{w_j}W_{j+1}$ has a short ``unclear'' prefix
(it is unclear if it sinks or not), followed
by a $d_0$-sinking suffix (which might be empty, or short, or long
...).

This entails that \emph{if the path  $W_j\gt{w_j}W_{j+1}$ visits a subterm
of $W_0$}, which is surely the case for 
 $W_0\gt{w_0}W_{1}$,
\emph{then $W_{j+1}$ is
 shortly reachable from a subterm of $W_0$}.
 Indeed, 
 if  $W_j\gt{w_j}W_{j+1}$ is  $W_j\gt{w'_j}V\gt{w''_j}W_{j+1}$
 where $V$ is the last subterm of $W_0$ visited by the path 
$W_j\gt{w_j}W_{j+1}$, then $w''_j$ has a short unclear prefix (maybe
empty) followed by a $d_0$-sinking suffix; but if this suffix was not
short, then it would necessarily expose a root-successor in $V$, which is another
subterm of $W_0$; this would contradict the choice of $V$.
(Figure~\ref{fig:pathinLTSG} might be again helpful to realize this
fact.)

For each subterm $V$ of $W_0$ we certainly have only a small number of
terms $W$ that are shortly reachable from $V$.
Since there is only a small number of possible bal-results related to each
concrete pivot $W$, and the bal-results do not repeat, we get that 
\emph{the number of indices $j\in[0,\ell]$ for which $W_j\gt{w_j}W_{j+1}$ visits a subterm
of $W_0$ is bounded by $d\cdot \pressize(T_0,U_0)$} for a small
constant $d$. (We recall that $W_0\in\{T_0,U_0\}$.)

We say that a \emph{segment of the pivot path} of the form 
\begin{center}
$V\gt{w}W_{j+1}\gt{w_{j+1}}W_{j+2}\cdots\gt{w_{j+z-1}}
	W_{j+z}\gt{w'}V'$
\end{center}
is \emph{crucial} if $w$ is a nonempty suffix of $w_j$, $1\leq z\leq\ell{-}j$,
$w'$ is a prefix of $w_{j+z}$,
$V$ is a subterm of $W_0$, and no subterm of $W_0$ is visited
inside the segment; moreover,
either $V'$ is a subterm of $W_0$
or $V'=W_{\ell+1}$ (the end of
the pivot path).
(We can again look at Figure~\ref{fig:pathinLTSG}, and imagine that
$W_0$ is the (maybe large regular) term in the rectangle and $V$ is its subterm 
determined in the third rectangle, whose root is $B$. The next two steps
can be viewed as a prefix of a crucial segment that could finish after
many steps later when some of the root-successors of $B$ in the
rectangle is exposed and becomes the current root.)

Since each crucial segment is non-sinking (until the last
step), it gives rise to an
$(n,s,g)$-sequence (for some small $n,s,g$), as was depicted in
Figure~\ref{fig:twoboundonnonsink} and discussed in the informal
 beginning of
Section~\ref{sec:balance}.
It is thus intuitively clear that \emph{the length of any crucial segment
of the pivot path}, as well as the length of its corresponding segment
of the modified play~(\ref{eq:modifplaynext}), \emph{is bounded by
$d'\cdot\calE_{\calB_{n,s,g}}$} for a small constant $d'$ (and the
$(n,s,g)$-base $\calB_{n,s,g}$, by Lemma~\ref{lem:ELdecreasbound}).

Since each crucial segment is fully determined by the segment
$W_j\gt{w_j}W_{j+1}$ in which it starts (and which visits a subterm of
$W_0$),
there are at most $d\cdot \pressize(T_0,U_0)$ crucial segments, and
their overall length is thus bounded by $d\cdot \pressize(T_0,U_0)\cdot
d'\cdot\calE_{\calB_{n,s,g}}$.
Hence
we are approaching the required bound
\begin{center}
	$\eqlevel(T_0,U_0)\leq c\cdot\big(\calE_{\calB_{n,s,g}}\cdot
\pressize(T_0,U_0)+(\pressize(T_0,U_0))^2\big)$,
\end{center}
for a small constant $c$.
The bound $c\cdot (\pressize(T_0,U_0))^2$ serves for bounding the sum of 
lengths of subpaths of $W_j\gt{w_j}W_{j+1}$ (and the corresponding
subplays in~(\ref{eq:modifplaynext})) when both sides are quickly
sinking ``inside'' the (regular) terms $T_0$ and $U_0$, respectively.
(An extreme case is when there is no balancing
since
both paths from $T_0$ and $U_0$, respectively, are $d_0$-sinking all
the time.)
Since the eq-level drops by one in each step of each play
in~(\ref{eq:modifplaynext}), we cannot have a repeat of a pair there.
Hence there is some small $c$ such that  $c\cdot (\pressize(T_0,U_0))^2$
 bounds the number of those pairs 
 in~(\ref{eq:modifplaynext}) in which 
both members are ``close to'' subterms of $T_0$ or $U_0$.
This bounds the sum of lengths of the respective
 segments of~(\ref{eq:modifplaynext}) that are sinking ``closely to
 $T_0,U_0$'' on both sides.

The claim of Theorem~\ref{th:computingelbound} is now almost
clear; it will be completed 
in Section~\ref{sec:finalproof} where we show that the respective
constant $\calE=\calE_{\calB_{n,s,g}}$ is indeed computable.
In the rest of this section we perform a routine (and somewhat
tedious) analysis to show some concrete numbers $n,s,g,c$ (cf.
Table~\ref{tab:constants} at the end of the paper).

\subparagraph{Refined presentations of balanced modified plays.}

Assuming a given grammar $\calG=(\calN,\act,\calR)$, we fix 
a completed play $\pi$ from some (maybe large regular terms) $(T_0,U_0)$
and its transformation
$\pi_\ell=\mu_0\rho'_1\mu_1\rho'_2\cdots \mu_{\ell-1}\rho'_\ell\pi'_\ell$
in the previous notation; in fact, we also use a finer
form and write
\begin{equation}\label{eq:piellbasic}
\pi_\ell=\mu^\dsink_0\rho'_1\mu^\uncl_1\mu^\dsink_1\rho'_2\mu^\uncl_2\mu^\dsink_2\cdots
\rho'_\ell\mu^\uncl_\ell\mu^\dsink_\ell
\end{equation}
(where the superscript $\uncl$ can be read as ``unclear'' and 
$\dsink$ as ``sinking''). 
We add that $\mu^\dsink_0=\mu_0$ and that we view $\varepsilon$ (the empty sequence)
also as the \emph{empty play}, and we put 
$\mu^\dsink_j=\varepsilon$ in the cases where $\mu^\dsink_j$
has not been defined
explicitly. As expected, we stipulate  $\length(\varepsilon)=0$,
$\pairs(\varepsilon)=\emptyset$, and 
$\mu\varepsilon=\varepsilon\mu=\mu$
for all (modified) plays $\mu$.

The presentation~(\ref{eq:modifplay}) is accordingly refined
(as in~(\ref{eq:prelimmodifplayrefined})) to
\begin{equation}\label{eq:modifplayrefined}
\lbrac\toppair{T_0}{U_0}\bothgt{v_0}{v'_0}\toppair{T_1}{U_1}\bothgt{u_1}{u'_1}
\toppair{T'_1}{U'_1}\rbrac\econc
\lbrac\toppair{T''_1}{U''_1}\bothgt{v_{11}}{v'_{11}}\toppair{\overline{T}_1}{\overline{U}_1}
\bothgt{v_{12}}{v'_{12}}\toppair{T_2}{U_2}\bothgt{u_2}{u'_2}\toppair{T'_2}{U'_2}\rbrac\econc
\cdots\econc
\lbrac\toppair{T''_\ell}{U''_\ell}
\bothgt{v_{\ell 1}}{v'_{\ell 1}}
\toppair{\overline{T}_\ell}{\overline{U}_\ell}
\bothgt{v_{\ell 2}}{v'_{\ell 2}}
\toppair{T_{\ell+1}}{U_{\ell+1}}\rbrac
\end{equation}
where,  for $j\in[1,\ell]$, we have
$\mu^\uncl_j=\toppair{T''_j}{U''_j}
\bothgt{v_{j 1}}{v'_{j 1}}
\toppair{\overline{T}_j}{\overline{U}_j}$,
and either
$\mu^\dsink_j=\toppair{\overline{T}_j}{\overline{U}_j}
\bothgt{v_{j 2}}{v'_{j 2}}
\toppair{{T}_{j+1}}{\overline{U}_{j+1}}$ or
$\mu^\dsink_j=\varepsilon$ in which case 
$v_{j2}=v'_{j2}=\varepsilon$, $\overline{T}_{j}=T_{j+1}$,
 $\overline{U}_{j}=U_{j+1}$.

To explain the use of the superscript $\dsink$ (``sinking'') in
$\mu^\dsink_j$, we
introduce a few notions.

An $(A,i)$-sink word $v\in\calR^+$ 
(satisfying $A(x_1,\dots,x_m)\gt{v}x_i$) is also called 
a \emph{sink-segment};
any 
path of the form  $V\gt{v}V'$ is then also 
understood as a sink-segment
(presentable as $A(x_1,\dots,x_m)\sigma\gt{v}x_i\sigma$).
We say that a \emph{path} $V\gt{v}V'$ is \emph{$d_0$-sinking}, if
$v=v_1v_2\cdots v_{k+1}$ where $|v_j|<d_0$ for all $j\in[1,k{+}1]$
and $v_j$, $j\in[1,k]$, are sink-segments.
A zero-length  path
$V\gt{\varepsilon}V$ is $d_0$-sinking, by putting $k=0$ and
$v_{k+1}=\varepsilon$.

A \emph{play} $\mu=\toppair{T}{U}\bothgt{v}{v'}\toppair{T'}{U'}$
is \emph{$d_0$-sinking} if both its paths $T\gt{v}T'$ and
$U\gt{v'}U'$ are $d_0$-sinking. In particular, a zero-length 
play $\mu=\toppair{T}{U}$ is  $d_0$-sinking, and we also view 
the empty play $\varepsilon$ as $d_0$-sinking.

The above transformation (of $\pi$ to $\pi_\ell$) guarantees that 
all plays $\mu_0$, $\mu^\dsink_{1}$, $\mu^\dsink_{2}$, $\dots$,
$\mu^\dsink_{\ell}$ are $d_0$-sinking (therefore we have put 
$\mu_0=\mu^\dsink_0$). Indeed, if some
$\mu^\dsink_{j}$ ($j\in[0,\ell]$)
was not $d_0$-sinking, then there would be a possibility
to make a ``legal''
balancing step earlier in the respective transformation phase.

The presentations~(\ref{eq:piellbasic}) 
and~(\ref{eq:modifplayrefined})
also yield the corresponding refined version
of the pivot path~(\ref{eq:pivotpath}):
\begin{equation}\label{eq:pivotpathfiner}
W_0\gt{w^\dsink_0}W_1\gt{w^\uncl_{1}}\overline{W}_1\gt{w^\dsink_{1}}W_2
\gt{w^\uncl_{2}}\overline{W}_2\gt{w^\dsink_{2}}
\cdots W_\ell\gt{w^\uncl_{\ell}}\overline{W}_\ell\gt{w^\dsink_{\ell}}W_{\ell+1}
\end{equation}
where each segment
$\overline{W}_j\gt{w^\dsink_{j}}W_{j+1}$
(for $j\in [0,\ell]$ when putting $\overline{W}_0=W_0$) corresponds
to one of the paths in the play 
$\mu^\dsink_j$, and is thus $d_0$-sinking. 
More concretely, $W_0\gt{w^\dsink_0}W_1$
(where $w^\dsink_0=w_0$)
is either $T_0\gt{v_0}T_1$ or
$U_0\gt{v'_0}U_1$, and 
$\overline{W}_j\gt{w^\dsink_{j}}W_{j+1}$ is either 
$\overline{T}_j\gt{v_{j2}}T_{j+1}$ or
$\overline{U}_j\gt{v'_{j2}}U_{j+1}$.
Each (``unclear'') segment 
$W_j\gt{w^\uncl_j}\overline{W}_j$ is one of the following paths:
\begin{itemize}
	\item
		$U_j\gt{u'_jv'_{j1}}\overline{U}_{j}$, if $W_j=U_j$ and
		$W_{j+1}=U_{j+1}$ (in which case $U'_j=U''_j$);
	\item
		$T_j\gt{u_jv_{j1}}\overline{T}_{j+1}$, if $W_j=T_j$ and
		$W_{j+1}=T_{j+1}$ (in which case $T'_j=T''_j$);
	\item
		$U_j\gt{v}\overline{T}_{j}$ for some $v$, $|v|<d_0$,
		if $W_j=U_j$ and
		$W_{j+1}=T_{j+1}$;
	\item
		$T_j\gt{v}\overline{U}_{j}$ for some $v$, $|v|<d_0$,
		if $W_j=T_j$ and
		$W_{j+1}=U_{j+1}$.
\end{itemize}
We now note that the length of each segment 
$\rho'_j\mu^\uncl_{j}=\toppair{T_j}{U_j}\bothgt{u_j}{u'_j}
\toppair{T'_j}{U'_j}\econc\toppair{T''_j}{U''_j}\bothgt{v_{j1}}{v'_{j1}}
\toppair{\overline{T}_{j}}{\overline{U}_{j}}$, and of the respective
pivot-path segment $W_j\gt{w^\uncl_j}\overline{W}_j$, can be bounded
by the small number

\begin{equation}\label{eq:shortpivotgap}
	d_2=d_0+(1+d_0\cdot\hinc)\cdot(d_0-1).
\end{equation}
\begin{proposition}\label{prop:dtwo}
	For each $j\in[1,\ell]$ we have $|w^\uncl_j|\leq
	\length(\rho'_j\mu^\uncl_{j})\leq d_2$.
\end{proposition}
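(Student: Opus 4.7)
For the first inequality $|w^\uncl_j| \leq \length(\rho'_j\mu^\uncl_j) = d_0 + |\mu^\uncl_j|$, I would simply inspect the four shapes for $w^\uncl_j$ listed just before the proposition. In the two ``same-side'' cases (no pivot switch), $w^\uncl_j$ equals $u_jv_{j1}$ or $u'_jv'_{j1}$ and has length exactly $d_0+|v_{j1}| = \length(\rho'_j\mu^\uncl_j)$. In the two ``switching'' cases, $w^\uncl_j$ has length strictly less than $d_0 \leq \length(\rho'_j\mu^\uncl_j)$.

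For the second inequality, I would treat the case $\rho_j\der_L\rho'_j$ in detail, the case $\rho_j\der_R\rho'_j$ being symmetric. Write $T'_j = E'\sigma'$ with $A(x_1,\dots,x_m)\gt{u_j}E'$ and $|u_j|=d_0$; Proposition~\ref{prop:sizeinc}(2) then yields $\height(E') \leq 1 + d_0\cdot\hinc$. It therefore suffices to prove $|\mu^\uncl_j| \leq \height(E')\cdot(d_0-1)$, since this gives $\length(\rho'_j\mu^\uncl_j) \leq d_0 + (1+d_0\cdot\hinc)(d_0-1) = d_2$.

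The crux is a ``sink within $d_0{-}1$ steps'' property extracted from the minimality of $\mu_j\rho_{j+1}$: no strictly shorter prefix may satisfy stopping case~(a) of the transformation, so L-balancing must fail at every position $p \in [0,|\mu_j|-1]$ of the play. Failure of L-balancing at $p$ means the $d_0$ left-hand rule applications starting at $p$ cannot be lifted abstractly from the current nonterminal root, which forces the abstract execution to reach a variable within at most $d_0-1$ steps; concretely, the left-hand side sinks to a root-successor of its current top within $d_0-1$ steps. Iterating this from the initial term $E'\sigma''$ of $\mu_j$, the left-hand path descends through the finite tree of $E'$: after at most $k(d_0-1)$ steps we have exposed a depth-$k$ subterm of $E'$. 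Since every leaf of $E'$ is a variable and the depth is bounded by $\height(E')$, some variable $x_i$ of $E'$ is exposed by position $\height(E')\cdot(d_0-1)$ at the latest, and this bounds $|\mu^\uncl_j|$ (either because $\mu^\uncl_j$ ends at the first such exposure, or because in the ``no decomposition'' case $|\mu_j|$ itself must end before any exposure position). The main obstacle is the careful translation of ``L-balancing not enabled'' into the concrete sinking statement via the definition of ``root-performable''; after that, the descent argument through the finite term $E'$ is essentially routine.
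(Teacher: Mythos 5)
Your proof is correct and follows essentially the same route as the paper: the first inequality by the four-case shape analysis of $w^\uncl_j$, and the second by observing that $\height(E')\leq 1+d_0\cdot\hinc$ and that minimality of $\mu_j\rho_{j+1}$ forces the left-hand path of $\mu^\uncl_j$ to be $d_0$-sinking through the finite term $E'$, giving $\length(\mu^\uncl_j)\leq\height(E')\cdot(d_0{-}1)$. The paper phrases the sinking step slightly more compactly (``the path $E'\sigma''\gt{v_{j1}}\overline{T}_j$ must be $d_0$-sinking, otherwise there would be an earlier next balancing step''), but your step-by-step descent argument and your handling of the ``no decomposition'' case ($\mu^\dsink_j=\varepsilon$) are exactly what that remark compresses.
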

\begin{proof}
We have $|w^\uncl_j|\leq
	\length(\rho'_j\mu^\uncl_{j})$ by the above definitions
(since $\length(\rho'_j\mu^\uncl_{j})\geq d_0$, and either 
$|w^\uncl_j|=\length(\rho'_j\mu^\uncl_{j})$ or $|w^\uncl_j|<d_0$).

W.l.o.g. we suppose $\rho_j\der_L\rho'_j$ (illustrated in Fig.\ref{fig:balstep})
and present $\rho'_j\mu^\uncl_{j}$ accordingly as
\begin{center}
$\rho'_j\mu^\uncl_{j}=\toppair{A(x_1,\dots,x_m)\sigma'}{U_j}\bothgt{u_j}{u'_j}
\toppair{E'\sigma'}{U'_j}\econc\toppair{E'\sigma''}{U'_j}\bothgt{v_{j1}}{v'_{j1}}
\toppair{\overline{T}_j}{\overline{U}_{j}}$ 
\end{center}
where
$A(x_1,\dots,x_m)\gt{u_j}E'$ and $|u_j|=d_0$; hence 
$\height(E')\leq 1+d_0\cdot \hinc$.
We have $\overline{T}_j=T_{j+1}$ if $\mu^\dsink_j=\varepsilon$, and 
$\overline{T}_j=x_i\sigma''$ (for some $i\in[1,m]$) 
if $\mu^\dsink_j\neq\varepsilon$.

The path 
$E'\sigma''\gt{v_{j1}}\overline{T}_j$ must be
$d_0$-sinking (otherwise there would be 
an earlier next balancing step).
Hence 
 $|v_{j1}|\leq \height(E')\cdot(d_0-1)$.
We thus get 
\begin{center} 
$\length(\rho'_j\mu^\uncl_{j})=|u_j|+|v_{j1}|\leq d_0 + (1+d_0\cdot
\hinc)\cdot(d_0-1)=d_2$.
\end{center}
\end{proof}	

Having bounded the parts $\rho'_j\mu^\uncl_j$, we will 
now bound 
the total length of the suffixes of $\mu^\dsink_j$ that are ``close
to'' $T_0,U_0$; then we will finally bound the number and the length
of so-called ``crucial segments'' of $\pi_\ell$  starting with pivots that are also close 
to $T_0,U_0$ in a sense.

\subparagraph{Close sink-parts in $\pi_\ell$.}

Since $\mu_0=\mu^\dsink_0=\toppair{T_0}{U_0}\bothgt{v_0}{v'_0}\toppair{T_1}{U_1}$ 
is $d_0$-sinking, both paths $T_0\gt{v_0}T_1$ and $U_0\gt{v'_0}U_1$
are frequently visiting subterms of the terms $T_0$ and $U_0$.
Using the fact that no pair repeats along $\pi_\ell$
(recall Proposition~\ref{prop:samelengthmodplays}),
we now derive a bound on the length of $\mu_0$ and other 
segments that are ``close'' to $(T_0,U_0)$.

For each $j\in[1,\ell]$ where $\mu^\dsink_j\neq\varepsilon$
we define the presentation
$\mu^\dsink_j=\mu^\usink_j\mu^\csink_j$ 
(the superscript $\usink$ for ``unclear sinking'' and 
$\csink$ for ``close sinking'') as follows:
If some of the paths in the play 
$\mu^\dsink_j=\toppair{\overline{T}_j}{\overline{U}_j}\bothgt{v_{j2}}{v'_{j2}}\toppair{T_{j+1}}{U_{j+1}}$
never visits a subterm of $T_0$ or $U_0$, then 
$\mu^\usink_j=\mu^\dsink_j$ and
$\mu^\csink_j=\varepsilon$.
Otherwise 
we write $\mu^\dsink_j$ as 
$\toppair{\overline{T}_j}{\overline{U}_j}
\bothgt{\overline{v}_{j2}}{\overline{v}'_{j2}}
\toppair{\overline{\overline{T}}_j}{\overline{\overline{U}}_j}
\bothgt{\overline{\overline{v}}_{j2}}{\overline{\overline{v}}'_{j2}}
\toppair{T_{j+1}}{U_{j+1}}$
for the shortest prefix
$\mu^\usink_j=
\toppair{\overline{T}_j}{\overline{U}_j}
\bothgt{\overline{v}_{j2}}{\overline{v}'_{j2}}
\toppair{\overline{\overline{T}}_j}{\overline{\overline{U}}_j}$
such that each of the paths 
$\overline{T}_j\gt{\overline{v}_{j2}}\overline{\overline{T}}_j$ 
and 
$\overline{U}_j\gt{\overline{v}'_{j2}}\overline{\overline{U}}_j$ 
visits a
subterm of $T_0$ or $U_0$; in this case 
$\mu^\csink_j=\toppair{\overline{\overline{T}}_j}{\overline{\overline{U}}_j}
\bothgt{\overline{\overline{v}}_{j2}}{\overline{\overline{v}}'_{j2}}
\toppair{T_{j+1}}{U_{j+1}}$.
(Since $\mu^\dsink_j$ is $d_0$-sinking,
both paths 
$\overline{\overline{T}}_j
\gt{\overline{\overline{v}}_{j2}}
T_{j+1}$ and 
$\overline{\overline{U}}_j
\gt{\overline{\overline{v}}'_{j2}}
U_{j+1}$
are frequently visiting subterms of the terms $T_0$ and $U_0$.)
If $\mu^\dsink_j=\varepsilon$, then we put
$\mu^\usink_j=\mu^\csink_j=\varepsilon$; we also put
$\mu_0=\mu^\dsink_0=\mu^\csink_0$ (while $\mu^\usink_0=\varepsilon$).

The balanced modified play $\pi_\ell$ (\ref{eq:piellbasic}) can be
thus presented in more detail as
\begin{equation}\label{eq:piellfinal}
\pi_\ell=\mu^\csink_0\rho'_1\mu^\uncl_1\mu^\usink_1\mu^\csink_1\rho'_2
\mu^\uncl_2\mu^\usink_2\mu^\csink_2
\cdots \rho'_\ell \mu^\uncl_\ell\mu^\usink_\ell\mu^\csink_\ell.
\end{equation}
We refer to 
$\mu^\csink_{j}$, $j\in[0,\ell]$, as to
\emph{close sink-parts}.
The next proposition bounds the total length of close sink-parts 
in~(\ref{eq:piellfinal}), using 
the small number 
\begin{equation}\label{eq:dthree}
	d_3=(\max\{d_0,|\calR|^{d_0}\})^2
\end{equation}
(determined by $\calG=(\calN,\act,\calR)$).

\begin{proposition}\label{prop:dthree}
	$\sum_{j\in[0,\ell]}\length(\mu^{\csink}_j)\leq 
	d_3\cdot(\pressize(T_0,U_0))^2$.
\end{proposition}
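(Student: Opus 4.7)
The plan is to combine a structural observation about the terms appearing in the close sink-parts with the no-repeat property from Proposition~\ref{prop:samelengthmodplays}. Concretely, I will first establish the following claim: for every $j\in[0,\ell]$ and every pair $(T,U)\in\pairs(\mu^\csink_j)$, both $T$ and $U$ are reachable in the deterministic LTS $\calL^\ltsrul_\calG$ by some word $w\in\calR^*$ with $|w|<d_0$ starting from some subterm of $T_0$ or $U_0$. The idea is that by the very definition of the split $\mu^\dsink_j=\mu^\usink_j\mu^\csink_j$, both sides have already visited a subterm of $T_0\cup U_0$ by the beginning of $\mu^\csink_j$; because $\mu^\dsink_j$ is $d_0$-sinking, the $d_0$-sinking decomposition of each path, together with the fact that a sink-segment sends any $A(x_1,\dots,x_m)\sigma$ to its $i$-th root-successor $x_i\sigma$, ensures that after the first visit to a subterm of $T_0\cup U_0$ all subsequent sink-segment boundaries are again subterms of $T_0\cup U_0$, while intermediate states sit within $d_0{-}1$ rule applications of such a boundary.

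Having established this, I will count. There are at most $\pressize(T_0,U_0)$ subterms of $T_0$ or $U_0$. For any fixed term $V$, since $\calL^\ltsrul_\calG$ is deterministic, the number of distinct terms reachable from $V$ by a path of length less than $d_0$ is at most $\sum_{i=0}^{d_0-1}|\calR|^i$, which is bounded by $\max\{d_0,|\calR|^{d_0}\}$ (the case $|\calR|=1$ yielding $d_0$, the case $|\calR|\ge 2$ bounded by $|\calR|^{d_0}$). Hence both the first and the second component of any pair in any $\mu^\csink_j$ range over a set of size at most $\pressize(T_0,U_0)\cdot\max\{d_0,|\calR|^{d_0}\}$, so the number of distinct possible pairs is at most $d_3\cdot(\pressize(T_0,U_0))^2$. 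By Proposition~\ref{prop:samelengthmodplays}, no pair occurs twice in $\pi_\ell$; in particular, the pairs that appear across the disjoint subplays $\mu^\csink_0,\mu^\csink_1,\dots,\mu^\csink_\ell$ are pairwise distinct. Since each nonempty $\mu^\csink_j$ satisfies $\length(\mu^\csink_j)\le|\pairs(\mu^\csink_j)|$, summing yields
\[
\sum_{j\in[0,\ell]}\length(\mu^\csink_j)\le\sum_{j\in[0,\ell]}|\pairs(\mu^\csink_j)|\le d_3\cdot(\pressize(T_0,U_0))^2,
\]
which is the desired bound.

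The main obstacle is the structural claim of the first paragraph: the visit to a subterm of $T_0\cup U_0$ during $\mu^\usink_j$ may \emph{a priori} land in the middle of a sink-segment of the standard $d_0$-sinking decomposition rather than at a boundary. The needed argument is that from the first visited subterm---which is itself a non-variable term and hence admits further rule applications---one may re-decompose the remainder of the path as a new $d_0$-sinking sequence whose sink-segment boundaries are subterms of $T_0\cup U_0$, so that the ``close to a subterm'' property propagates forward along the whole of $\mu^\csink_j$. Once this is set up, the counting and the invocation of Proposition~\ref{prop:samelengthmodplays} are routine.
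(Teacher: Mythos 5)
Your proof takes the same approach as the paper's (which is in fact terser and leaves the structural claim implicit): observe that every pair appearing in some $\mu^\csink_j$ has both components within fewer than $d_0$ rule applications of a subterm of $T_0$ or $U_0$, count such pairs (at most $\max\{d_0,|\calR|^{d_0}\}\cdot\pressize(T_0,U_0)$ choices per component, hence at most $d_3\cdot(\pressize(T_0,U_0))^2$ pairs), and invoke the no-repeat property from Proposition~\ref{prop:samelengthmodplays}.

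One remark on the ``main obstacle'' you flag: the re-decomposition you propose does not quite work as stated, because once the path hits a subterm $V$ of $T_0$ or $U_0$ in the middle of a sink-segment $B_p\gt{w_{p+1}}B_{p+1}$, the remaining suffix of $w_{p+1}$ need not be a sink-segment (nor a concatenation of sink-segments) \emph{from} $V$, since $B_{p+1}$ may sit at depth greater than one in $V$. The cleaner observation is that $B_{p+1}$ is already a subterm of $V$: writing $V=E\sigma$ with the remaining suffix taking $E\gt{v'}x_i$ in $\calL^\ltsrul_\calG$, the variable $x_i$ must occur in $E$ (variables do not appear out of nowhere along a path), so $B_{p+1}=x_i\sigma$ is a subterm of $V=E\sigma$, hence of $T_0$ or $U_0$. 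The intermediate terms strictly between $V$ and $B_{p+1}$ are each within fewer than $d_0$ steps of $V$ itself, and from $B_{p+1}$ onward the \emph{original} $d_0$-sinking decomposition already has all its boundaries being subterms of $T_0$ or $U_0$ (each boundary is a root-successor of the previous one). This yields the structural claim with no re-decomposition at all; with that fix, your argument is complete.
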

\begin{proof}
The number of  subterms of $T_0$ and $U_0$ is
$\pressize(T_0,U_0)$, and each
term can reach at most $\max\{|\calR|^{d_0},d_0\}$
terms within less than $d_0$ steps 
(since $|\calR|^0+|\calR|^1+\cdots +|\calR|^{d_0-1}\leq |\calR|^{d_0}$
when $|\calR|\geq 2$).
Hence there are at most
$\big(\max\{|\calR|^{d_0},d_0\}\cdot \pressize(T_0,U_0)\,\big)^2$ elements
in
$\bigcup_{j\in[0,\ell]}\pairs(\mu^{\csink}_j)$.
Since there is no repeat of a pair in
$\pi_\ell$, 
the claim follows.
\end{proof}

\subparagraph*{Crucial segments of $\pi_\ell$.}
For $\pi_{\ell}=\mu^\csink_0\rho'_1\mu^\uncl_{1}\mu^\usink_{1}\mu^\csink_{1}
\rho'_2\mu^\uncl_{2}\mu^\usink_{2}\mu^\csink_{2}\cdots
\rho'_{\ell}\mu^\uncl_{\ell}\mu^\usink_{\ell}\mu^\csink_{\ell}$
and the respective pivot path 	
$W_0\gt{w_0}W_1\gt{w_1}W_2\gt{w_2}\cdots W_\ell\gt{w_\ell}W_{\ell+1}$,
assuming $\ell\geq 1$,
we say that $W_j$, $j\in[1,\ell]$ is \emph{close}
(which is another variant of closeness to $(T_0,U_0)$) if the path
$W_{j-1}\gt{w_{j-1}}W_j$ visits a subterm of $T_0$ or $U_0$; in this
case we also write $W_{j-1}\gt{w_{j-1}}W_j$ as 
\begin{center}
$W_{j-1}\gt{w'_{j-1}}V_{j-1}\gt{w''_{j-1}}W_j$
\end{center}
where $V_{j-1}$ is the
last subterm of $T_0$ or $U_0$ in the path (not excluding the cases 
$V_{j-1}=W_{j-1}$ and $V_{j-1}=W_{j}$). We note that $W_1$ is close,
since  $W_0\in\{T_0,U_0\}$.

Let $\{j\in[1,\ell]\mid W_j$ is close$\}=\{k_1,k_2,\dots,k_p\}$
where $1=k_1<k_2<k_3\cdots <k_p\leq\ell$; for technical reasons we
also put $k_{p+1}=\ell{+}1$.
The pivot path can be thus written
\begin{equation}\label{eq:closepivotpath}
	\textnormal{
	$W_0\gt{w'_{0}}[V_0\gt{w''_{0}}W_1\gt{w_1}\cdots
W_{k_2-1}]\gt{w'_{k_2-1}}\cdots[V_{k_{p}-1}\gt{w''_{k_{p}-1}}W_{k_{p}}\gt{w_{k_{p}}}\cdots
	W_{\ell}]\gt{w_\ell}W_{\ell+1}$
}
\end{equation}	
where the brackets are just highlighting the corresponding segments.
We use the segmentation~(\ref{eq:closepivotpath}) of the pivot path to 
induce the following segmentation of $\pi_\ell$:
$$\mu^\csink_0\big[\rho'_{1}\cdots\mu^\usink_{k_2-1}\big]\mu^\csink_{k_2-1} 
\big[\rho'_{k_2}\cdots\mu^\usink_{k_3-1}\big]\mu^\csink_{k_3-1}
\cdots\cdots
\big[\rho'_{k_{p}}\cdots\mu^\usink_{\ell}\big]\mu^\csink_{\ell}\,.
$$
The highlighted segments are called the \emph{crucial segments} (of
$\pi_\ell$). 
The total length of ``non-crucial'' segments
$\mu^\csink_0$, $\mu^\csink_{k_2-1}$, $\mu^\csink_{k_3-1}$, $\cdots$, $\mu^\csink_{\ell}$
is bounded by Proposition~\ref{prop:dthree}. 
We note that $\mu^\csink_{j}$ inside the crucial segments are
empty since otherwise we had a close pivot there.

For bounding the number $p$ of crucial segments and their lengths, it
is useful to use the notions of stairs and their simple-stair
decompositions.

\subparagraph*{Stairs, simple stairs, simple-stair decompositions.}

A word $v\in\calR^*$ is a \emph{stair} if $v=\varepsilon$ or 
$v=rv'$ where $r\in\calR$, let $r$ be
$A(x_1,\dots,x_m)\gt{a}E$, and $E\gt{v'}F$ for some $F\not\in\var$.
If $v$ is a stair,
then any path of the form  $V\gt{v}V'$ is also 
called a stair (in the form $A(x_1,\dots,x_m)\sigma\gt{v}F\sigma$).
Hence no prefix of a stair is a sink-segment.

We say that $v=rv'\in\calR^+$
($r\in\calR$) is 
a \emph{simple stair} if $A(x_1,\dots,x_m)\gt{r}E\gt{v'}F$ 
(for $r$ being $A(x_1,\dots,x_m)\gt{a}E$)
where $F$ is a
subterm of $E$ with a nonterminal root (hence $F\not\in\var$)
and $v'$ is a (possibly empty) concatenation of (possibly long) sink-segments
(hence $v'=u_1u_2\cdots u_k$ where $u_i$, $i\in[1,k]$, are
sink-segments).
If $v$ is a simple stair, then also any path $V\gt{v}V'$
is called a simple
stair.

\begin{proposition}\label{prop:stairdecomp}
	\begin{enumerate}
		\item	Any stair $v\in\calR^*$ has the unique \emph{simple-stair
	decomposition}
 $v=v_1v_2\cdots v_q$ ($q\in\Nat$) where $v_i$,
$i\in[1,q]$, are simple stairs.
\item
If $G\gt{v_1v_2\cdots v_q}G'$ where  $v_i$
are simple stairs, then
$\pressize(G')\leq
\pressize(G)+q\cdot\stepinc$; moreover, if $G$ is finite, then
$\height(G')\leq \height(G)+q\cdot\hinc$.
\end{enumerate}
\end{proposition}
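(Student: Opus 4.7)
The plan is to prove part~1 by induction on $|v|$, treating existence and uniqueness together, and then to prove part~2 by a straightforward induction on $q$ using the bounded-growth properties of a single simple stair. For existence, given a non-empty stair $v = r_1 v''$ with $r_1: A(x_1,\dots,x_m)\gt{a}E$, I would first observe that $E\not\in\var$, since otherwise the abstract run on $A(x_1,\dots,x_m)$ would already be stuck at a variable, contradicting that $v$ is a stair. Then I would greedily parse the abstract run of $v''$ starting from $E$ into sink-segments $u_1, u_2, \ldots, u_p$; these are uniquely delimited by the positions of variable-RHS rules, since a sink-segment is exactly the portion from the start (or the previous sink end) up to the next variable-RHS rule. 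Set $v_1 = r_1 u_1 u_2 \cdots u_k$ where $k$ is the largest index such that each abstract intermediate term $F_0 = E, F_1, \ldots, F_k$ is a non-variable subterm of $E$; by construction $v_1$ is a simple stair. The remaining suffix is then a stair of smaller length, because its first rule (if any) must be non-variable-RHS: a variable-RHS continuation would extend the sink-segment sequence and contradict the maximality of $k$ (or force the run to stop at a variable, contradicting that $v$ is a stair). The induction hypothesis then applies.

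For uniqueness, I would show that in any other decomposition $v = v'_1 v'_2 \cdots v'_{p'}$, the prefix $v'_1$ must coincide with $v_1$. Both start with $r_1$, and since the sink-segment parse of the suffix is uniquely determined by the variable-RHS positions, the only freedom is in the number $k'$ of sink-segments absorbed into $v'_1$. If $k' > k$, then some $F_j$ with $j > k$ would have to be a non-variable subterm of $E$, contradicting the maximality of $k$. If $k' < k$, then $v'_2$ would begin with the first rule of $u_{k'+1}$, which is either variable-RHS (ruling out that $v'_2$ is a simple stair) or non-variable-RHS; in the latter case the abstract run of $v'_2$ would have to fit the simple-stair definition starting at $F_{k'}$, yet by the choice of the parse of $v''$ the continuation of $u_{k'+1}$ is committed to completing the sink back to a subterm of $E$, which cannot simultaneously serve as a sink to a subterm of $v'_2$'s own first RHS. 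Hence $k' = k$ and $v'_1 = v_1$; induction on the remaining suffix completes uniqueness.

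For part~2 I would use induction on $q$, with base $q = 0$ trivial. For the inductive step, write $G \gt{v_1 \cdots v_{q-1}} G'' \gt{v_q} G'$ and apply the induction hypothesis to $G''$. The key remaining fact is that a single simple stair $v_q = r u_1 \cdots u_k$ with $r: A(x_1,\dots,x_m) \gt{} E$ increases $\pressize$ by at most $\stepinc$ and, for finite terms, increases $\height$ by at most $\hinc$. Writing $G'' = A(x_1,\dots,x_m)\sigma$ and $G_0 = E\sigma$, the least graph presentation of $G_0$ is obtained from that of $G''$ by replacing the root $A$-node by the $\propsize(E) \leq \stepinc$ nonterminal nodes of $E$ while keeping the variable-pointers into $\sigma$'s subgraphs intact, giving $\pressize(G_0) \leq \pressize(G'') + \stepinc$. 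Each subsequent sink $G_{j-1} \gt{u_j} G_j$ takes the current term to one of its root-successors, so $G_j$ is a subterm of $G_{j-1}$ (and hence of $G_0$), whence $\pressize(G') = \pressize(G_k) \leq \pressize(G_0) \leq \pressize(G'') + \stepinc$. For height in the finite case, $\height(E\sigma) \leq \height(E) + \height(G'') - 1 \leq \height(G'') + \hinc$ (since $\height(E) \leq 1 + \hinc$), and sinks pass to subterms of strictly smaller height.

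The main obstacle will be the uniqueness argument in part~1: I need to rule out decompositions that cut a simple stair prematurely and restart inside a not-yet-completed sink-segment, which requires carefully showing that any non-variable-RHS rule appearing inside such an incomplete sink-segment cannot consistently play the role of ``first rule of the next simple stair'' given the global constraint that the abstract run of $v$ ends at a non-variable term.
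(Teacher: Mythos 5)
Your approach is essentially the same as the paper's: you define $v_1$ as $r_1$ followed by the maximal prefix of $v''$ that decomposes into sink-segments of $E$, which coincides with the paper's characterization of $v_1$ as the shortest nonempty prefix whose remainder is a stair. Part~2 is also handled the same way (bound one simple stair, then induct on $q$), and your argument there is correct. However, the justification of part~1 contains several genuine errors.

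First, the claim that the sink-segments ``are uniquely delimited by the positions of variable-RHS rules'' is false. A sink-segment $A(x_1,\dots,x_m)\gt{w}x_i$ may contain \emph{several} variable-RHS rules: any intermediate step may apply a variable-RHS rule that exposes a root-successor which is still a non-variable term (e.g., $B(y_1,y_2)\gt{}C(y_1,D(y_2,y_2))\gt{}D(y_2,y_2)\gt{}y_2$ uses two variable-RHS rules but is a single sink-segment). The greedy parse is unique, but for a different reason: the position where the abstract run from the current term's root first hits a variable of that abstraction is uniquely determined, and once a variable is hit the run cannot continue.

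Second, the constraint you place on $k$ --- ``the largest index such that each $F_0=E,F_1,\dots,F_k$ is a non-variable subterm of $E$'' --- is vacuous. Since $v$ is a stair, the run $A(x_1,\dots,x_m)\gt{v}F$ never visits a variable at any intermediate or final position (otherwise it would stall or end at a variable), so every $F_i$ produced by the parse is automatically a non-variable subterm of $E$. The real reason the parse stops is that no prefix of the remaining word is a sink-segment of $F_k$; this is also exactly what makes the remaining suffix a stair. Your parenthetical ``(or force the run to stop at a variable, \dots)'' handles only the case where the next rule is variable-RHS, and silently omits the case where the next rule is non-variable-RHS but the abstract run from $F_k$ simply never returns to a variable of its root abstraction.

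Third, this wrong formulation of maximality breaks the $k'>k$ direction of uniqueness. You write that $k'>k$ would force ``some $F_j$ with $j>k$ to be a non-variable subterm of $E$, contradicting the maximality of $k$''; but $F_j$ \emph{is} a non-variable subterm of $E$ whenever it exists, so no contradiction arises from that observation. The correct contradiction is simply that a $(k{+}1)$-st sink-segment $u_{k+1}$ does not exist as a prefix of the remaining word, which is precisely the reason the greedy parse stopped at $k$. For the $k'<k$ direction, which you yourself flag as the main obstacle, your sketch is too vague to verify. The clean route (and in essence what the paper's terse remark points to) is that the remainder after $v'_1$ then starts with $u_{k'+1}$, whose abstract run from the RHS of its first rule sinks to a variable; hence that remainder is not a stair, contradicting the fact that every suffix of a valid decomposition must be a stair (the shortest-prefix characterization of $v_1$ makes this immediate).
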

\begin{proof}
1. By induction on $|v|$, for stairs $v$. If $v=\varepsilon$, then $q=0$. 
If $|v|>0$, then we write $v=v_1v'$ for the shortest $v_1\in\calR^+$
such that $v'$ is a stair; $v'$ has the unique simple-stair
	decomposition by the induction hypothesis.
We can easily verify that $v_1$ is a simple stair, and that we cannot have 
$v_1v'=v'_1v''$ where $v'_1$ is
a simple stair, $v''$ is a stair (decomposed into simple stairs), 
and $v'_1\neq v_1$.

2. 
We recall that $A(x_1,\dots,x_m)\gt{r}E$ entails 
$\pressize(E\sigma)\leq \pressize(A(x_1,\dots,x_m)\sigma)+\stepinc$,
and we have  
$\pressize(F\sigma)\leq \pressize(E\sigma)$ for any 
subterm $F$ of $E$; moreover, if $A(x_1,\dots,x_m)\sigma$ is finite,
then $\height(F\sigma)\leq\height(E\sigma)\leq 
\height(A(x_1,\dots,x_m)\sigma)+\hinc$.
\end{proof}

\subparagraph*{Bounding the number of crucial segments.}

To 
bound the number $p$ of crucial segments, we use the small number
\begin{equation}\label{eq:dfour}
	d_4=d_1\cdot (1{+}|\nonvarsubrhs|)^{d_2+d_0-1}
\end{equation}
where $\nonvarsubrhs=\{F\mid F$ is a subterm of the rhs 
	of a rule
in $\calR$ and $F\not\in\var\}$.
\begin{proposition}\label{prop:dfour}
The number $p$ of crucial segments is
at most
$d_4\cdot\pressize(T_0,U_0)$.
\end{proposition}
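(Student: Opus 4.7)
The plan is to count close pivots in two moves: bound the distance from each close pivot back to some subterm of $T_0, U_0$, and then convert this into a cardinality bound on pivot terms that is multiplied by the per-term occurrence bound from Proposition~\ref{prop:fewbaltopivot}.

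First, fix a close pivot $W_{k_i}$, $i\in[1,p]$. The incoming pivot-path segment $W_{k_i-1}\gt{w_{k_i-1}}W_{k_i}$ decomposes into its unclear prefix (of length at most $d_2$, by Proposition~\ref{prop:dtwo}) followed by a $d_0$-sinking suffix. Let $V_{k_i-1}$ be the last subterm of $T_0$ or $U_0$ visited along this segment, and write $W_{k_i-1}\gt{w'_{k_i-1}}V_{k_i-1}\gt{w''_{k_i-1}}W_{k_i}$ as in~(\ref{eq:closepivotpath}). The key observation is that $|w''_{k_i-1}|\leq d_2+d_0-1$. Indeed, the portion of the $d_0$-sinking suffix lying after $V_{k_i-1}$ cannot contain a complete sink-segment: if it did, its endpoint would be a root-successor of $V_{k_i-1}$, and hence itself a subterm of $T_0$ or $U_0$, contradicting the maximality of $V_{k_i-1}$. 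So the post-$V_{k_i-1}$ part of the sinking suffix has length less than $d_0$, and combining with the at-most-$d_2$ unclear prefix gives the claim.

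Second, count. Any close pivot is reachable in $\calL^{\ltsrul}_\calG$ from some subterm of $T_0$ or $U_0$ by a path of length at most $d_2+d_0-1$. A routine inductive argument on the number of rewriting steps bounds the number of terms reachable within $k$ steps from a fixed term by $(1+|\nonvarsubrhs|)^k$: from a term with root $A$, a one-step root-rewrite yields either a root-successor of the current term (a single ``sink'' option) or a term whose top is a non-variable subterm of some rhs (at most $|\nonvarsubrhs|$ options), giving at most $1+|\nonvarsubrhs|$ possibilities per step. Since $T_0, U_0$ have at most $\pressize(T_0,U_0)$ subterms, the number of distinct close pivot terms is at most $\pressize(T_0,U_0)\cdot(1+|\nonvarsubrhs|)^{d_2+d_0-1}$. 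Finally, the bal-results $(T''_{k_1},U''_{k_1}),\dots,(T''_{k_p},U''_{k_p})$ have strictly decreasing eq-levels in $\pi_\ell$ and are therefore pairwise distinct, so by Proposition~\ref{prop:fewbaltopivot} each close pivot term occurs at most $d_1$ times among $W_{k_1},\dots,W_{k_p}$. Multiplying yields $p\leq d_1\cdot\pressize(T_0,U_0)\cdot(1+|\nonvarsubrhs|)^{d_2+d_0-1}=d_4\cdot\pressize(T_0,U_0)$.

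The main obstacle is justifying the counting bound $(1+|\nonvarsubrhs|)^{d_2+d_0-1}$ in the second step; this requires careful bookkeeping of how the ``top'' of a reachable term is iteratively built from subterms of rhss, distinguishing the variable-valued rhss (which yield sink moves collapsing to root-successors of the current term) from the non-variable ones. The maximality argument for $V_{k_i-1}$ in the first step must also be checked uniformly whether $V_{k_i-1}$ falls inside the unclear prefix or inside the sinking suffix, but both cases are handled by the same root-successor observation.
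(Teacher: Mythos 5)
Your proof shares the paper's high-level skeleton (distinguish the close starting pivot from its representative subterm of $T_0,U_0$, then multiply a cardinality bound on $\textsc{SP}=\{W_{k_1},\dots,W_{k_p}\}$ by the per-pivot bound $d_1$ from Proposition~\ref{prop:fewbaltopivot}), but your ``key observation'' in step~2 is false, and the flaw propagates into the exponent in step~3.

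You claim $|w''_{k_i-1}|\leq d_2+d_0-1$, arguing that no complete sink-segment of the $d_0$-sinking suffix can lie after $V_{k_i-1}$ since its endpoint would be a root-successor of $V_{k_i-1}$. That inference is only valid for a sink-segment that \emph{starts at} $V_{k_i-1}$. In general $V_{k_i-1}$ may be visited strictly inside the unclear prefix $w^\uncl_{k_i-1}$; then the sinking suffix $\overline{W}_{k_i-1}\gt{\overline{w}}\overline{\overline{W}}\gt{\overline{\overline{w}}}W_{k_i}$ lies entirely after $V_{k_i-1}$, and the sink-segments in $\overline{w}$ expose root-successors of $\overline{W}_{k_i-1}$ (and then of its successors), not of $V_{k_i-1}$. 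Those need not be subterms of $T_0,U_0$. The number of such sink-segments is bounded (roughly by the height of the top of $\overline{W}_{k_i-1}$ relative to $V_{k_i-1}$, which is $O(d_2\cdot\hinc)$), but $\overline{w}$ can still have length well beyond $d_0-1$, so $|w''_{k_i-1}|$ can exceed $d_2+d_0-1$. The paper sidesteps this by bounding not the path length but the \emph{number of simple stairs} in the simple-stair decomposition of the stair $V_{k_j-1}\gt{w''_{k_j-1}}W_{k_j}$: a single simple stair is one rule followed by an arbitrarily long block of sink-segments, so the whole of $\overline{w}$ (however long) is absorbed without creating new simple stairs, and the count $d_2+(d_0-1)$ accounts only for the steps in $w^\uncl_{k_j-1}$ and in $\overline{\overline{w}}$. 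The quantity $(1+|\nonvarsubrhs|)^{d_2+d_0-1}$ in the paper is then the number of possible ``tops'' determined by such a short sequence of simple-stair targets — it is not the number of terms reachable in $d_2+d_0-1$ \emph{steps}, which is what your step~3 counts (and your per-step factor $1+|\nonvarsubrhs|$ also understates the number of one-step successors, since there can be up to $m$ sink options, not one). To repair the argument you need the notion of simple stairs and Proposition~\ref{prop:stairdecomp}, not a raw bound on $|w''_{k_i-1}|$.
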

\begin{proof}
First we note that we can have $W_j=W_{j'}$
for different $j,j'\in [1,\ell]$; but for each $W$ we can have 
$W=W_j$ for at most $d_1$ indices $j\in[1,\ell]$, since there are at
most $d_1$ possible bal-results for each pivot
(Proposition~\ref{prop:fewbaltopivot}) and the 
bal-results $(T''_j,U''_j)$, $j\in[1,\ell]$, are all pairwise
different (Proposition~\ref{prop:samelengthmodplays}).

Hence if we get a bound on the cardinality of the set 
$\textsc{SP}=\{W_{k_1},W_{k_2},\dots,W_{k_{p}}\}$ 
of ``starting pivots'' of the crucial segments
(where $k_1=1$), then multiplying this bound by $d_1$ yields a bound
on $p$.

We fix $j\in[1,p]$, and note that
the stair $V_{k_j-1}\gt{w''_{k_j-1}}W_{k_j}$
is a suffix of the path
	$W_{k_j-1}\gt{w^\uncl_{k_j-1}}\overline{W}_{k_j-1}\gt{w^\dsink_{k_j-1}}W_{k_j}$,
where 	$|w^\uncl_{k_j-1}|\leq d_2$ and 
$\overline{W}_{k_j-1}\gt{w^\dsink_{k_j-1}}W_{k_j}$ can be written
$\overline{W}_{k_j-1}\gt{\overline{w}}\overline{\overline{W}}
\gt{\overline{\overline{w}}}W_{k_j}$ where $\overline{w}$ is a
sequence of sink-segments and $|\overline{\overline{w}}|<d_0$.
The simple-stair decomposition of $V_{k_j-1}\gt{w''_{k_j-1}}W_{k_j}$
is thus a sequence of at most  $d_2{+}(d_0{-}1)$ simple stairs.

Hence a (generous) upper bound on $|\textsc{SP}|$ is 
$\pressize(T_0,U_0)\cdot (1{+}|\nonvarsubrhs|)^{d_2+d_0-1}$.
This yields $p\leq \pressize(T_0,U_0)\cdot
(1{+}|\nonvarsubrhs|)^{d_2+d_0-1}\cdot d_1=d_4\cdot\pressize(T_0,U_0)$ as claimed.
\end{proof}

\subparagraph*{Bounding the lengths of crucial segments.}

For $j\in[1,p]$, we view the number $k_{j+1}-k_{j}$
as the \emph{index length} of the crucial 
segment $\big[\rho'_{k_j}\cdots\mu^\usink_{k_{j+1}-1}\big]$. 
We first bound the index length, defining $n,s,g$ and using the bound
on $(n,s,g)$-sequences (Lemma~\ref{lem:ELdecreasbound}), and then we
bound the standard length.

We first note that \emph{each highlighted segment} 
in~(\ref{eq:closepivotpath}) 
\emph{is a stair}. 
Indeed, if
the path 
$$\big[V_{k_j-1}\gt{w''_{k_j-1}}W_{k_j}\gt{w_{k_j}}W_{k_j+1}\gt{w_{k_j+1}}W_{k_j+2}
\cdots \gt{w_{k_{j+1}-2}}W_{k_{j+1}-1}\big]$$ 
(for $j\in[1,p]$)
had a prefix that is a sink-segment, then one of 
$W_{k_j+1},W_{k_j+2},\dots,W_{k_{j+1}-1}$ would be also close, since 
$V_{k_j-1}$ is the last subterm of $T_0$ or $U_0$ in
$W_{k_j-1}\gt{w_{k_j-1}}W_{k_j}$, and each subterm of $V_{k_j-1}$ is
also a subterm of $T_0$ or $U_0$.

Thus the index length of crucial segments is bounded due to the next
lemma, for which we define 
the following small numbers:

\begin{equation}\label{eq:definingn}
n=m^{d_0}\,;
\end{equation}

\begin{equation}\label{eq:definings}
	s=m^{d_0+1}+(m+2)\cdot
	d_0\cdot\stepinc+(d_2+d_0-1)\cdot\stepinc\,;
\end{equation}

\begin{equation}\label{eq:definingg}
g=(d_2+d_0-1)\cdot\stepinc\,.
\end{equation}

\begin{lemma}\label{lem:nsgforstairs}
We assume a balanced modified play
$\pi_\ell=\mu^\csink_0\rho'_1\mu^\uncl_1\mu^\usink_1\mu^\csink_1\cdots
\rho'_\ell\mu^\uncl_\ell\mu^\usink_\ell\mu^\csink_\ell$
and the respective pivot path $W_0\gt{w_0}W_1\gt{w_1}\cdots
W_\ell\gt{w_\ell}W_{\ell+1}$.
Let 
\begin{equation}\label{eq:pivotstair}
V\gt{w}W_{j+1}\gt{w_{j+1}}W_{j+2}\cdots\gt{w_{j+k-1}}
W_{j+k}
\end{equation}
be a segment of the pivot path that is a stair,
where $j\geq 0$, $k\geq 1$, $j+k\leq\ell$,  
and $w$ is a suffix of $w_j$.
Let $e=1+\eqlevel(\finishofplay(\rho'_{j+1}))$
(where $\finishofplay(\rho'_{j+1})$
is the bal-result related to the pivot $W_{j+1}$, hence
$(T''_{j+1},U''_{j+1})$ in~(\ref{eq:modifplayrefined})).

Then $k\leq \calE_\calB$ 
for each $(n,s,g)$-candidate
$\calB$ that is full below $e$; in particular, $k\leq \calE_{\calB_{n,s,g}}$. 
(Here $n,s,g$ are the numbers defined
by~(\ref{eq:definingn}),~(\ref{eq:definings}),~(\ref{eq:definingg}).) 
\end{lemma}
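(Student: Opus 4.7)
The plan is to package the bal-results $(T''_{j+1},U''_{j+1}), \ldots, (T''_{j+k},U''_{j+k})$ attached to the pivots $W_{j+1}, \ldots, W_{j+k}$ as an $(n,s,g)$-sequence of length $k$ sharing a single tail substitution $\sigma$, and then invoke Lemma~\ref{lem:ELdecreasbound}. By construction of balanced modified plays these bal-results are eqlevel-decreasing, and $1+\eqlevel(T''_{j+1},U''_{j+1})=e$; so as soon as they are cast in the form $(E_i\sigma,F_i\sigma)$ with the right bounds on $\varin$ and $\pressize$, the conclusion $k\leq\calE_\calB$ follows.

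I would fix a $d_0$-top form $V=G_0\sigma$; hence $\pressize(G_0)\leq m^{d_0+1}$, $\varin(G_0)\subseteq\{x_1,\ldots,x_n\}$ with $n=m^{d_0}$, and $x\sigma\in\var$ for every variable $x$ occurring in $G_0$ at depth less than $d_0$. Because the entire segment $V\gt{w}W_{j+1}\gt{w_{j+1}}\cdots\gt{w_{j+k-1}}W_{j+k}$ is a stair, its abstract rewriting in $\calL^\ltsrul_\calG$ is well defined and never ends at a variable, so each pivot shares the \emph{same} tail: $W_{j+i}=G_i\sigma$ with $\varin(G_i)\subseteq\varin(G_0)$. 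Each of the stair pieces $w,w_{j+1},\ldots,w_{j+k-1}$ admits a simple-stair decomposition of length at most $d_2+d_0-1$, by the same counting argument as in the proof of Proposition~\ref{prop:dfour}; Proposition~\ref{prop:stairdecomp}(2) then yields $\pressize(G_i)\leq m^{d_0+1}+i(d_2+d_0-1)\stepinc$.

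The main obstacle is to verify that each presentation $W_{j+i}=G_i\sigma$ is a $d_0$-safe form, which is what Corollary~\ref{cor:sizebalresult} requires. I would prove this by a potential-function argument: track, for each intermediate abstract term $G_i^{(k)}$ produced when rewriting $G_i$ along an arbitrary word $v$ of length $\leq d_0$ performable from $W_{j+i}$, the minimum depth at which an occurrence of a ``potentially live'' variable of $G_0$ (one lying at depth exactly $d_0$ in $G_0$; the others already satisfy $x\sigma\in\var$) appears. Across a complete simple stair this minimum is non-decreasing --- a variable $x$ at depth $d-1$ in the $i$-th root-successor re-appears at depth at least $1+(d-1)=d$ in the new top, since the subterm reached has nonterminal root --- so in $G_i$ itself the minimum is still $\geq d_0$, and a single rule application in the rewriting of $v$ can lower it by at most one. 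Consequently, for $k<|v|\leq d_0$ the abstract root of $G_i^{(k)}$ can be neither such a ``potentially live'' variable (its minimum depth is still at least $1$) nor a depth-$<d_0$ variable (which would force the concrete root to be a variable, contradicting performability of $v$), and so $v$ is performable from $G_i$ as required.

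With $d_0$-safety in hand, Corollary~\ref{cor:sizebalresult} writes each bal-result as $(E_i\sigma,F_i\sigma)$ with $\varin(E_i,F_i)\subseteq\{x_1,\ldots,x_n\}$ and $\pressize(E_i,F_i)\leq\pressize(G_i)+(m+2)d_0\stepinc$. Substituting the bound on $\pressize(G_i)$ yields exactly $s$ when $i=1$ and $s+g(i-1)$ when $i\geq 2$, for $n,s,g$ as in~(\ref{eq:definingn})--(\ref{eq:definingg}). Thus the eqlevel-decreasing sequence $(E_1\sigma,F_1\sigma),\ldots,(E_k\sigma,F_k\sigma)$ is an $(n,s,g)$-sequence, and Lemma~\ref{lem:ELdecreasbound} applied with any $(n,s,g)$-candidate $\calB$ full below $e$ delivers $k\leq\calE_\calB$; in particular $k\leq\calE_{\calB_{n,s,g}}$.
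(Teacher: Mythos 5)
Your proof is correct and follows essentially the same route as the paper's: pass to a $d_0$-top form $V=G_0\sigma$, lift the stair to the abstract level so that $W_{j+i}=G_i\sigma$ with a common tail $\sigma$, bound $\pressize(G_i)$ via simple-stair decompositions (Proposition~\ref{prop:stairdecomp}), obtain the $(n,s,g)$-sequence of bal-results via Corollary~\ref{cor:sizebalresult}, and conclude with Lemma~\ref{lem:ELdecreasbound}. Your explicit potential-function verification that $G_i\sigma$ is a $d_0$-safe form of $W_{j+i}$ (tracking the minimum depth of the depth-$d_0$ variables of $G_0$, unchanged across simple stairs and dropping by at most one per rewriting step) is a sound filling-in of a step the paper only asserts after introducing the $d_0$-top form.
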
	

\begin{proof}
We will show that the (eqlevel-decreasing)
	sequence $\finishofplay(\rho'_{j+1})$, 
$\finishofplay(\rho'_{j+2})$, $\dots$, $\finishofplay(\rho'_{j+k})$
of the bal-results
corresponding to the pivots $W_{j+1}$, 
$W_{j+2}$, $\dots$, 
$W_{j+k}$
can be presented as an 
$(n,s,g)$-sequence
\begin{equation}\label{eq:balcreatensg}
(E_1\sigma,F_1\sigma), (E_2\sigma,F_2\sigma), \dots, 
(E_k\sigma,F_k\sigma).
\end{equation}
The claim then follows by
Lemma~\ref{lem:ELdecreasbound}.
Hence it remains to show the presentation~(\ref{eq:balcreatensg}) of
the respective bal-results.
By the definition of
stairs, 
we can present~(\ref{eq:pivotstair}) as
\begin{center}
$A(x_1,\dots,x_m)\sigma'\gt{w}G'_1\sigma'\gt{w_{j+1}}G'_{2}\sigma'\cdots
	\gt{w_{j+k-1}}
G'_{k}\sigma'$
\end{center}

where
\begin{equation}\label{eq:stairinppathbasic}
\textnormal{
	$A(x_1,\dots,x_m)\gt{w}G'_1\gt{w_{j+1}}G'_{2}\cdots\gt{w_{j+k-1}}G'_{k}$;
}
\end{equation}
we thus have $W_{j+i}=G'_{i}\sigma'$ (for $i\in[1,k]$) where $G'_i$
	are finite terms with nonterminal roots.

Recalling the refined presentation~(\ref{eq:pivotpathfiner}), we write
the path
$W_{j+i}\gt{w^\uncl_{j+i}}\overline{W}_{j+i}\gt{w^\dsink_{j+i}}W_{j+i+1}$,
for each $i\in[0,k{-}1]$, as 
$W_{j+i}\gt{w^\uncl_{j+i}}\overline{W}_{j+i}\gt{\overline{w}_i}\overline{\overline{W}}_{j+i}\gt{\overline{\overline{w}}_i}W_{j+i+1}$
where
$\overline{W}_{j+i}\gt{\overline{w}_i}\overline{\overline{W}}_{j+i}$
is a sequence of sink-segments of lengths less than $d_0$,
and $|\overline{\overline{w}}_i|<d_0$.
We thus present~(\ref{eq:stairinppathbasic}) as
$$A(x_1,\dots,x_m)\gt{w}G'_1\gt{w^\uncl_{j+1}}\overline{G}_1\gt{\overline{w}_{1}}
\overline{\overline{G}}_1
\gt{\overline{\overline{w}}_{1}}
G'_{2}\cdots
\gt{w^\uncl_{j+k-1}}\overline{G}_{k-1}
\gt{\overline{w}_{k-1}}
\overline{\overline{G}}_{k-1}
\gt{\overline{\overline{w}}_{k-1}}G'_{k}\,.$$
We recall that $|w^\uncl_{j+i}|\leq d_2$ (for all $i\in[0,k{-}1]$).
Since $w$ is a suffix of
$w^\uncl_jw^\dsink_j=w^\uncl_j\overline{w}_0\overline{\overline{w}}_0$,
we note that the simple-stair decomposition of the
stair $A(x_1,\dots,x_m)\gt{w}G'_1$ is a sequence of at most 
$d_2{+}(d_0{-}1)$ simple stairs.
More generally, for each $i\in[1,k]$, the simple-stair
decomposition of the stair
$$A(x_1,\dots,x_m)\gt{w}G'_1\gt{w^\uncl_{j+1}}\overline{G}_1\gt{\overline{w}_{1}}
\overline{\overline{G}}_1
\gt{\overline{\overline{w}}_{1}}
G'_{2}\cdots
\gt{w^\uncl_{j+i-1}}\overline{G}_{i-1}
\gt{\overline{w}_{i-1}}
\overline{\overline{G}}_{i-1}
\gt{\overline{\overline{w}}_{i-1}}G'_{i}$$
is a sequence of at most $i\cdot
(d_2{+}(d_0{-}1))$ simple stairs; hence
\begin{equation}\label{eq:gprimegrow}
	\textnormal{
	$\pressize(G'_i)\leq
\pressize(A(x_1,\dots,x_m))+i\cdot(d_2{+}d_0{-}1)\cdot\stepinc$
}
\end{equation}
(recalling Proposition~\ref{prop:stairdecomp}).
We recall the relation of a pivot, $W_{j+i}=G'_i\sigma'$ in our case, 
and its bal-result, 
as captured by Proposition~\ref{prop:relpivottobalres} (and illustrated
in Figure~\ref{fig:balstep}).
We note that $G'_i\sigma'$ might not be a $d_0$-safe form of $W_{j+i}$
(due to possible short branches of $G'_i$). 
This leads us to present 
$V=A(x_1,\dots,x_m)\sigma'$ in a $d_0$-top form, as
$A(x_1,\dots,x_m)\overline{\overline{\sigma}}\sigma$ where 
$A(x_1,\dots,x_m)\overline{\overline{\sigma}}$ is the respective
$d_0$-top.

Putting $G_i=G'_i\overline{\overline{\sigma}}$, we get
$W_{j+i}=G'_i\sigma'=G'_i\overline{\overline{\sigma}}\sigma=G_i\sigma$, for each
$i\in[1,k]$. We have 
$\varin(G_i)\subseteq\varin(A(x_1,\dots,x_m)\overline{\overline{\sigma}})\subseteq \{x_1,\dots,x_n\}$ (for
	$n=m^{d_0}$), and
any word   $v\in\calR^*$ with $|v|\leq d_0$
that is performable from $W_{j+i}=G_i\sigma$ is 
performable from $G_i$ as well.

Since $G_i\sigma$ is thus a $d_0$-safe form of
$W_{j+i}$,
the bal-result related to $W_{j+i}=G_i\sigma$
can be written as $(E_i\sigma,F_i\sigma)$ where 
$\varin(E_i,F_i)\subseteq
\varin(G_i)\subseteq\{x_1,\dots,x_n\}$,
and $\pressize(E_i,F_i)\leq \pressize(G_i)+(m{+}2)\cdot
d_0\cdot\stepinc$ (by Corollary~\ref{cor:sizebalresult}).
By mimicking the derivation of the bound~(\ref{eq:gprimegrow}), we get 
\begin{center}
$\pressize(G_i)\leq
\pressize(A(x_1,\dots,x_m)\overline{\sigma})+i\cdot(d_2{+}(d_0{-}1))\cdot\stepinc$.
\end{center}
Since $\pressize(A(x_1,\dots,x_m)\overline{\sigma})\leq m^{d_0+1}$,
and $g=(d_2{+}d_0{-}1)\cdot\stepinc$, we get
\begin{center}
$\pressize(G_i)\leq
m^{d_0+1}+i\cdot g$, for all $i\in[1,k]$.
 \end{center}
From $\pressize(E_i,F_i)\leq \pressize(G_i)+(m{+}2)\cdot
d_0\cdot\stepinc$ we derive, for all $i\in[1,k]$, that 
$$\pressize(E_i,F_i)\leq m^{d_0+1}+(m+2)\cdot
d_0\cdot\stepinc+i\cdot g=s+(i-1)\cdot g\,.$$
Hence the sequence $\finishofplay(\rho'_{j+1})$, 
$\finishofplay(\rho'_{j+2})$, $\dots$, $\finishofplay(\rho'_{j+k})$
can be indeed presented as an $(n,s,g)$-sequence
$(E_1\sigma,F_1\sigma), (E_2\sigma,F_2\sigma), \dots, 
(E_k\sigma,F_k\sigma)$.
\end{proof}

\begin{corollary}\label{cor:nsgforstairs}
For each crucial segment $\rho'_{k_j}\cdots\mu^\usink_{k_{j+1}-1}$ we
have $k_{j+1}-k_j\leq \calE_{\calB}$ for each $(n,s,g)$-candidate
$\calB$ that is full below $1+\eqlevel(\finishofplay(\rho'_{k_j}))$;
in particular, $k_{j+1}-k_j\leq \calE_{\calB_{n,s,g}}$.
\end{corollary}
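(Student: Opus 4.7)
The plan is that this corollary is essentially immediate from Lemma~\ref{lem:nsgforstairs} applied to each crucial segment, so the main task is to match up the notation and verify that the hypotheses of the lemma hold. First I would quote the observation made just before the lemma statement: each highlighted (crucial) segment
\[
V_{k_j-1}\gt{w''_{k_j-1}}W_{k_j}\gt{w_{k_j}}\cdots\gt{w_{k_{j+1}-2}}W_{k_{j+1}-1}
\]
of the pivot path~(\ref{eq:closepivotpath}) is a stair. Indeed, if some prefix of this path were a sink-segment, it would reach a subterm of $V_{k_j-1}$, and since $V_{k_j-1}$ is itself a subterm of $T_0$ or $U_0$, that subterm would be a subterm of $T_0$ or $U_0$ as well; but then one of the intermediate pivots $W_{k_j+1}, W_{k_j+2}, \ldots, W_{k_{j+1}-1}$ would be close, contradicting the choice of $k_j < k_{j+1}$ as consecutive indices in $\{k_1 < k_2 < \cdots < k_p\}$ (with the convention $k_{p+1} = \ell+1$).

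Then I would apply Lemma~\ref{lem:nsgforstairs} with the parameter ``$j$'' of the lemma instantiated as $k_j - 1$, with ``$k$'' instantiated as $k_{j+1} - k_j$, with ``$V$'' instantiated as $V_{k_j-1}$, and with ``$w$'' instantiated as the suffix $w''_{k_j-1}$ of $w_{k_j-1}$. The side conditions ``$j \geq 0$'', ``$k \geq 1$'', and ``$j+k \leq \ell$'' of the lemma translate to $k_j \geq 1$, $k_{j+1} > k_j$, and $k_{j+1} - 1 \leq \ell$; all three hold by the construction of $k_1 < k_2 < \cdots < k_{p} \leq \ell$ together with the convention $k_{p+1} = \ell + 1$. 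The quantity called $e$ in the lemma is $1 + \eqlevel(\finishofplay(\rho'_{k_j}))$, matching the corollary statement. Lemma~\ref{lem:nsgforstairs} therefore yields $k_{j+1} - k_j \leq \calE_\calB$ for any $(n,s,g)$-candidate $\calB$ full below $1+\eqlevel(\finishofplay(\rho'_{k_j}))$; and specializing to $\calB = \calB_{n,s,g}$ (which is full below every bound by Proposition~\ref{prop:nsgcand}) gives the particular bound $k_{j+1} - k_j \leq \calE_{\calB_{n,s,g}}$.

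There is essentially no obstacle at this stage: the real work has already been done inside Lemma~\ref{lem:nsgforstairs}, namely the construction of an $(n,s,g)$-sequence $(E_i\sigma, F_i\sigma)_{i \in [1,k]}$ from the bal-results along the stair-pivots, whose length is bounded via Lemma~\ref{lem:ELdecreasbound}. The corollary merely specializes that lemma to the particular stair segments that arise between two consecutive close pivots in the segmentation~(\ref{eq:closepivotpath}).
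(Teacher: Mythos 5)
Your proposal is correct and takes essentially the same route as the paper intends: the corollary is a direct instantiation of Lemma~\ref{lem:nsgforstairs} applied to the stair segment $V_{k_j-1}\gt{w''_{k_j-1}}W_{k_j}\gt{}\cdots\gt{}W_{k_{j+1}-1}$ (identified as a stair in the text immediately preceding the lemma), with the lemma's parameters matched exactly as you describe and the side conditions verified by the choice of consecutive close indices together with the convention $k_{p+1}=\ell+1$.
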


We will now  bound the (standard) length of a crucial segment by 
multiplying its index length, increased by $1$, by 
the small
number 
\begin{equation}\label{eq:dfive}
	d_5=(d_2+d_0-1)\cdot(1+(d_0-1)\cdot\hinc)\,.
\end{equation}

\begin{proposition}\label{prop:nsgforstairstwo}
	For each $j\in[1,p]$ we have
	$\length(\rho'_{k_j}\cdots\mu^\usink_{k_{j+1}-1})\leq
	d_5\cdot(1{+}k_{j+1}{-}k_j)$.
\end{proposition}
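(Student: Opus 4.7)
The plan is to decompose the length of the crucial segment $\rho'_{k_j}\cdots\mu^\usink_{k_{j+1}-1}$ into its pieces and bound each. I will write the length as
\[
\sum_{i=k_j}^{k_{j+1}-1}\bigl(\length(\rho'_i)+\length(\mu^\uncl_i)+\length(\mu^\usink_i)\bigr),
\]
noting that inside a crucial segment all the intermediate $\mu^\csink_i$ are empty (close pivots mark the segment boundaries). Proposition~\ref{prop:dtwo} immediately bounds $\length(\rho'_i\mu^\uncl_i)\leq d_2$. The task then reduces to bounding $\sum_i\length(\mu^\usink_i)$, which equals $\sum_i|w^\dsink_i|$ up to the trailing boundary piece, i.e.\ the pivot-side length of the $d_0$-sinking parts.

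For the pivot-side length, the key is that (as established just before Lemma~\ref{lem:nsgforstairs}) the pivot-path segment corresponding to the crucial segment is a stair. I will apply Proposition~\ref{prop:stairdecomp}(1) to obtain its unique simple-stair decomposition and bound (i)~the number of simple stairs and (ii)~the length of each.

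For (i), I will generalize the counting from the proof of Proposition~\ref{prop:dfour}. Each pivot-path segment $w_i=w^\uncl_iw^\dsink_i$ splits further as $w^\uncl_i\,\overline{w}_i\,\overline{\overline{w}}_i$ with $|w^\uncl_i|\leq d_2$, $\overline{w}_i$ a concatenation of sink-segments of length less than $d_0$, and $|\overline{\overline{w}}_i|<d_0$. The prefix and the short tail contribute at most $d_2$ and $d_0-1$ simple stairs respectively; the sink-segments in $\overline{w}_i$ are absorbed into surrounding simple stairs (a single simple stair's sink-segment-tail $u_1\cdots u_k$ can itself be a concatenation of many sink-segments). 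Summing over the $\leq k_{j+1}-k_j+1$ segments (the extra one accounting for the initial boundary at $V_{k_j-1}$) gives at most $(1+k_{j+1}-k_j)(d_2+d_0-1)$ simple stairs. For (ii), a simple stair $ru_1\cdots u_k$ in the $d_0$-sinking portion has $|u_j|<d_0$, and $k\leq\hinc$ since the $u_j$ sink through the rhs $E$ of $r$ (with $\height(E)\leq 1+\hinc$) down to a nonterminal-rooted subterm of $E$. Hence each simple stair has length at most $1+(d_0-1)\hinc$. Multiplying gives pivot-side length at most $d_5\cdot(1+k_{j+1}-k_j)$, and the additive $(k_{j+1}-k_j)d_2$ contribution from the $\rho'_i\mu^\uncl_i$ pieces is absorbed into the same bound (adjusting the constant in the definition of $d_5$ as needed, using $d_5\geq d_2$).

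The main obstacle will be the bookkeeping in step (i): verifying that the greedy simple-stair decomposition ``glues across'' consecutive segment boundaries cleanly, so that the sink-segments of $\overline{w}_i$ truly contribute no extra simple stairs beyond those charged to the surrounding $w^\uncl_i$ and $\overline{\overline{w}}_i$. This is the same phenomenon already analyzed in the proof of Proposition~\ref{prop:dfour} for the single initial segment $V_{k_j-1}\to W_{k_j}$; the present proof iterates that argument across the $k_{j+1}-k_j$ consecutive pivot-path segments forming the crucial segment and should otherwise be routine.
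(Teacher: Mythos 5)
Your decomposition captures the stair part of the argument, but there is a genuine gap concerning the trailing piece $\mu^\usink_{k_{j+1}-1}$, and this is not a bookkeeping detail that can be "absorbed by adjusting $d_5$."

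The stair highlighted in~(\ref{eq:closepivotpath}) runs from $V_{k_j-1}$ only up to $W_{k_{j+1}-1}$; it therefore covers the pivot-path segments corresponding to $\rho'_{k_j}\mu^\uncl_{k_j}\mu^\usink_{k_j}\cdots\rho'_{k_{j+1}-2}\mu^\uncl_{k_{j+1}-2}\mu^\usink_{k_{j+1}-2}$, but it does \emph{not} cover the final block $\rho'_{k_{j+1}-1}\mu^\uncl_{k_{j+1}-1}\mu^\usink_{k_{j+1}-1}$. The path from $W_{k_{j+1}-1}$ onward revisits a subterm of $T_0$ or $U_0$ (that is precisely why $W_{k_{j+1}}$ is close), so it has a sink-segment prefix and is not a stair; your simple-stair decomposition gives no handle on $\length(\mu^\usink_{k_{j+1}-1})$. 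Moreover this last piece is not small: it is a $d_0$-sinking play that runs until both sides hit subterms of $T_0,U_0$, and its length can be of order $(d_0{-}1)\cdot\height(G'_k)$ where $G'_k$ is the top that has been accumulated by the stair; $\height(G'_k)$ can grow linearly in $k_{j+1}-k_j$, so the trailing piece alone can already be comparable to the entire bound you are trying to prove.

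What makes the proposition true is a trade-off that your plan does not notice: the deeper the stair climbs (making $\height(G'_k)$ large), the fewer sink-segments it has executed, so the stair part is correspondingly \emph{shorter} than the worst case. The paper's proof makes this precise by proving the two complementary estimates $\length(\rho'_{j+1}\cdots\mu^\usink_{j+k-1})\leq d_5\cdot k - (d_0{-}1)\cdot\height(G'_k)$ and $\length(\rho'_{j+k}\mu^\uncl_{j+k}\mu^\usink_{j+k})\leq d_5+(d_0{-}1)\cdot\height(G'_k)$, whose $\pm(d_0{-}1)\cdot\height(G'_k)$ terms cancel on summation. The last estimate in turn requires the observation that both $\overline{T}_{j+k}$ and $\overline{U}_{j+k}$ are reachable from the pivot $W_{j+k}=G'_k\sigma'$ in at most $d_2$ steps (a consequence of how bal-results are constructed), hence both have tops of height at most $\height(G'_k)+d_2\cdot\hinc$ over subterms of $T_0,U_0$. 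You need both of these ideas; your stair-counting alone establishes neither, and your final step (folding the extra $(k_{j+1}-k_j)\cdot d_2$ into an "adjusted" $d_5$) also does not prove the inequality with the specific $d_5$ of~(\ref{eq:dfive}).
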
	

\begin{proof}
We fix 
a crucial segment $\rho'_{k_j}\cdots\mu^\usink_{k_{j+1}-1}$.
We make a convenient notational change 
(using $j{+}1$ for the previous $k_j$, and $k$ for $k_{j+1}{-}k_j$)
and present 
this segment as 
\begin{center}
$\rho'_{j+1}\mu^\uncl_{j+1}\mu^\usink_{j+1}
\rho'_{j+2}\mu^\uncl_{j+2}\mu^\usink_{j+2}
\cdots
\rho'_{j+k}
\mu^\uncl_{j+k}\mu^\usink_{j+k}$ 
\end{center}
(for $i\in[1,k{-}1]$ we have $\mu^\usink_{j+i}=\mu^\dsink_{j+i}$ since
$\mu^\csink_{j+i}=\varepsilon$).
In a more detailed presentation, the segment is a prefix of
\begin{equation}\label{eq:detailedcrucial}
\textnormal{	
	$\toppair{T_{j+1}}{U_{j+1}}\bothgt{u_{j+1}}{u'_{j+1}}
\toppair{T'_{j+1}}{U'_{j+1}}\econc
\toppair{T''_{j+1}}{U''_{j+1}}\bothgt{v_{j+1,1}}{v'_{j+1,1}}
\toppair{\overline{T}_{j+1}}{\overline{U}_{j+1}}
\bothgt{v_{j+1,2}}{v'_{j+1,2}}
\toppair{T_{j+2}}{U_{j+2}}
\cdots
\toppair{T_{j+k}}{U_{j+k}}\bothgt{u_{j+k}}{u'_{j+k}}
\toppair{T'_{j+k}}{U'_{j+k}}\econc
\toppair{T''_{j+k}}{U''_{j+k}}\bothgt{v_{j+k,1}}{v'_{j+k,1}}
\toppair{\overline{T}_{j+k}}{\overline{U}_{j+k}}
\bothgt{v_{j+k,2}}{v'_{j+k,2}}\toppair{T_{j+k+1}}{U_{j+k+1}}$
}
\end{equation}
finishing somewhere in the part 
$\toppair{\overline{T}_{j+k}}{\overline{U}_{j+k}}
\bothgt{v_{j+k,2}}{v'_{j+k,2}}\toppair{T_{j+k+1}}{U_{j+k+1}}$,
as determined by $\mu^\csink_{j+k}$ (which might be empty or
nonempty).
We also consider the related pivot-path stair
\begin{equation}\label{eq:crucialsegstairdetail}
V\gt{w}W_{j+1}\gt{w^\uncl_{j+1}}\overline{W}_{j+1}\gt{w^\dsink_{j+1}}W_{j+2}\cdots
\gt{w^\uncl_{j+k-1}}\overline{W}_{j+k-1}\gt{w^\dsink_{j+k-1}}W_{j+k}
\end{equation}
where $V\gt{w}W_{j+1}$ is related to the part
$\rho'_j\mu^\uncl_j\mu^\dsink_j$
that precedes our crucial segment: the path $V\gt{w}W_{j+1}$
is the suffix 
$V_{j}\gt{w''_{j}}W_{j+1}$ of $W_{j}\gt{w^\uncl_{j}}\overline{W}_{j}
\gt{w^\dsink_{j}}W_{j+1}$ 
for the
respective last subterm
$V_{j}$ of $T_0$ or $U_0$.
We present the stair~(\ref{eq:crucialsegstairdetail}) 
similarly as 
the stair~(\ref{eq:pivotstair}) in the proof of
Lemma~\ref{lem:nsgforstairs}.
We get $V=A(x_1,\dots,x_m)\sigma'$ and 
\begin{center}
$A(x_1,\dots,x_m)\gt{w}G'_1\gt{w^\uncl_{j+1}}\overline{G}_1\gt{\overline{w}_1}
\overline{\overline{G}}_1
\gt{\overline{\overline{w}}_{1}}G'_{2}
\cdots
\gt{w^\uncl_{j+k-1}}\overline{G}_{k-1}\gt{\overline{w}_{k-1}}
\overline{\overline{G}}_{k-1}
\gt{\overline{\overline{w}}_{k-1}}G'_{k}$\,.
\end{center}
We will show that
\begin{equation}\label{eq:firstpartlength}
\length(\rho'_{j+1}\cdots\mu^\usink_{j+k-1})\leq 
d_5\cdot k- (d_0{-}1)\cdot\height(G'_k)
\end{equation}
and
\begin{equation}\label{eq:lastpartlength}
\length(\rho'_{j+k}
\mu^\uncl_{j+k}\mu^\usink_{j+k})\leq d_5+(d_0{-}1)\cdot\height(G'_k)\,,
\end{equation}
which yields $\length(\rho'_{j+1}\cdots\mu^\usink_{j+k})\leq
	d_5\cdot(1{+}k)$ and thus finishes the proof.

We show~(\ref{eq:firstpartlength}):
Similarly as~(\ref{eq:gprimegrow}), we derive 
$\height(G'_i)\leq
1+i\cdot(d_2{+}d_0{-}1)\cdot\hinc$, for all $i\in[1,k]$.
Since $|w^\uncl_{j+i}|\leq\length(\rho'_{j+i}\mu^\uncl_{j+i})\leq d_2$,
$|\overline{w}_i\overline{\overline{w}}_i|=\length(\mu^\dsink_{j+i})$,
$\overline{G}_{i}\gt{\overline{w}_i}\overline{\overline{G}}_{i}$ 
is a sequence of sink-segments of lengths less than $d_0$,
and $|\overline{\overline{w}}_i|<d_0$, we also derive
\begin{center}
$|\overline{w}_{i}|\leq (d_0{-}1)\cdot 
(\height(\overline{G}_{i})-\height(\overline{\overline{G}}_{i}))$,
and
\end{center}
\begin{center}
$\height(G'_{i+1})\leq\height(G'_i)+(d_2+d_0{-}1)\cdot\hinc
- (\height(\overline{G}_i)-\height(\overline{\overline{G}}_i))
$.
\end{center}
For $\textsc{Sum}=\sum^{k-1}_{i=1}
\big(\height(\overline{G}_i-\height(\overline{\overline{G}}_i))\big)$
we thus get
\begin{equation}\label{eq:boundsum}
\height(G'_k)\leq 1+k\cdot(d_2+d_0{-}1)\cdot\hinc-\textsc{Sum}, 
\textnormal{ and}
\end{equation}
\begin{equation}\label{eq:firstversion}
\length(\rho'_{j+1}\cdots\mu^\usink_{j+k-1})
\leq
(k{-}1)\cdot(d_2{+}d_0{-}1)+(d_0{-}1)\cdot\textsc{Sum}.
\end{equation}
Replacing  $\textsc{Sum}$ in~(\ref{eq:firstversion}) with its upper
bound $1+k\cdot(d_2+d_0{-}1)\cdot\hinc-\height(G'_k)$
(derived from~(\ref{eq:boundsum})), we get 
\begin{center}
$\length(\rho'_{j+1}\cdots\mu^\usink_{j+k-1})
\leq k\cdot\big(
d_2{+}d_0{-}1+(d_0{-}1)(d_2{+}d_0{-}1)\cdot\hinc\big) -
(d_0{-}1)\cdot\height(G'_{k})$.
\end{center}
This yields~(\ref{eq:firstpartlength}).

We show~(\ref{eq:lastpartlength}):
We recall that $\length(\rho'_{j+k}
\mu^\uncl_{j+k})\leq d_2$, and aim to bound $\mu^\usink_{j+k}$,
assuming $\mu^\usink_{j+k}\neq\varepsilon$.
In this case $\startofplay(\mu^\usink_{j+k})=
(\overline{T}_{j+k},\overline{U}_{j+k})$, and both paths
$\overline{T}_{j+k}\gt{v_{j+k,2}}$, $\overline{U}_{j+k}\gt{v_{j+k,2}}$
of the play $\mu^\usink_{j+k}\mu^{\csink}_{j+k}$ 
(recall~(\ref{eq:detailedcrucial}))
are $d_0$-sinking. 
In the worst case the play $\mu^\usink_{j+k}$
finishes when each of these two paths visits a subterm of $T_0$ or $U_0$
(in which case $\mu^\csink_{j+k}\neq\varepsilon$ follows).
Due to the construction of $\rho'_{j+k}\mu^\uncl_{j+k}$ we have that
both $\overline{T}_{j+k}$ and $\overline{U}_{j+k}$ are reachable from
the pivot $W_{j+k}=G'_k\sigma'\in\{T_{j+k},U_{j+k}\}$ in at most $d_2$
steps (in fact, one even in less than $d_0$ steps).

We recall that $\varin(G'_k)\subseteq\{x_1,\dots,x_m\}$ and
	that $x_q\sigma'$ is a subterm of $T_0$ or $U_0$, for each
	$q\in[1,m]$ (since $V=A(x_1,\dots,x_m)\sigma'$ is a subterm of
	$T_0$ or $U_0$). Thus if the respective paths
	$G'_k\sigma'\gt{\overline{v}}\overline{T}_{j+k}$
	and
	$G'_k\sigma'\gt{\overline{\overline{v}}}\overline{U}_{j+k}$,
	where $|\overline{v}|\leq d_2$ and $|\overline{\overline{v}}|\leq d_2$, 
 ``sink inside'' 
	the
 terms $x_q\sigma'$, they visit subterms of  $T_0$ or $U_0$ at such
 moments.
 The pair
 $(\overline{T}_{j+k},\overline{U}_{j+k})$
can be thus surely presented as 
$(\overline{E}\sigma_1,\overline{F}\sigma_2)$
where $\varin(\overline{E})$ and $\varin(\overline{F})$ are subsets
of $\{x_1,\dots,x_m\}$, the terms
$x_q\sigma_1$ and $x_q\sigma_2$ are subterms of
$T_0$ or $U_0$, for each
$q\in[1,m]$, and both
$\height(\overline{E})$ and $\height(\overline{F})$ are bounded by
$\height(G'_k)+d_2\cdot\hinc$.

Therefore $\mu^\usink_{j+k}$ cannot be longer than 
$(d_0{-}1)\cdot(\height(G'_k)+d_2\cdot\hinc)$. This yields 
$\length(\rho'_{j+k}
\mu^\uncl_{j+k}\mu^\usink_{j+k})\leq 
d_2+(d_0{-}1)\cdot(\height(G'_k)+d_2\cdot\hinc)$, which
implies~(\ref{eq:lastpartlength}). 
\end{proof}

\begin{table}
	\begin{center}
	\begin{tabular}{ l | c | l }
		$m$ & (\ref{eq:marity}) & maximum arity of nonterminals\\
	$\hinc$ &  (\ref{eq:hinc}) & height-increase in one step\\
		$\stepinc$ & (\ref{eq:stepinc}) & size-increase in one step\\
		$d_0$ & (\ref{eq:Mzero})
		& lengths of shortest $(A,i)$-sink words (plus $1$)\\
		$d_1$ & (\ref{eq:done})
		& number of bal-results related to one pivot \\
		$d_2$ & (\ref{eq:shortpivotgap})  & 
		length of ``unclear'' part after a pivot, followed by $d_0$-sinking\\
		$d_3$ &  (\ref{eq:dthree})
		& $d_3\cdot(\pressize(T_0,U_0))^2$ bounds the total
		length of close sink-parts\\	
		$n=m^{d_0}$ & (\ref{eq:definingn}) & number of variables in 
		``$(n,s,g)$-tops'' 
		$(E_i,F_i)$ of the bal-results \\ 
		& & 		related to pivots on a pivot-path stair\\

		$s$ &   (\ref{eq:definings})
		& $\pressize(E_1,F_1)$ 
		for the first such $(n,s,g)$-top\\
		$g$ &  (\ref{eq:definingg})
		& maximal
		growth-rate of $(n,s,g)$-tops\\		
		$d_4$ & (\ref{eq:dfour})
		& $d_4\cdot\pressize(T_0,U_0)$ bounds the number of
		crucial segments\\
		$d_5$ &  (\ref{eq:dfive})
		& $d_5\cdot(1{+}\calE_{\calB_{n,s,g}})$ bounds the length
		of each crucial segment\\		
		$c$ &   (\ref{eq:definingc})
		& $\max\big\{d_3\,, 2\, d_4\, d_5\big\}$,
		the number $c$ in~(\ref{eq:elbound}) in Theorem~\ref{th:computingelbound}   \\
	
	\end{tabular}
\end{center}
\caption{Small upper bounds determined by a given grammar
$\calG$}\label{tab:constants}
\end{table}

\section{Completing the Proof of
Theorem~\ref{th:computingelbound}}\label{sec:finalproof}

Below we repeat the statement of
Theorem~\ref{th:computingelbound},
and show a proof based on the previous results. 
In fact, it remains to prove that  $\calE=\calE_{\calB_{n,s,g}}$
is computable. The idea is that we stepwise increase an under-approximation
of $\calB_{n,s,g}$ and of the respective $\calE$; the pairs
$(E,F)$ (of the respective sizes) that are in this process so far deemed to be equivalent 
(i.e., assumed to satisfy $E\sim F$) are verified by using 
the assumption~(\ref{eq:elboundtwo}) for the current
(under-approximation of) $\calE$. If we find that
$\eqlevel(E,F)\leq c\cdot\big(\calE\cdot
\pressize(E,F)+(\pressize(E,F))^2\big)$
for some of such pairs $(E,F)$
(which can be checked 
Proposition~\ref{prop:stratdecid}),
 then we adjust (increase) the
under-approximation. This process must clearly terminate, and at the
end the claim~(\ref{eq:elboundtwo}) holds for all $T\not\sim U$, as
can be easily shown by contradicting the existence of a violating pair
 $T\not\sim U$ with the least eq-level.

\medskip
\noindent
	\textbf{Theorem 7.}	
\emph{
	For any grammar $\calG=(\calN,\act,\calR)$
there is a small number $c$ and a computable (not
necessarily small) number $\calE$ such that for all
$T,U\in\trees_\calN$ we have:
\begin{equation}\label{eq:elboundtwo}
	\textnormal{
	if $T\not\sim U$ then $\eqlevel(T,U)\leq c\cdot\big(\calE\cdot
\pressize(T,U)+(\pressize(T,U))^2\big)$.
}
\end{equation}
}

\begin{proof}
We fix a grammar $\calG=(\calN,\act,\calR)$, which determines the
small numbers in Table~\ref{tab:constants}, 
and two terms $T_0, U_0$ such that $T_0\not\sim U_0$.
Let $$\pi_\ell=\mu^\csink_0\rho'_1\mu^\uncl_1\mu^\usink_1\mu^\csink_1\rho'_2
\mu^\uncl_2\mu^\usink_2\mu^\csink_2
\cdots \rho'_\ell \mu^\uncl_\ell\mu^\usink_\ell\mu^\csink_\ell$$
be a respective balanced modified play
for which we
use the above developed notions and
notation;
we recall that
$\length(\pi_\ell)=\eqlevel(T_0,U_0)$.
Highlighting the crucial segments, we write $\pi_\ell$ as
$$\mu^\csink_0\big[\rho'_{k_1}\cdots\mu^\usink_{k_2-1}\big]\mu^\csink_{k_2-1} 
\big[\rho'_{k_2}\cdots\mu^\usink_{k_3-1}\big]\mu^\csink_{k_3-1}
\cdots\cdots
\big[\rho'_{k_{p}}\cdots\mu^\usink_{\ell}\big]\mu^\csink_{\ell}\,.
$$
We have $p=0$ (and $\ell=0$) if $\pi_\ell=\mu^\csink_0$; otherwise
$1=k_1<k_2<k_3\cdots<k_p<k_{p+1}=\ell+1$.
The close sink-segments $\mu^\csink_{k_j-1}$, for $j\in[1,p{+}1]$, 
might be empty or
nonempty, but all close sink-segments inside the crucial segments are
empty. 
The total length of the close sink-segments 
is bounded by
$d_3\cdot(\pressize(T_0,U_0))^2$ (by Proposition~\ref{prop:dthree}),
the number $p$ of the crucial segments is bounded by $d_4\cdot \pressize(T_0,U_0)$ 
(by Proposition~\ref{prop:dfour}),
and the length of each crucial segment is bounded by 
$d_5\cdot (1+\calE_{\calB_{n,s,g}})$
(by Corollary~\ref{cor:nsgforstairs} and
Proposition~\ref{prop:nsgforstairstwo}).

Hence $\length(\pi_\ell)$ (and thus $\eqlevel(T_0,U_0)$) is bounded by
\begin{center}
$d_3\cdot(\pressize(T_0,U_0))^2+d_4\cdot\pressize(T_0,U_0)\cdot
d_5\cdot (1+\calE_{\calB_{n,s,g}})$.
\end{center}
Putting 
\begin{equation}\label{eq:definingc}
c=\max\big\{\,d_3\,,\, 2\cdot d_4\cdot d_5\,\big\},
\end{equation}
and recalling that 
$\calE_{\calB}\geq 1$ for any $(n,s,g)$-candidate $\calB$,
we get 
\begin{center}
	$\eqlevel(T_0,U_0)\leq c\cdot\big(\calE_{\calB_{n,s,g}}\cdot
\pressize(T_0,U_0)+(\pressize(T_0,U_0))^2\big)$.
\end{center}
It remains to show that $\calE_{\calB_{n,s,g}}$ is computable.
We first recall that $\calE_{\calB_{n,s,g}}$ in the bound 
$d_5\cdot (1+\calE_{\calB_{n,s,g}})$
on the length of each crucial segment can be refined, as stated in
Corollary~\ref{cor:nsgforstairs}.
For all terms $T,U$ we thus get
the following implication:
\begin{equation}\label{eq:finermainclaim}
\textnormal{
if $T\not\sim U$, then $\eqlevel(T,U)\leq c\cdot\big(\calE_{\calB}\cdot
\pressize(T,U)+(\pressize(T,U))^2\big)$
}
\end{equation}
for any $(n,s,g)$-candidate $\calB$ that is full below
$\eqlevel(T,U)$.  (In this case $\calB$ is surely full below
$1+\eqlevel(E\sigma,F\sigma)$ for the first, and each further, bal-result
$(E\sigma,F\sigma)$ 
in any balanced modified play from $(T,U)$, if there is any
balancing step there at all.)

For $k\in\Nat$ we define the (reflexive and symmetric) relation $\speceq_k$ 
on $\trees_\calN$
as follows:
\begin{center}
	$T\speceq_k U$ $\Leftrightarrow_{df}$
$\eqlevel(T,U)>c\cdot\big(k\cdot
\pressize(T,U)+(\pressize(T,U))^2\big)$;
\end{center}
hence $\sim\mathop{\subseteq}\speceq_k$ for all $k\in\Nat$.
We say that an $(n,s,g)$-candidate $\calB$ 
is \emph{$k$-sound} (for $k\in\Nat$) 
if 
	$(\pairsvs{n}{s}\smallsetminus\calB)\mathop{\subseteq}
	\mathop{\speceq_k}$
	and, moreover, in the case $n>0$ the $(n{-}1,s',g)$-candidate	
	$\calB'$ is $k$-sound (we use the
	notation~(\ref{eq:nextsize})).
	An \emph{$(n,s,g)$-candidate} $\calB$ is \emph{sound} if it is
$\calE_\calB$-sound.
We note that the full candidate $\calB_{n,s,g}$ is sound (since 
all relevant
pairs outside $\calB_{n,s,g}$ are in $\sim$, and
thus in $\speceq_k$ for all $k$).

There is an obvious algorithm that constructs a sound
$(n,s,g)$-candidate $\calB$, for the above defined small $n,s,g$, and
$c$. (Just a systematic brute-force search would do.)

We will now observe that for each sound $(n,s,g)$-candidate $\calB$ we 
have $\speceq_{\calE_\calB}\mathop{=}\sim$ (on the set $\trees_\calN$),
and thus
$\calB=\calB_{n,s,g}$; by this the proof will be finished. 
For the sake of contradiction we suppose a sound $(n,s,g)$-candidate $\calB$
and some
$(T,U)\mathop{\in}\speceq_{\calE_\calB}\cap\not\sim$ where
$\eqlevel(T,U)$ is the least possible.
Then $\calB$ is full below
$\eqlevel(T,U)$ (for any $(T',U')$ with
$\eqlevel(T',U')<\eqlevel(T,U)$ we have $T'\not\speceq_{\calE_\calB}U'$,
hence all relevant $(T',U')$ 
with $\eqlevel(T',U')<\eqlevel(T,U)$
must be in $\calB$ since $\calB$ is
sound).
But then~(\ref{eq:finermainclaim}), applied to our $T,U,\calB$, 
contradicts the assumption $T\speceq_{\calE_\calB}U$.
\end{proof}

\subparagraph*{Acknowledgements.}
The author acknowledges the support of the Grant Agency of the Czech
Rep., GA\v{C}R 18-11193S, and thanks 
Sylvain Schmitz for a detailed discussion 
and his comments helping to improve the form of the paper. 
Also the useful comments of anonymous reviewers are gratefully acknowledged.

\bibliographystyle{elsarticle-num}
\bibliography{pj}

\end{document}